\documentclass[runningheads]{llncs}
\usepackage[T1]{fontenc}
\usepackage{graphicx}
\usepackage{subcaption}

\usepackage{hyperref}
\usepackage{color}

\usepackage{standalone}

\usepackage{amsfonts}
\usepackage{thmtools}
\usepackage{mathtools}
\usepackage{cleveref}
\usepackage{arydshln}

\usepackage{cmll}
\usepackage{virginialake}
\odbackgroundtrue
\odframefalse
\odbackgroundfirstfalse
\odframefirstfalse
\vlnostructuressyntax

\usepackage{proofzilla}
\usepackage{xspace}
\usepackage{adjustbox}
\usepackage{twoopt}
\usepackage{pifont}

\usepackage[bbgreekl]{mathbbol} % for \bbmu and \bbnu

\def\set#1{\{#1\}}
\def\Set#1{\left\{#1\right\}}

\def\tuple#1{\langle#1\rangle}
\usepackage{scalerel}
\def\Tuple#1{{\left\langle #1 \right\rangle}}
\def\intset#1#2{\set{#1,\ldots,#2}}
\def\sizeof#1{|#1|}

\def\domain{\mathsf{dom}}

\renewcommand{\emptyset}{\varnothing}

\def\picalc{$\pi$-calculus\xspace}

\def\picalc{$\pi$-calculus\xspace}

\def\defn#1{{\itshape\bfseries\boldmath #1}}

 \def\rclr#1{{\color{red}#1}}
\def\bclr#1{{\color{blue}#1}}
\def\sclr#1{{\color{magenta}#1}}

\def\resp#1{(resp.~{#1})\xspace}

\newcommand{\quand}{\quad\mbox{and}\quad}

\usepackage{marginnote}

\spnewtheorem{nota}{Notation}[section]{\bfseries}{\rmfamily}

\def\conet{conflict net\xspace}
\def\conets{conflict nets\xspace}
\def\slnet{slice net\xspace}
\def\slnets{slice nets\xspace}

\newcommand{\varSet}{\mathcal V}
\def\varset{\varSet}

\def\cneg#1{{#1}^\lbot}

\def\free{\mathsf{free}}
\def\freeof#1{\free(#1)}

\def\lsend#1#2{\langle#1\oc#2\rangle}
\def\lrecv#1#2{(#1\wn#2)}

\def\lNu#1#2{\lnewsymb{#1}.#2}

\def\lYa#1#2{\lyasymb{#1}.#2}
\def\lFa#1#2{\forall{#1}.#2}
\def\lEx#1#2{\exists{#1}.#2}
\newcommand{\lQu}[3][]{\lqusymb_{#1}{#2}.#3}

\def\lNa#1#2{\nabla#1.#2}
\def\lnNa#1#2{\cneg\nabla#1.#2}
\def\lNai#1#2#3{\nabla_{#1}#2.#3}

\newcommand{\cnof}[1]{\left\{\!\!\left\{#1\right\}\!\!\right\}_{\mathsf{co}}\!}
\newcommand{\snof}[1]{\left\{\!\!\left\{#1\right\}\!\!\right\}_{\mathsf{sl}}}

\def\fsubst#1#2{[#1/#2 ]}

\def\fsubsts#1{[ #1 ]}
\def\fsubminus#1{\setminus\set{#1}}

\newcommand{\proves}[1][]{\mathord{\vdash_{#1}\,}}
\def\dD{\mathcal D}

\def\BV{\mathsf{BV}}

\def\MALL{\mathsf{MALL}}

\mathchardef\mhyphen="2D

\def\wF{^{1}}

\def\PIL{\mathsf{PiL}}

\newcommand{\sS}[1][\mathcal S]{\sclr{ #1 }}
\newcommand{\emptystore}{\sclr{ \emptyset }}
\newcommand{\sdash}[1][\sS]{\sS[{#1}] \vdash}

\def\axrule{\mathsf {ax}}
\def\leafr{\mathsf{leaf}}

\def\mixr{\mathsf{mix}}

\newcommand{\rrule}[1][]{{\mathsf{r}^{#1}}}
\newcommand{\brrule}[1][]{\mathsf{\beta}_{#1}}

\newcommand{\urrule}[1][]{\mathsf{\alpha}_{#1}}

\def\nurule{\lnewsymb}

\def\nuurule{\lnewsymb^\lunit}
\def\yurule{\lyasymb^\lunit}

\def\loadr#1{#1_\mathsf{load}}
\def\popr#1{#1_\mathsf{pop}}
\def\unitr#1{#1_{\lunit}}

\def\naur{\unitr\nabla}
\def\naloadr{\loadr\nabla}
\def\napopr{\popr\nabla}

\def\nnapopr{\popr{\cneg\nabla}}

\def\nqpopr{\napopr}

\def\nuur{\unitr\lnewsymb}
\def\nuloadr{\loadr\lnewsymb}
\def\nupopr{\popr\lnewsymb}
\def\yaur{\unitr\lyasymb}
\def\yaloadr{\loadr\lyasymb}
\def\yapopr{\popr\lyasymb}

\def\isnusymb{\lnewsymb}
\def\isyasymb{\lyasymb}
\def\isnasymb{\nabla}
\def\isdnasymb{\cneg\nabla}
\def\isnu#1{\sclr{#1^{\isnusymb}}}
\def\isya#1{\sclr{#1^{\isyasymb}}}

\def\isqu#1{\sclr{#1^{\lqusymb}}}
\newcommand{\isna}[2][]{\sclr{#2^{\isnasymb_{#1}}}}
\newcommand{\isnna}[2][]{\sclr{#2^{\isdnasymb_{#1}}}}
\def\conc{\frown}
\def\conf{\#}
\def\pzlink#1#2#3#4#5#6{
	\tikz[overlay,remember picture,draw,fill,opacity=1]{
		\foreach \ppp in #6{
			\draw[thick,color=#5] (\ppp) -- ++(0,#3pt) -| (#1);
		}
		\draw[thick,color=#5] (#1) -- ++(0,#3pt) -| (#2) node[pos=.25,fill=white,inner sep=1pt]{\scriptsize$#4$};
	}
}
\def\pzlinks#1{
	\foreach \aaa/\bbb/\ccc/\ddd/\eee/\fff in {#1}{
		\pzlink{\aaa}{\bbb}{\ccc}{\ddd}{\eee}{\fff}
	}
}

\defedgetype{flow}{draw,thick,color = magenta}{}

\pzvertex{conf}{\conf}{}
\pzvertex{conc}{\conc}{}
\pzvertex{Nu}{\lnewsymb}{}
\pzvertex{Na}{\nabla}{}
\pzvertex{nNa}{\cneg\nabla}{}
\pzvertex{Ya}{\lyasymb}{}
\pzvertex{Qu}{\lqusymb}{}

\newcommand{\link}[2][black]{{\color{#1}{#2}}}

\newcommand{\la}[1][red]{{\link[#1]a}}
\newcommand{\lb}[1][blue]{{\link[#1]b}}
\newcommand{\laprimo}[1][red]{{\link[#1]{a'}}}
\newcommand{\laprimouno}[1][red]{{\link[#1]{a'_1}}}
\newcommand{\laprimodue}[1][red]{{\link[#1]{a'_2}}}
\newcommand{\lbprimo}[1][blue]{{\link[#1]{b'}}}

\newcommand{\lc}[1][violet]{{\link[#1]c}}
\newcommand{\lcdue}[1][violet]{{\link[#1]{c'}}}
\newcommand{\lctre}[1][violet]{{\link[#1]{c''}}}
\newcommand{\lcnu}[1][skewcolor]{{\link[#1]{cn}}}
\newcommand{\lcya}[1][skewcolor]{{\link[#1]{cy}}}
\newcommand{\lnom}[1][skewcolor]{{\link[#1]n}}
\newcommand{\lcd}[1][black]{\link[#1]{%\epsilon(c)}}
c}}
\newcommand{\labb}[1][black]{\link[#1]{%\delta \duasum \sigma (
		ab
		}}
\newcommand{\laa}[1][black]{\link[#1]{%\delta(a')}}
a'}}
\newcommand{\laaa}[1][black]{\link[#1]{%\delta - x(a')}}
a'}}
\newcommand{\lbb}[1][black]{\link[#1]{%\sigma(b')}}
b'}}
\newcommand{\lbbb}[1][black]{\link[#1]{%\sigma-x(b')}}
b'}}

\newcommand{\lccd}[1][black]{\link[#1]{%\epsilon(c')}}
c'}}
\newcommand{\lcccd}[1][black]{\link[#1]{%\epsilon-x(c')}}
c'}}
\newcommand{\laabb}[1][black]{\link[#1]{%\delta \duasum \sigma(a'b)}}
a'b}}
\newcommand{\laaabb}[1][black]{\link[#1]{%\alpha(a'b)}}
a'b}}
\newcommand{\lac}[1][black]{\link[#1]{%\delta \duasum \gamma(ac)}}
ac}}
\newcommand{\labc}[1][black]{\link[#1]{%\beta(abc)}}
abc}}

\newcommand{\coalto}[1][]{\xrightarrow{#1}}
\newcommand{\coaltos}[1][]{\xrightarrow{#1}^*}
\newcommand{\lcoalto}[2][]{\overset{#2}{\rightarrow}}

\def\cotree{coco-tree\xspace}
\def\cotrees{coco-trees\xspace}
\def\linking{\Lambda}
\def\linktree{\tau(\Lambda)}

\def\treeof#1{\tau(#1)}
\def\witnessof#1{\delta^{#1}}
\def\ltree{\tau}

\def\domof#1{\mathsf{dom}(#1)}

\newcommand{\dualizerof}[1][]{\delta_{#1}}

\def\peq{\approx}
\def\speq{\simeq}

\def\pweq{\peq_{\mathsf w}}

\def\join{\vee}

\def\duasum{+}

\def\newcohe#1#2{\mathsf{coh}(#1,#2)}

\def\pn{\mathcal P}

\def\alphaeq{\equiv_{\alpha}}

\defedgetype{dG}{draw,thick,densely dotted}{}

\def\canon#1{\left\lfloor #1 \right\rfloor}

\begin{document}
\title{Proof Nets for PiL (Full Version)}
%
%\titlerunning{Abbreviated paper title}
% If the paper title is too long for the running head, you can set
% an abbreviated paper title here
%
\author{
	Matteo Acclavio\inst{1,2} \orcidID{0000-0002-0425-2825}
	\and
	\\
	Giulia Manara\inst{1}\thanks{Funded by the European Union (ERC, CHORDS, 101124225)} 		\orcidID{0009-0003-9583-1017}
}
% %
\authorrunning{M. Acclavio and G. Manara}
% % First names are abbreviated in the running head.
% % If there are more than two authors, 'et al.' is used.
% %
\institute{
	FORM, University of Southern Denmark, Denmark 
\and
	University of Sussex, UK
% \email{lncs@springer.com}
% \\
% \url{http://www.springer.com/gp/computer-science/lncs} \and
% ABC Institute, Rupert-Karls-University Heidelberg, Heidelberg, Germany\\
% \email{\{abc,lncs\}@uni-heidelberg.de}
}
\maketitle             
 % typeset the header of the contribution
%
%%%%%%%%%%%%%%%%%%%%%%%%%%%%%%%%%%%%%%%%%%%%%%%%%%%%%%%%%%%%%%%%%%%
%%%%%%%%%%%%%%%%%%%%%%%%%%%%%%%%%%%%%%%%%%%%%%%%%%%%%%%%%%%%%%%%%%%
\begin{abstract}
	We introduce proof nets for PiL, an extension of first-order multiplicative additive linear logic with new operators allowing a shallow encoding of processes in the $\pi$-calculus as formulas.
	
	We provide correctness criterion, sequentialization procedure, and a proof translation algorithm.
	We show that proof nets provide a canonical representation of sequent calculus derivations modulo rule permutations.
\end{abstract}
%%%%%%%%%%%%%%%%%%%%%%%%%%%%%%%%%%%%%%%%%%%%%%%%%%%%%%%%%%%%%%%%%%%
%%%%%%%%%%%%%%%%%%%%%%%%%%%%%%%%%%%%%%%%%%%%%%%%%%%%%%%%%%%%%%%%%%%

%%%%%%%%%%%%%%%%%%%%%%%%%%%%%%%%%%%%%%%%%%%%%%%%%%%%%%%%%%%%%%%%%%%
%%%%%%%%%%%%%%%%%%%%%%%%%%%%%%%%%%%%%%%%%%%%%%%%%%%%%%%%%%%%%%%%%%%
%%%%%%%%%%%%%%%%%%%%%%%%%%%%%%%%%%%%%%%%%%%%%%%%%%%%%%%%%%%%%%%%%%%
\section{Introduction}\label{sec:intro}
%%%%%%%%%%%%%%%%%%%%%%%%%%%%%%%%%%%%%%%%%%%%%%%%%%%%%%%%%%%%%%%%%%%
%%%%%%%%%%%%%%%%%%%%%%%%%%%%%%%%%%%%%%%%%%%%%%%%%%%%%%%%%%%%%%%%%%%
%%%%%%%%%%%%%%%%%%%%%%%%%%%%%%%%%%%%%%%%%%%%%%%%%%%%%%%%%%%%%%%%%%%

The system $\PIL$ is an extension of (a particular case of)
\emph{first-order multiplicative additive linear logic} ($\MALL\wF$) with two new operators:
a non-commutative non-associative self-dual connective ($\lprec$), and a nominal quantifier\footnote{
	In the sense of Pitt's and Gabbay's nominal logic \cite{pitts:nominal,gabbay:pitts:nominal}, which is a logic for reasoning about names and binding, where the nominal quantifier $\lnewsymb$ allows to express freshness conditions on variables.
}
($\lnewsymb$) together with its dual ($\lyasymb$).
It has been introduced in \cite{acc:man:mon:FaP} to prove the completeness of choreographic programming \cite{montesi:book} with respect to (race-free) deadlock-free processes of the \emph{$\pi$-calculus}\cite{M80}.
Thanks to the $\lprec$ modeling sequential composition, and the nominal quantifier $\lnewsymb$ modelling name binding, $\PIL$ provides a logical framework allowing for a shallow encoding of processes as formulas, and for a proof-theoretic account of execution of processes as proof-search.
In particular, the completeness result relies on the correspondence between cut-free derivations in $\PIL$ and choreographies. 
This correspondence relies on the fact that proof equivalence in $\PIL$ induced by independent rule permutations captures the semantics of execution trees of processes modulo the interleaving of independent actions, and the natural equivalence between choreographies modulo the mechanism of out-of-order execution.
For this reason, the study of proof equivalence in $\PIL$ is a way to study the semantics of processes in the $\pi$-calculus, and of choreographies.
In this paper, we provide formalisms for proofs in $\PIL$ providing canonical representation of sequent calculus derivations modulo different classes of independent rule permutations to provide the basis for developing new tools to study the semantics of process calculi.

We build on the rich literature on \emph{proof nets} for linear logic (see, e.g., \cite{gir:ll,dan:reg:89,girard:96:PN,ret:handsome,bellin:subnets,hug:unification,hug:van:slice,hei:hug:conflict,hei:hug:str:ALL1}), a graph-based formalism for proofs that has been widely studied as a solution to the problem of proof identity in proof theory.
Proof nets achieve canonicity by replacing sequent rules with graphical constructors that minimize dependencies between inference steps, thereby collapsing entire classes of sequent derivations into a single syntactic object. 
However, the syntax of proof nets is strictly more expressive than that of the sequent calculus. It admits the construction of terms, called \emph{proof structures}, that do not correspond to valid derivations. 
For this reason, every proof net formalism must be equipped with a (polynomial-time) \emph{correctness criterion} that identifies the proof structures representing valid proofs, together with a polynomial-time \emph{sequentialization procedure} translating proof nets back into sequent derivations.

Proof nets for the multiplicative fragment of linear logic can be seen as a decoration of the parse tree of the conclusion of a proof with \emph{axiom links} connecting leaves which are dual atoms.
\begin{equation}
	\begin{array}{ccc}
		\vlderivation{
			\vliin{\ltens}{}{
				\vdash (A \lpar B)\ltens C, \cneg C , (\cneg A \ltens \cneg B)
			}{
				\vliin{\ltens}{}{
					\vdash A \lpar B, \cneg A \ltens \cneg B
				}{
					\vlin{\axrule}{}{\vdash \vA1, \vnA1 \pzlinks{A1/nA1/12/\la/red/}}{\vlhy{}}
				}{
					\vlin{\axrule}{}{\vdash \vB1, \vnB1 \pzlinks{B1/nB1/12/\lc/violet/}}{\vlhy{}}
				}
			}{
				\vlin{\axrule}{}{\vdash \vC1, \vnC1 \pzlinks{C1/nC1/12/\lc/violet/}}{\vlhy{}}
			}
		}
	&\qquad &
		\begin{array}{c}
			\vdash (\vA1 \lpar \vB1)\ltens \vC1, \vnC1 , (\vnA1 \ltens \vnB1)
			\pzlinks{A1/nA1/23/\la/red/}
			\pzlinks{B1/nB1/18/\lb/blue/}
			\pzlinks{C1/nC1/12/\lc/violet/}
		\end{array}
	\end{array}
\end{equation}
Correctness can be characterized either by checking acyclicity and connectivity conditions on associated graphs \cite{dan:reg:89,retore:phd}, or by verifying that the structure contracts to a single vertex under a system of rewriting rules \cite{danos:phd}.

Additives complicate this picture because the inference rule for the additive conjunction $\lwith$ requires context duplication.
This requires proof nets to track the `additive branching' structure of the proof which is of a different nature with respect to the `multiplicative branching' structure induced by the $\ltens$ rule.
\begin{equation}\label{eq:introMALL}
	\adjustbox{max width=.92\textwidth}{$
	\begin{array}{ccc}
		\vlderivation{
			\vliin{\ltens}{}{
				\vdash (A \lwith A), (\cneg A \otimes \cneg B), B
			}{
				\vliin{\lwith}{}{
					\vdash (A \lwith A), \cneg A 
				}{
					\vlin{\axrule}{}{\vdash \vA1, \vnA1 \pzlinks{A1/nA1/12/\la/red/}}{\vlhy{}}
				}{
					\vlin{\axrule}{}{\vdash \vA2, \vnA2 \pzlinks{A2/nA2/12/\lb/blue/}}{\vlhy{}}
				}
			}{
				\vlin{\axrule}{}{\vdash \vB1, \vnB1 \pzlinks{B1/nB1/12/\lc/violet/}}{\vlhy{}}
			}
		}
	%%%%
	&\quad\overset ?\sim\quad&
	%%%%
		\vlderivation{
			\vliin{\lwith}{}{
				\vdash (A \lwith A), (\cneg A \otimes \cneg B), B
			}{
				\vliin{\ltens}{}{
					\vdash A , (\cneg A \otimes \cneg B), B
				}{
					\vlin{\axrule}{}{\vdash \vA1, \vnA1 \pzlinks{A1/nA1/12/\la/red/}}{\vlhy{}}
				}{
					\vlin{\axrule}{}{\vdash \vB1, \vnB1 \pzlinks{B1/nB1/12/\lcdue/violet/}}{\vlhy{}}
				}
			}{
				\vliin{\ltens}{}{
					\vdash A , (\cneg A \otimes \cneg B), B
				}{
					\vlin{\axrule}{}{\vdash \vA2, \vnA2 \pzlinks{A2/nA2/12/\lb/blue/}}{\vlhy{}}
				}{
					\vlin{\axrule}{}{\vdash \vB2, \vnB2 \pzlinks{B2/nB2/12/\lctre/violet/}}{\vlhy{}}
				}
			}
		}
	\end{array}
	$}
\end{equation}

For this reason, correctness criteria for proof nets for $\MALL$ require a richer structure.
In Girard's \emph{monomial nets} \cite{gir:meanII}, this is achieved by annotating axiom links with Boolean variables.
In Hughes and Heijltjes' \emph{conflict nets} \cite{hughes:conflict,hei:hug:conflict}, axiom links are organized into trees, where they are related either by a `multiplicative' {co}ncordance relation or by an `additive' {co}nflict relation.
%%%%%%%%%%%%%%%%%%%%%%%%%%%%%%%%%%%%%%%%%%%%%%%%%%%%%%%%%%%%%%%%%%%
\begin{equation}\label{eq:introMALLpart2}
	\begin{array}{c@{\qquad}c@{\qquad}c}
	\\[-10pt]
		\begin{array}{c@{\quad}c@{\quad}c@{\quad}c}
			\vpz5{\la}
			&&
			\vpz7{\lb}
		\\[-5pt]
			&{\vpz2\conf}&
			&\vpz1{\lc}
		\\[-5pt]
		&&\vpz0\conc
		\Gedges{
			pz5/pz2,pz7/pz2,
			pz1/pz0, pz0/pz2%
		}
		\end{array}
	%%%%
	&\neq&
	%%%%
		\begin{array}{ccc}
			\vpz1{\la}
			\quad\qquad
			\vpz5{\lcdue}
			&&
			\vpz2{\lb}
			\quad\qquad
			\vpz7{\lctre}
			\\[-5pt]
			\vpz3{\conc}
			&&
			\vpz4{\conc}
			\\[-5pt]
			&\vpz0{\conf}
		\end{array}
		\Gedges{
			pz0/pz3,pz0/pz4,
			pz3/pz5,pz3/pz1,
			pz4/pz7,pz4/pz2%
		}
	\end{array}
\end{equation}
%%%%%%%%%%%%%%%%%%%%%%%%%%%%%%%%%%%%%%%%%%%%%%%%%%%%%%%%%%%%%%%%%%%

Quantifiers introduce additional complexity in the definition of proof nets, as they require tracking the witnesses of quantification and the potential dependencies between them. 
For example, the following two derivations may be considered non-equivalent, since the witnesses of the existential quantifiers are different.

%%%%%%%%%%%%%%%%%%%%%%%%%%%%%%%%%%%%%%%%%%%%%%%%%%%%%%%%%%%%%%%%%%%
\begin{equation}\label{eq:intro2}
	\begin{array}{c@{\qquad}c@{\qquad}c}
		\vlupsmash{\vlderivation{
			\vliq{2\times\exists}{}{
				\vdash \lEx x P(x), \lEx y \cneg{P(y)}
			}{
				\vlin{\axrule}{}{\vdash \vpz1{P(z)}, \cneg{\vpz2{P(z)}}
				}{\vlhy{}}
			}
		}}
		&\overset{?}{\sim}&
		\vlupsmash{\vlderivation{
			\vliq{2\times\exists}{}{
				\vdash \lEx x P(x), \lEx y \cneg{P(y)}
			}{
				\vlin{\axrule}{}{\vdash \vpz3{P(w)}, \cneg{\vpz4{P(w)}}
				}{\vlhy{}}
			}
		}}
	\end{array}
\end{equation}
%%%%%%%%%%%%%%%%%%%%%%%%%%%%%%%%%%%%%%%%%%%%%%%%%%%%%%%%%%%%%%%%%%%

The information required to distinguish the two derivations in \Cref{eq:intro2} is captured by associating each link with 
a substitution, called a \emph{dualizer}, mapping the variables occurring in the link to the corresponding existential witnesses.
The name \emph{dualizer} reflects the fact that a link may connect atoms that are not yet dual, and that it is the dualizer that 
enforces duality by unifying them.
The derivations in \Cref{eq:intro2} yield the same links but different dualizers:
In the one on the left \resp{right}, the witnesses of the existential quantifiers is the variable $z$ \resp{$w$}, and the duality between $P(x)$ and $\cneg{P(y)}$ is enforced by the dualizer $\dualizerof[\la]=\fsubsts{z/x,z/y}$ \resp{$\dualizerof[\lb]=\fsubsts{w/x,w/y}$}.
%%%%%%%%%%%%%%%%%%%%%%%%%%%%%%%%%%%%%%%%%%%%%%%%%%%%%%%%%%%%%%%%%%%
\begin{equation}\label{eq:intro3}
	\begin{array}{c@{\qquad}c}
		\vdash \lEx x \vpz1{P(x)}, \lEx y \cneg{\vpz2{P(y)}}
		\pzlinks{pz1/pz2/12/\la/red/}
	&
		\vdash \lEx x \vpz3{P(x)}, \lEx y \cneg{\vpz4{P(y)}}
		\pzlinks{pz3/pz4/12/\lb/blue/}
	\end{array}
\end{equation} 
%%%%%%%%%%%%%%%%%%%%%%%%%%%%%%%%%%%%%%%%%%%%%%%%%%%%%%%%%%%%%%%%%%%

%%%%%%%%%%%%%%%%%%%%%%%%%%%%%%%%%%%%%%%%%%%%%%%%%%%%%%%%%%%%%%%%%%%
%%%%%%%%%%%%%%%%%%%%%%%%%%%%%%%%%%%%%%%%%%%%%%%%%%%%%%%%%%%%%%%%%%%
\subsection{Challenges}\label{subsec:challenges}
%%%%%%%%%%%%%%%%%%%%%%%%%%%%%%%%%%%%%%%%%%%%%%%%%%%%%%%%%%%%%%%%%%%
%%%%%%%%%%%%%%%%%%%%%%%%%%%%%%%%%%%%%%%%%%%%%%%%%%%%%%%%%%%%%%%%%%%

We adopt \emph{conflict nets} as the main underlying framework for proof nets for $\PIL$, as they provide a good trade-off between canonicity, computational complexity of correctness criterion, sequentialization procedure, and proof translation algorithm%
\footnote{
    We remind the reader that cut-elimination for conflict nets is not efficient.
    However, in this paper we are interested in proof nets as canonical representatives of \emph{cut-free} proofs, since in \cite{acc:man:mon:FaP} they are used to represent both computation trees of processes and choreographies.
	Therefore, we are not interested in study the semantics of cut-elimination in this paper, which present additional challenges we shall address in future work.
}.
Note that the only proof nets for $\MALL\wF$ available in the literature are the ones based on Girard's proof nets, which present some issues in terms of canonicity and computational efficiency of the proof translation \cite{hug:van:slice,hei:hug:conflict}.
At the same time, works such as \cite{hughes:firstorder} and \cite{hei:hug:str:ALL1} provide techniques for handling quantifiers in the multiplicative fragment and additive fragment of linear logic based on contractability correctness criteria similar to the ones for conflict nets.
However, so far these techniques have never been studied in presence of both multiplicative and additive connectives.

Another challenge in defining proof nets for $\PIL$ is the presence of the non-commutative and non-associative connective $\lprec$, it has strong connections with the connective $\lseq$ of Retoré's \emph{pomset logic} \cite{retore:phd,ret:newPomset}, for which no contractability correctness criterion is known, and Guglielmi's $\BV$ logic, for which no correctness criterion for proof nets is known.
However, the connective $\lprec$ is weaker than the $\lseq$ since the latter prevents the existence of a cut-free sequent calculus \cite{tiu:SIS-II,tito:lutz:csl22,tito:str:SIS-III}. At the same time, $\lprec$ is fundamentally different from the non-commutative connectives studied in works such as \cite{yetter:90,abr:rue:noncomI}, which are associative and not self-dual.

Finally, our nominal quantifiers require a fine bookkeeping of the freshness conditions on variables to guarantees that nominal quantified variables can be shared by at most a unique pair of dual nominal quantifiers.
This is a novel feature (even with respect to other logics with nominal quantifiers), requires novel techniques.

%%%%%%%%%%%%%%%%%%%%%%%%%%%%%%%%%%%%%%%%%%%%%%%%%%%%%%%%%%%%%%%%%%%
%%%%%%%%%%%%%%%%%%%%%%%%%%%%%%%%%%%%%%%%%%%%%%%%%%%%%%%%%%%%%%%%%%%
\subsection{Contributions}
%%%%%%%%%%%%%%%%%%%%%%%%%%%%%%%%%%%%%%%%%%%%%%%%%%%%%%%%%%%%%%%%%%%
%%%%%%%%%%%%%%%%%%%%%%%%%%%%%%%%%%%%%%%%%%%%%%%%%%%%%%%%%%%%%%%%%%%
We define \textbf{\conets} for $\PIL$, and we provide a (polynomial-time) correctness criterion, a proof translation algorithm, and a sequentialization procedure.
Then, we provide a rewriting to reduce \conets to canonical form, called \textbf{\slnets} while preserving the correctness of the structure.
For both proof nets, we prove canonicity results, showing that they provide a canonical representation of sequent calculus derivations modulo distinct classes of independent rule permutations.

%%%%%%%%%%%%%%%%%%%%%%%%%%%%%%%%%%%%%%%%%%%%%%%%%%%%%%%%%%%%%%%%%%%
%%%%%%%%%%%%%%%%%%%%%%%%%%%%%%%%%%%%%%%%%%%%%%%%%%%%%%%%%%%%%%%%%%%
\subsection{Structure of the paper}
%%%%%%%%%%%%%%%%%%%%%%%%%%%%%%%%%%%%%%%%%%%%%%%%%%%%%%%%%%%%%%%%%%%
%%%%%%%%%%%%%%%%%%%%%%%%%%%%%%%%%%%%%%%%%%%%%%%%%%%%%%%%%%%%%%%%%%%
In \Cref{sec:back} we recall the definition and the main properties of the sequent system $\PIL$, we recall the notation for substitutions, and the definition of links and their dualizers, which will be used in the definition of proof nets.
In \Cref{sec:CN} we define \conets for $\PIL$, and we provide a correctness criterion, a proof translation algorithm, and a sequentialization procedure.
In \Cref{sec:SN} we define \slnets and we show how they can be obtained from \conets by a rewriting procedure called \emph{flattening}.
In \Cref{sec:canon} we show that \conets and \slnets provide a canonical representation of sequent calculus derivations modulo distinct classes of independent rule permutations. 
We conclude in \Cref{sec:conc} with a discussion of the results, and of future work.

Because of space constraints, some proofs are only sketched in the main text, and are fully detailed in \Cref{app:proofs}.

% \matteo{
% 	Because of space constraints, some proofs are only sketched in the paper, but they are fully detailed in the appendix of the extended version of the paper \cite{acc:man:IJCAR26full}.
% }

%%%%%%%%%%%%%%%%%%%%%%%%%%%%%%%%%%%%%%%%%%%%%%%%%%%%%%%%%%%%%%%%%%
%%%%%%%%%%%%%%%%%%%%%%%%%%%%%%%%%%%%%%%%%%%%%%%%%%%%%%%%%%%%%%%%%%%
%%%%%%%%%%%%%%%%%%%%%%%%%%%%%%%%%%%%%%%%%%%%%%%%%%%%%%%%%%%%%%%%%%%
%%%%%%%%%%%%%%%%%%%%%%%%%%%%%%%%%%%%%%%%%%%%%%%%%%%%%%%%%%%%%%%%%%%
\section{Background}\label{sec:back}
%%%%%%%%%%%%%%%%%%%%%%%%%%%%%%%%%%%%%%%%%%%%%%%%%%%%%%%%%%%%%%%%%%%
%%%%%%%%%%%%%%%%%%%%%%%%%%%%%%%%%%%%%%%%%%%%%%%%%%%%%%%%%%%%%%%%%%%
%%%%%%%%%%%%%%%%%%%%%%%%%%%%%%%%%%%%%%%%%%%%%%%%%%%%%%%%%%%%%%%%%%%

The sequent system $\PIL$ is an extension (of a special case) of \emph{first-order multiplicative additive linear logic} ($\MALL\wF$) obtained by including new operators allowing for a faithful embedding of processes of the \picalc as formulas \cite{acc:man:mon:FaP}: the connective $\lprec$ allows to represent the sequential composition of processes, while the nominal quantifier $\lnewsymb$ allows to represent name restriction and name passing.
In this section we recall the definition of $\PIL$ and some of its properties, which will be useful for the definition of proof nets in the next sections.

We consider \defn{formulas} generated from a countable set of variables $\varset$ and the dual binary predicate symbols 
\defn{send} ($\lsend --$) and 
\defn{receive} ($\lrecv --$), 
the \defn{unit} ($\lunit$),
the binary connectives 
\defn{par} ($\lpar$), 
\defn{tensor} ($\ltens$), 
\defn{prec} ($\lprec$), 
\defn{oplus} ($\lplus$), and 
\defn{with} ($\lwith$), 
the standard 
\defn{universal} ($\forall$) and 
\defn{existential} ($\exists$) quantifiers, 
and the nominal quantifiers 
\defn{new} ($\lnewsymb$) and 
\defn{ya} ($\lyasymb$) 
using the following grammar
\begin{equation}
	A,B \coloneqq
	a				\mid
	A\odot B		\mid
	\lQu xA		
	\quad\text{with }
	\begin{cases}
		a \in \set{\lunit, \lsend xy , \lrecv xy \mid x,y\in\varset}
		\text{ \defn{atomic}}
	\\
		\odot \in \set{\lpar,\ltens,\lprec,\lplus,\lwith}
	\\
		\lqusymb \in \set{\forall, \exists, \lnewsymb, \lyasymb}, \quad x\in\varset
	\end{cases}
\end{equation}
modulo the \defn{$\alpha$-equivalence} generated by the following equations:
\begin{equation}
	\begin{array}{r@{\;\alphaeq\;}l@{\quad}l}
		a 	 & 	a
	&
		\mbox{if $a \in \set{\lunit, \lsend xy , \lrecv xy \mid x,y\in\varset}$}
	\\
		A_1 \circleddot A_2 	& B_1 \circleddot B_2
	&
		\mbox{if $A_i \alphaeq B_i$ and $\circleddot \in \set{\lpar,\lprec,\ltens,\lplus,\lwith}$}
	\\
		\lQu x A 		&	\lQu y A \fsubst yx
	&
		\mbox{if $y$ fresh for $A$ and $\lqusymb \in \set{\lnewsymb, \lyasymb, \forall, \exists}$}
	\end{array}
\end{equation}

We say that a quantification $\lQu x A$ with $\lqusymb\in\set{\forall,\exists,\lnewsymb,\lyasymb}$ \defn{binds} the occurrences of the variable $x$ in $A$, and define the set of \defn{free variables} of a formula $A$ as the set $\freeof A$ of variables occurring in $A$ which are not bound by any quantifier.
To simplify our presentation, in this paper we identify formulas and their \defn{formula-trees}, that is, we assume each atom, connective, and quantifier in a formula to be uniquely identified by its corresponding node in the formula-tree of the formula.

A \defn{sequent} $\Gamma$ is a set of occurrences of formulas, or, equivalently, a forest of formula trees.
Its set of \defn{free variables} $\freeof \Gamma$ are the union of the free variables of the formulas occurring in $\Gamma$.

%%%%%%%%%%%%%%%%%%%%%%%%%%%%%%%%%%%%%%%%%%%%%%%%%%%%%%%%%%%%%%%%%%%
\begin{definition}\label{def:store}
	A \defn{nominal variable} is an element of the form $\isna x$ with $x\in\varSet$ and 
	$\nabla\in\set{\isnusymb,\isyasymb}$.
	If $\sS$ is a set of nominal variables, we say that $x$ \defn{occurs} in $\sS$ if $\isnu x$ or 
	$\isya x$ is an element of $\sS$.
	A  \defn{(nominal) store} $\sS$ is a set of nominal variables such that each variable occurs at most once in $\sS$.
	We denote by $\sS_1,\sS_2$ the union of two stores $\sS_1$ and $\sS_2$ sharing no variable.
	
	A \defn{judgement} (for a sequent $\Gamma$) is an expression of the form $\sdash\Gamma$ with $\sS$ a store such that variables occurring in $\sS$ are free in $\Gamma$.
	We may write $\sdash[ {\isna[1]{x_1},\ldots,\isna[n]{x_n} }]\Gamma$ for the judgement $\set{ \isna[1]{x_1},\ldots,\isna[n]{x_n}} \vdash \Gamma$ and 
	simply $\vdash \Gamma$ for the judgement $\emptystore \vdash \Gamma$.
\end{definition}
%%%%%%%%%%%%%%%%%%%%%%%%%%%%%%%%%%%%%%%%%%%%%%%%%%%%%%%%%%%%%%%%%%%

%%%%%%%%%%%%%%%%%%%%%%%%%%%%%%%%%%%%%%%%%%%%%%%%%%%%%%%%%%%%%%%%%%%
\begin{nota}
	From now on, we always assume judgements and sequents to be \defn{clean}, meaning that each variable $x$ may occur bound by at most one quantifier, and if it occurs bound or in the store, then cannot occur free in the sequent.
	Note however, that we may provide examples with non-clean judgements to improve readability and to highlight the connection between variables bound by dual quantifiers.
\end{nota}
%%%%%%%%%%%%%%%%%%%%%%%%%%%%%%%%%%%%%%%%%%%%%%%%%%%%%%%%%%%%%%%%%%%

%%%%%%%%%%%%%%%%%%%%%%%%%%%%%%%%%%%%%%%%%%%%%%%%%%%%%%%%%%%%%%%%%%%
The system $\PIL$ is defined by all rules in \Cref{fig:rules}.
The left-hand side of the figure shows the standard rules for \emph{first-order multiplicative additive linear logic} ($\MALL\wF$) decorated with stores.
On the right-hand side we recall the rules for the (self-dual) unit $\lunit$, and for the $\lprec$ operator, as well as the rules for nominal quantifiers.
The rule for the $\lprec$ is \emph{multiplicative} (in this sense of \cite{dan:reg:89,gir:meanII,acc:mai:20}) and can introduce any number of $\lprec$-formulas at once; the case $n=0$ corresponds to the $\mixr$ rule.
The rules $\nuur$ and $\yaur$ simply remove the nominal quantifier while respecting the freshness condition ($\dagger$), as the standard rule $\forall$ for the universal quantifier.
Similarly, the rule $\nuloadr$ \resp{$\yaloadr$} removes quantification while respecting the same freshness condition of universal quantification ($\dagger$), and additionally loads the fresh variable $x$, bound by the nominal quantifier, as a nominal variable $\isnu x$ \resp{$\isya x$} in the store.
The rule $\nupopr$ \resp{$\yapopr$} behaves similarly to the rule $\exists$ for the existential quantifier, but uses as a witness of the quantification a nominal variable $\isnu y$ \resp{$\isya y$} in the store, consuming it.
The combination of the rules $\nuloadr$ and $\nupopr$ \resp{$\yaloadr$ and $\yapopr$}, 
together with the linear control of resources of the conjunction rules,
guarantees that each nominal variable can be used as witness for at most one nominal quantification.

%%%%%%%%%%%%%%%%%%%%%%%%%%%%%%%%%%%%%%%%%%%%%%%%%%%%%%%%%%%%%%%%%%%
% fig Formulas
%%%%%%%%%%%%%%%%%%%%%%%%%%%%%%%%%%%%%%%%%%%%%%%%%%%%%%%%%%%%%%%%%%%
\begin{figure}[t]
	\adjustbox{max width=\textwidth}{$\begin{array}{c@{\quad\vrule\quad}c}
		\begin{array}{c@{\qquad}c}
			\multicolumn{2}{c}{
				\vlinf{\axrule}{}{\sdash \lsend xy, \lrecv xy}{}
			}
		\\[15pt]
			\vlinf{\lpar}{}{\sdash \Gamma, A\lpar B}{\sdash \Gamma, A, B}
		&
			\vliinf{\ltens}{}{\sdash[\sS_1 , \sS_2] \Gamma, A\ltens B,\Delta}{\sdash[\sS_1] \Gamma, A}{\sdash[\sS_2] B, \Delta}
		\\[15pt]
			\vlinf{\lplus}{}{\sdash\Gamma, A_1\lplus A_2}{ \sdash \Gamma, A_i}
		&
			\vliinf{\lwith}{}{\sdash \Gamma,A\lwith B}{\sdash \Gamma,A}{\sdash \Gamma,B}
		\\[15pt]
			\vlinf{\forall}{\dagger}{\sdash \Gamma, \lFa x{A}}{\sdash \Gamma, A}
		&
			\vlinf{\exists}{}{\sdash \Gamma, \lEx x{A}}{\sdash \Gamma, A\fsubst yx}
		\end{array}
	&
		\begin{array}{c}
			\vlinf{\lunit}{}{\sdash \lunit}{}
		\qquad
			\vliinf{\lprec}{n \geq 0}{
				\sdash[\sS_1 , \sS_2] \Gamma,\Delta, A_1\lprec B_1 , \dots, A_n \lprec B_n
			}{ \sdash[\sS_1] \Gamma, A_1, \dots, A_n}{\sdash[\sS_2] \Delta, B_1, \dots, B_n}
		\\[15pt]
			\vlinf{\nuur}{\dagger}{\sdash \Gamma, \lNu x A}{\sdash[\sS] \Gamma, A}
		\quad
			\vlinf{\nuloadr}{\dagger}{\sdash \Gamma, \lNu xA}{\sdash[\sS , \isnu x] \Gamma , A}
		\quad
			\vlinf{\nupopr}{}{\sdash[\sS , \isnu y] \Gamma, \lYa xA}{\sdash \Gamma,A\fsubst yx}
		\\[15pt]
			\vlinf{\yaur}{\dagger}{\sdash \Gamma, \lYa x A}{\sdash[\sS] \Gamma, A}
		\quad
			\vlinf{\yaloadr}{\dagger}{\sdash \Gamma,\lYa xA}{\sdash[\sS , \isya x] \Gamma , A}
		\quad
			\vlinf{\yapopr}{}{\sdash[\sS , \isya y] \Gamma, \lNu xA}{\sdash \Gamma, A\fsubst yx}
		\end{array}
	\end{array}$}
	\caption{Sequent calculus rules of $\PIL$, where $\dagger\coloneqq x\notin \freeof{\Gamma}$.}
	\label{fig:rules}
\end{figure}
%%%%%%%%%%%%%%%%%%%%%%%%%%%%%%%%%%%%%%%%%%%%%%%%%%%%%%%%%%%%%%%%%%%
%%%%%%%%%%%%%%%%%%%%%%%%%%%%%%%%%%%%%%%%%%%%%%%%%%%%%%%%%%%%%%%%%%%

%%%%%%%%%%%%%%%%%%%%%%%%%%%%%%%%%%%%%%%%%%%%%%%%%%%%%%%%%%%%%%%%%%%
\begin{nota}
	We write $\proves[\PIL]\Gamma$ to denote the existence of a derivation in $\PIL$ with conclusion the judgement $\sdash[\emptyset]\Gamma$.
	We denote by  $\vldownsmash{\vlderivation{\vlpr{\dD}{}{\sdash\Gamma}}}$ 
	a  derivation with conclusion $\sdash \Gamma$.
\end{nota}
%%%%%%%%%%%%%%%%%%%%%%%%%%%%%%%%%%%%%%%%%%%%%%%%%%%%%%%%%%%%%%%%%%%

\begin{remark}
	In the definition of $\PIL$, we did not discuss the definition of negation, which is defined by extending the one of $\MALL\wF$ by the De Morgan laws $\cneg \lunit=\lunit$, $\cneg{(A\lprec B)}=\cneg A \lprec\cneg B$, and $\cneg{(\lNu xA)}=\lYa x{(\cneg A)}$. 
	However, for the purpose of this paper, it suffices to only keep in mind that $\lsend --$ and $\lrecv --$, as well as that two nominal quantifiers, are dual to each other.

	Similarly, we do not discuss the cut rule because it has no computational interpretation in the proof-search semantics of $\PIL$ from \cite{acc:man:mon:FaP}.
	However, cut-elimination for $\PIL$ holds, and its proof requires a non-trivial treatment of the nominal quantifiers and the store. Details are provided in \cite{acc:man:mon:FaPext}.
\end{remark}
\begin{nota}
	For brevity, we may denote by $\nabla$ any nominal quantifier, and when we write $\cneg \nabla$ we assume that $\cneg \lnewsymb=\lyasymb$ and $\cneg \lyasymb=\lnewsymb$.
\end{nota}

%%%%%%%%%%%%%%%%%%%%%%%%%%%%%%%%%%%%%%%%%%%%%%%%%%%%%%%%%%%%%%%%
%%%%%%%%%%%%%%%%%%%%%%%%%%%%%%%%%%%%%%%%%%%%%%%%%%%%%%%%%%%%%%%%
\subsection{Links, Substitutions, and Dualizers}\label{sec:links}
%%%%%%%%%%%%%%%%%%%%%%%%%%%%%%%%%%%%%%%%%%%%%%%%%%%%%%%%%%%%%%%%
%%%%%%%%%%%%%%%%%%%%%%%%%%%%%%%%%%%%%%%%%%%%%%%%%%%%%%%%%%%%%%%%

%%%%%%%%%%%%%%%%%%%%%%%%%%%%%%%%%%%%%%%%%%%%%%%%%%%%%%%%%%%%%%%%

\def\wmaps{\sigma}
\def\wmapt{\tau}
\def\wmapr{\rho}

We conclude this section by recalling and extending the definitions of \emph{link} and \emph{dualizer} from the literature.

%%%%%%%%%%%%%%%%%%%%%%%%%%%%%%%%%%%%%%%%%%%%%%%%%%%%%%%%%%%%%%%%
\begin{definition}
	A \defn{link} $\la$ on a judgement $\sdash \Gamma$ is either:
	\begin{itemize}
		\item a \defn{sub-sequent} of $\sdash \Gamma$, that is, a sequent $\Gamma'$ which is a sub-forest of $\Gamma$;
		\item or a \defn{nominal link}, that is, either
		\begin{itemize}
			\item 
			a two-elements set $\set{x,y}\subset\varset$ of variables occurring in $\Gamma$ such that $x$ is bound by $\lnewsymb$ and $y$ is bound by a $\lyasymb$; or
			
			\item a two-elements set $\set{\isnu x,y}$ \resp{$\set{\isya x,y}$} consisting of a nominal variable $\isnu x$ \resp{$\isya x$} occurring in the store and a variable $y$ occurring in $\Gamma$ bound by the nominal quantifier $\lyasymb$ \resp{$\lnewsymb$};
		\end{itemize}
	\end{itemize}
	A link is \defn{axiomatic} if it is of the form $\set{\lunit}$ or $\set{\lsend xy,\lrecv zt}$, 
	or if it is a nominal link of the form $\set{x,y}\subset\varset$ containing no nominal variable.
	
	A \defn{linking} \resp{\defn{axiomatic linking}} on $\sdash \Gamma$ is a set of links 
	\resp{axiomatic links} on $\sdash \Gamma$.
\end{definition}

%%%%%%%%%%%%%%%%%%%%%%%%%%%%%%%%%%%%%%%%%%%%%%%%%%%%%%%%%%%%%%%%
\begin{nota}
	We represent a link $\la$ by drawing a horizontal line labeled by $\la$ connected via vertical segments 
	to each (root of a) subformula and to each variable in the link.
	For example, 
	$\sclr z =\set{\isnu z, v}$, $\lc= \set{x,y}$, $\la=\set{ A, C, D \lplus E}$, and $\lb= \set{B,(\lNu xC)\lpar(D\oplus E), \lYa y F}$ are valid links for the judgement below, which has been decorated with the links to show how they are represented.

	\begin{equation}\label{eq:exPreLink}
		\begin{array}{c}
			\sdash[\isya w , \isnu{\vz1}] \lYa {\vv2}{\vA1\ltens \vB1}, (\vNu1 \vx1.{\vC1})\vpz1{\lpar}(\vD1\vlplus1 \vE1), \vYa1 \vy1.\vF1
			\pzlinks{A1/C1/12/\la/red/lplus1}
			\pzlinks{x1/y1/-10/\lc/violet/}
			\pzlinks{B1/Ya1/-16/\lb/blue/pz1}
			\pzlinks{z1/v2/12/z/magenta/}
			\\[10pt]
		\end{array}
	\end{equation}
	Note that $\set{D,D\oplus E}$ is not a link for the judgement  above because the formula $D$ is repeated twice, therefore is not a sub-forest of the sequent in the judgement.
\end{nota}
%%%%%%%%%%%%%%%%%%%%%%%%%%%%%%%%%%%%%%%%%%%%%%%%%%%%%%%%%%%%%%%%

%%%%%%%%%%%%%%%%%%%%%%%%%%%%%%%%%%%%%%%%%%%%%%%%%%%%%%%%%%%%%%%%
A \defn{substitution} $\wmaps=\fsubsts{x_1/y_1,\ldots,x_n/y_n}$ is a map with domain $\set{y_1,\ldots, y_n}\subset\varset$ and image $\set{x_1,\ldots,x_n}\subset\varset$.\footnote{Note that we assume substitutions to map variables to variables because we have no function symbols in $\PIL$, that is, terms are variables only.}
If $X$ is a formula \resp{variable or judgement}, we write $\wmaps(X)$ or $X\wmaps$ for the formula \resp{variable or judgement} obtained from $X$ by simultaneously substituting each occurrence of the variable $y_i$ with an occurrence of the variable $x_i$ for all $i\in\intset1n$.

\begin{nota}
	We adopt the following notation for substitutions:
	\begin{itemize}
		\item 
		$\dualizerof[\emptyset]$ for the \defn{empty substitution};

		\item 
		$\wmaps\wmapt$ for the \defn{composition} of $\wmaps$ and $\wmapt$ (in which $\wmaps$ is applied \emph{after} $\wmapt$);

		\item 
		$\wmaps\fsubminus{x}$ for the substitution obtained restricting $\wmaps$ by removing the variable $x$ from its domain, and $\wmaps x$ from its image;

		\item
		$\wmaps \geq \wmapr$ if $\wmaps$ for \defn{more general} than $\wmapt$, that is, if $\wmaps=\wmapt\wmapr$ for some $\wmapt$;

		\item
		$\newcohe\wmaps{\wmapt}$ if 
		$\wmaps$ and $\wmapt$ are \defn{coherent} on their domains, that is, if $x\in\domof{\wmaps}\cap\domof{\wmapt}$ then $\wmaps(x)=\wmapt(x)$;

		\item
		$\wmaps\join\wmapt$ for the most general map $\wmapr$ such that $\wmaps\leq\wmapr$  and $\wmapt\leq\wmapr$;

	\end{itemize}
\end{nota}

%%%%%%%%%%%%%%%%%%%%%%%%%%%%%%%%%%%%%%%%%%%%%%%%%%%%%%%%%%%%%%%%
\begin{definition}\label{def:dualizer}
	A \defn{dualizer} $\dualizerof[\la]$ for a link $\la$ on a judgement $\sdash \Gamma$ is a substitution whose domain consists of 
	\begin{itemize}
		\item one of the two variables in $\la$ if $\la$ is a nominal link; or
		\item variables occurring in {$\la$} bound by an existential quantifier ($\exists$) or nominal quantifier ($\lnewsymb$ or $\lyasymb$) in $\Gamma$ otherwise.
	\end{itemize}
	
	A \defn{witness map} $\dualizerof^{\linking}$ for a linking $\linking$ is a map associating a (possibly empty) dualizer $\dualizerof[\la]^{\linking}$ to each link $\la$ in $\linking$.
	If $\linking$ is an axiomatic linking, then we say that a witness map $\dualizerof^{\linking}$ for $\linking$ is \defn{valid} if the following conditions hold for each $\la\in\linking$:
	\begin{itemize}
		\item if $\la = \set{\lsend xy,\lrecv zt} $, then  $\dualizerof[\la]^{\linking}(x) = \dualizerof[\la]^{\linking}(z)$ and $\dualizerof[\la]^{\linking}(y) = \dualizerof[\la]^{\linking}(t)$;
		
		\item if $\la=\set{x,y}$ is an axiomatic nominal link, then $\dualizerof[\la]^{\linking}(x)= \dualizerof[\la]^{\linking}(y)$;
		
		\item if $\la=\set\lunit$, then $\dualizerof[\la]^{\linking}=\dualizerof[\emptyset]$.
	\end{itemize}
\end{definition}
%%%%%%%%%%%%%%%%%%%%%%%%%%%%%%%%%%%%%%%%%%%%%%%%%%%%%%%%%%%%%%%%

%%%%%%%%%%%%%%%%%%%%%%%%%%%%%%%%%%%%%%%%%%%%%%%%%%%%%%%%%%%%%%%%
\begin{example}
	Consider the judgement  shown below, and its links $\la = \set{\lsend xy, \lrecv zw}$ and $\lb = \set{\lrecv tv,  {\lsend sv}}$ and $\lc=\set{s,t}$.
	The witness map 
	$\dualizerof^{\linking}=\set{\la\mapsto\dualizerof[\la],\lb\mapsto\dualizerof[\lb],\lc\mapsto\dualizerof[\lc]}$ with $\dualizerof[\la] = \fsubsts{x/z, y / w}$ and $\dualizerof[\lb] =\dualizerof[\lc] = \fsubsts{s/t}$ is valid.
	$$
	\vdash \lEx{x}\lEx{y} \vpz1{\lsend xy},\ \vpz2{\lrecv zw} \ltens (\lYa {\vt1} \vpz3{\lrecv tv}) , \lNu {\vs1} \vpz4{\lsend sv}
	\pzlinks{pz1/pz2/12/\la/red/}
	\pzlinks{pz3/pz4/12/\lb/blue/}
	\pzlinks{t1/s1/-10/\lc/violet/}
	$$
\end{example}
%%%%%%%%%%%%%%%%%%%%%%%%%%%%%%%%%%%%%%%%%%%%%%%%%%%%%%%%%%%%%%%%

%%%%%%%%%%%%%%%%%%%%%%%%%%%%%%%%%%%%%%%%%%%%%%%%%%%%%%%%%%%%%%%%
%%%%%%%%%%%%%%%%%%%%%%%%%%%%%%%%%%%%%%%%%%%%%%%%%%%%%%%%%%%%%%%%
%%%%%%%%%%%%%%%%%%%%%%%%%%%%%%%%%%%%%%%%%%%%%%%%%%%%%%%%%%%%%%%%
\section{Conflict Nets}\label{sec:CN}
%%%%%%%%%%%%%%%%%%%%%%%%%%%%%%%%%%%%%%%%%%%%%%%%%%%%%%%%%%%%%%%%
%%%%%%%%%%%%%%%%%%%%%%%%%%%%%%%%%%%%%%%%%%%%%%%%%%%%%%%%%%%%%%%%
%%%%%%%%%%%%%%%%%%%%%%%%%%%%%%%%%%%%%%%%%%%%%%%%%%%%%%%%%%%%%%%%

Conflict nets for $\MALL$ \cite{hei:hug:conflict} are trees alternating \emph{concord} ($\conc$) and \emph{conflict} ($\conf$) nodes and axiomatic links as leaves, and satisfying a contractability criterion with respect to a rewriting procedure called \emph{coalescence}.
Intuitively, leaves of these trees represent derivable sequents, and each coalescence step checks that an inference rule of the system can correctly apply to construct a new derivable sequent from the ones involved in the step. That is, coalescence checks that the structure of the proof net correctly reflects to the top-down construction of a derivation in the sequent calculus.
In this section, we extend the definition of conflict nets to the first-order setting of $\PIL$.

We recall the definition of \emph{concord-conflict tree} from \cite{hei:hug:conflict}.
%%%%%%%%%%%%%%%%%%%%%%%%%%%%%%%%%%%%%%%%%%%%%%%%%%%%%%%%%%%%%%%%
\begin{definition}\label{def:cotree}
	A \defn{concord-conflict tree} (or \defn{\cotree} for short) for a linking $\linking$ on a judgement $\sdash \Gamma$ is a tree $\linktree$ with $\linking$ as its set of leaves, 
	and internal nodes labeled by $\conc$ (\defn{concord} nodes) or by $\conf$ (\defn{conflict} nodes).
	We denote by $\canon{\linktree}$ the \defn{canonization} of a \cotree $\linking$, that is, the \cotree obtained by merging adjacent $\conc$-nodes \resp{$\conf$-nodes}, and by removing nodes with a single child by attaching its child to its parent.
	A \cotree $\linktree$ is \defn{axiomatic} if $\linking$ is so; it is \defn{canonical} if $\linktree=\canon{\linktree}$.
\end{definition}
%%%%%%%%%%%%%%%%%%%%%%%%%%%%%%%%%%%%%%%%%%%%%%%%%%%%%%%%%%%%%%%%

We enrich the definition of conflict nets and coalescence from \cite{hei:hug:conflict} with the information about witnesses, which is required to properly deal with quantifiers in the first-order setting similarly to \cite{hug:unification,hei:hug:str:ALL1,hug:str:wu:CP1}.
%%%%%%%%%%%%%%%%%%%%%%%%%%%%%%%%%%%%%%%%%%%%%%%%%%%%%%%%%%%%%%%%
\begin{definition}\label{def:coalescence}
	A \defn{pre-structure} is a pair $\tuple{\linktree,\dualizerof^{\linking}}$ on $\Gamma$ made of a \cotree $\linktree$ and a witness map $\dualizerof^{\linking}$ for the linking $\linking$ of $\linktree$.
	It is \defn{trivial} if $\linktree$ consists of a single leaf $\la$ and $\dualizerof[\la]^{\linking}=\dualizerof[\emptyset]$.
	A \defn{proof structure} for a sequent $\Gamma$ is a pre-structure $\tuple{\linktree,\dualizerof}$ where $\linktree$ is axiomatic and canonical, and $\dualizerof^{\linking}$ is valid.

	A \defn{proof net} $\tuple{\linktree,\dualizerof^{\linking}}$ for a sequent $\Gamma$ 
	is a proof structure for $\Gamma$ which is also \defn{coalescent}, that is, it admits a sequence of 
	\defn{coalescence steps} from \Cref{fig:coalescence} transforming $\tuple{\linktree,\dualizerof}$ into a trivial pre-structure.
	See \Cref{fig:coalescence} for an example.
\end{definition}
%%%%%%%%%%%%%%%%%%%%%%%%%%%%%%%%%%%%%%%%%%%%%%%%%%%%%%%%%%%%%%%%

%%%%%%%%%%%%%%%%%%%%%%%%%%%%%%%%%%%%%%%%%%%%%%%%%%%%%%%%%%%%%%%%
% Fig Coalescence
%%%%%%%%%%%%%%%%%%%%%%%%%%%%%%%%%%%%%%%%%%%%%%%%%%%%%%%%%%%%%%%%
\begin{figure}[t]
	\adjustbox{max width=\textwidth}{$\begin{array}{c}
		\begin{array}{ll}
			\left.\begin{array}{rcl@{\quad}l}
				\vA1 \vpz1{\lpar} \vB1, \viC1,\ldots ,\viC n \pzlinks{A1/B1/12/\la/red/{C1,Cn}}
			&\coalto[\lpar]&
				\vA1 \vpz1{\lpar} \vB1, \viC1,\ldots ,\viC n \pzlinks{pz1/C1/12/\lc/violet/{Cn}}
			&
			\qquad
				\begin{array}{c}\vpz1{\lc} \\[10pt]\vconf1 \end{array}
				\multiGedges{conf1}{pz1}
			\coalto[\bullet]
				\vpz0{\lc}
			\qquad\qquad
				\begin{array}{c}\vpz2{\lc}\\[10pt]\vconc1\end{array}
				\multiGedges{conc1}{pz2}
			\coalto[\bullet]
				\vpz5{\lc}
			\\\\
				\viA1 \!\vpz1{\lplus}\! \viA2, \viB1,\ldots ,\viB n \pzlinks{A1/B1/12/\la/red/{Bn}}
			&\coalto[\lplus]&
				\viA1 \!\vpz1{\lplus}\! \viA2, \viB1,\ldots ,\viB n \pzlinks{pz1/B1/12/\lc/violet/{Bn}}
			&
				\viA1 \!\vpz1{\lplus}\! \viA2, \viB1,\ldots ,\viB n \pzlinks{A2/B1/12/\la/red/{Bn}}
			\coalto[\lplus]
				\viA1 \!\vpz1{\lplus}\! \viA2, \viB1,\ldots ,\viB n \pzlinks{pz1/B1/12/\lc/violet/{Bn}}
			\\\\
				\vpz1{\forall}\!x.\vA1, \viB1,\ldots ,\viB n \pzlinks{A1/B1/12/\la/red/{Bn}}
			&\coalto[\forall]&
				\vpz1{\forall}\!x.\vA1, \viB1,\ldots ,\viB n \pzlinks{pz1/B1/12/\lc/violet/{Bn}}
			&
				\text{ if } x\notin\freeof{B_1,\ldots,B_n} \quand x\notin\domof{\dualizerof[\la]}
			\\\\
				\vpz1{\nabla}\!x.\vA1, \viB1,\ldots ,\viB n \pzlinks{A1/B1/12/\la/red/{Bn}}
			&\coalto[\naloadr]&
				\vpz1{\nabla}\!x.\vA1, \viB1,\ldots ,\viB n \pzlinks{pz1/B1/12/\lc/violet/{Bn}}
			&
				\text{ if } x\notin\freeof{B_1,\ldots,B_n}
			\\\\
				\vpz1{\nabla}\!x.\vA1, \viB1,\ldots ,\viB n \pzlinks{A1/B1/12/\la/red/{Bn}}
			&\coalto[\naur]&
				\vpz1{\nabla}\!x.\vA1, \viB1,\ldots ,\viB n \pzlinks{pz1/B1/12/\lc/violet/{Bn}}
			&
				\text{ if } x\notin\freeof{B_1,\ldots,B_n} \mbox{ and } \set{x,y}\notin\linktree
			\\\hline
			\end{array}\right\}
		&
			\text{with }\dualizerof[\lc]=\dualizerof[\la]
		\\\\
			\begin{array}{rcl}
				\quad\vpz1{\exists}\!x.\vA1, \viB1,\ldots ,\viB n \pzlinks{A1/B1/12/\la/red/{Bn}}
			&\;\coalto[\;\exists\;]\;&
				\vpz1{\exists}\!x.\vA1, \viB1,\ldots ,\viB n \pzlinks{pz1/B1/12/\lc/violet/{Bn}}
			\end{array}
		&
			\text{ with } \dualizerof[\lc]=\dualizerof[\la]\fsubminus{x}
		\end{array}
	\\\hline\hline\\
		\begin{array}{cccl}
			\begin{array}{c}
				\vpz1{\la}
				\qquad
				\vpz2{\lb}
				\\\\
				\vpz0{\mathfrak n}
				\\
				\text{with}
				\\
				\newcohe{\dualizerof[\la]}{\dualizerof[\lb]}
			\end{array}
			\multiGedges{pz0}{pz1,pz2}
		&\coalto[\rrule]&
			\begin{array}{c}
				\vpz1{\lc}
				\\\\
				\vpz0{\mathfrak n}
				\\
				\\
			\end{array}
			\multiGedges{pz0}{pz1}
		&
			\left\{\begin{array}{@{\text{if }}l@{\text{ and }}l@{\text{ and }}l@{\text{ with }}l}
				\mathfrak n=\conc
				&
				\rrule=\ltens
				&
				\vA1 \vpz1{\ltens} \vB1, \viC1,\ldots ,\viC n,\viD1,\ldots ,\viD m
				\pzlinks{A1/C1/16/\la/red/{Cn}} 
				\pzlinks{B1/D1/12/\lb/blue/{Dm}}
				\pzlinks{pz1/C1/-12/\lc/violet/{Cn,D1,Dm}}
				&
				\dualizerof[\lc] = \dualizerof[\la] \join \dualizerof[\lb]
			\\[20pt]
				\mathfrak n=\conc
				&
				\rrule=\lprec
				&
				\viA1 \vpz1{\lprec} \viB1,\ldots,\viA{k} \vpz2{\lprec} \viB{k}, \viC1,\ldots ,\viC n,\viD1,\ldots ,\viD m 
				\pzlinks{A1/Ak/16/\la/red/{C1,Cn}} 
				\pzlinks{B1/Bk/12/\lb/blue/{D1,Dm}}
				\pzlinks{pz1/pz2/-12/\lc/violet/{C1,Cn,D1,Dm}}
				&
				\dualizerof[\lc] = \dualizerof[\la] \join \dualizerof[\lb]
			\\[20pt]
				\mathfrak n=\conc
				&
				\rrule=\napopr
				&
				\vnNa1 \vx1.\vA1, \vNa1 \vy1.\vB1, \viC1,\ldots ,\viC n
				\pzlinks{A1/B1/16/\la/red/{C1,Cn}}
				\pzlinks{x1/y1/12/\lb/blue/}
				\pzlinks{nNa1/B1/-12/\lc/violet/{C1,Cn}}
				&
				\dualizerof[\lc]=\dualizerof[\la]\fsubminus{x}
			\\[20pt]
				\mathfrak n=\conf
				&
				\rrule=\lwith
				&
				\vA1 \vlwith1 \vB1, \viC1,\ldots ,\viC n
				\pzlinks{A1/C1/16/\la/red/{Cn}}
				\pzlinks{B1/Cn/12/\lb/blue/{C1}}
				\pzlinks{lwith1/C1/-12/\lc/violet/{Cn}}
				&
				\dualizerof[\lc] = \dualizerof[\la] \join \dualizerof[\lb]
			\end{array}\right.
		\end{array}
	\\\\\hline\\
		\!\!
		\begin{array}{cccl}
			\begin{array}{c}
				\vpz1{\la_1}\cdots \vpz3{\la_h}
				\quad
				\vpz2{\lb_1}\cdots\vpz4{\lb_k}
				\\\\
				\vconf1
			\end{array}
			\multiGedges{conf1}{pz1,pz2,pz3,pz4}
		&\coalto[\conf]&
			\begin{array}{ccc}
				\vpz1{\la_1}\cdots \vpz3{\la_h}
				&&
				\vpz2{\lb_1}\cdots\vpz4{\lb_k}
				\\[5pt]
				\vconf2&& \vconf3
				\\[5pt]
				&\vconf1
			\end{array}
			\multiGedges{conf1}{conf2,conf3}
			\multiGedges{conf2}{pz1,pz3}
			\multiGedges{conf3}{pz2,pz4}
		&
			\text{if there is } A \lwith B \text{ in } \Gamma \text{ s.t. }
			\begin{cases}
				A \in \la_i \text{ and } B \notin \la_i \text{ for all } i \in \intset{1}{h}
				\\
				A \notin \lb_j \text{ and } B \in \lb_j \text{ for all } j \in \intset{1}{k}
			\end{cases}
			\hskip-1.5em
		\end{array}
	\end{array}$}
	\caption{Coalescence steps for $\PIL$.}
	\label{fig:coalescence}
\end{figure}
%%%%%%%%%%%%%%%%%%%%%%%%%%%%%%%%%%%%%%%%%%%%%%%%%%%%%%%%%%%%%%%%
%%%%%%%%%%%%%%%%%%%%%%%%%%%%%%%%%%%%%%%%%%%%%%%%%%%%%%%%%%%%%%%%

We can define the proof translation by translating top-down the rules of a given derivation in $\PIL$ as shown in \Cref{fig:deseq}, allowing us to prove that our \conets are complete with respect to $\PIL$.
% 
%%%%%%%%%%%%%%%%%%%%%%%%%%%%%%%%%%%%%%%%%%%%%%%%%%%%%%%%%%%%%%%%
\begin{lemma}\label{lem:confTranslation}
	Let $\Gamma$ be a sequent.
	If $\dD$ is a derivation in $\PIL$ with conclusion $\Gamma$,
	then there is a \conet for $\Gamma$.
\end{lemma}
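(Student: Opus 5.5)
We argue by induction on the height of $\dD$, proving a stronger statement adapted to judgements with stores. For every derivation $\dD$ of $\sdash\Gamma$ we build a proof structure $\tuple{\linktree,\dualizerof^{\linking}}$ on $\sdash\Gamma$ together with a coalescence sequence reducing it to the trivial pre-structure whose single link is the whole judgement, i.e.\ the roots of $\Gamma$ plus one nominal link per nominal variable in $\sS$ that has been consumed. In fact we prove something finer: the coalescence sequence mirrors $\dD$ read top-down, one coalescence step (plus the $\coalto[\bullet]$ and $\coalto[\conf]$ bookkeeping steps) per rule.

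For the base cases $\axrule$ and $\lunit$ we take the single axiomatic link $\set{\lsend xy,\lrecv xy}$ (resp.\ $\set\lunit$) with the empty dualizer, which is valid and already trivial. For a unary rule ($\lpar$, $\lplus$, $\forall$, $\exists$, $\nuur$, $\nuloadr$, $\yaur$, $\yaloadr$) we reuse the net of the premise, lifting the formula tree to the conclusion.
\begin{itemize}
\item For $\exists$ with witness $y$, each link containing occurrences of $x$ has its dualizer extended by $[y/x]$.
\item For the load rules, the nominal variable $\isnu x$ in the premise store becomes the bound variable $x$ of the conclusion, so every nominal link $\set{\isnu x,z}$ becomes an axiomatic link $\set{x,z}$.
\end{itemize}
The coalescence sequence is the inherited one followed by the matching step of \Cref{fig:coalescence}. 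Its side conditions ($x\notin\freeof{B_i}$, $x\notin\domof{\dualizerof[\la]}$) are guaranteed by the freshness condition $\dagger$ and by cleanness.

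The pop rules $\nupopr$ and $\yapopr$ are treated dually. The premise $\sdash\Gamma,A\fsubst yx$ becomes a net on $\Gamma,\lYa xA$ by recording $[y/x]$ in the dualizers, and we add a nominal link $\set{\isnu y,x}$. At the level of the final global net, this link is later turned into an axiomatic link $\set{y,x}$ by the corresponding load rule below. We then use the $\napopr$ step, placing the nominal link under a $\conc$ node with the main link.

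For the binary rules $\ltens$ and $\lprec$ we put the two nets side by side under a new $\conc$ root. For $\lwith$ we put them under a new $\conf$ root and take the union of the linkings, identifying the shared context $\Gamma$. The combined coalescence first runs the two component sequences inside each subtree. This is legitimate because coalescence steps are local to the links involved and the formulas touched by one subtree are disjoint from the other (for $\lwith$, the $\coalto[\conf]$ steps separate links containing $A$ from those containing $B$). We then apply the $\ltens$/$\lprec$/$\lwith$ step. This step needs $\newcohe{\dualizerof[\la]}{\dualizerof[\lb]}$, which holds because cleanness makes the existential variables of the two premises disjoint, except in the $\lwith$ case, where the context is shared.

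The main obstacle is this shared-context issue for $\lwith$, and more generally keeping the witness map well defined. In the $\lwith$ case the two premises may pick different witnesses for the same existential variable in $\Gamma$, so the joins may be incoherent. The planned fix is to attach dualizers to leaves individually rather than globally, so that each axiomatic link only carries the witnesses along its own additive branch; coherence is then needed only when merging links, and links in the two $\lwith$ branches merge only after each branch's $\exists$ steps have removed those variables, because $\exists$ on context formulas fires inside the branch. Since $\exists$-coalescence requires the link to contain the whole context, we will have to check that the inherited sequences can be ordered so that the context quantifiers are processed before the $\conf$ merge. This requires first duplicating context formulas per branch, so we will state the induction invariant for forests of occurrences. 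Finally we canonize the tree; coalescence is preserved since $\coalto[\bullet]$ and $\coalto[\conf]$ steps undo canonization. Instantiating the statement with the empty store gives the lemma.
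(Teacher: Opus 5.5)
Your proposal follows essentially the same route as the paper. It uses the rule-by-rule translation of \Cref{fig:deseq}: the $\exists$ and pop rules update the dualizers of the links they touch, pop adds a nominal link under a $\conc$-node, and $\ltens$/$\lprec$ and $\lwith$ yield $\conc$- and $\conf$-roots. The coalescence path then replays $\dD$ top-down, and your ``planned fix'' of per-leaf dualizers is exactly the paper's disjoint union $\witnessof{\dD_1}\uplus\witnessof{\dD_2}$ in the $\lwith$ case.

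Your closing worry is unfounded, though, and the duplication of context formulas it leads you to is unnecessary. The $\exists$-step does not require the link to contain the whole context, and links are per-leaf sub-forests of the same $\Gamma$. The replayed path already coalesces each branch to its root link, including that branch's $\exists$-steps on context formulas, before the $\lwith$-step, so coherence there is automatic. The same holds for $\ltens$/$\lprec$: the variables the two merged links can share are bound below the rule and so receive a single witness. That, rather than disjointness by cleanness, is the right reason.
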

\begin{proof}
	We define $\cnof{\dD}=\tuple{\treeof{\dD},\witnessof{\dD}}$ by translating top-down the rules of $\dD$ using the construction in \Cref{fig:deseq}, where
	we denote by $\ltree_1 \conc \ltree_2$ \resp{$\ltree_1 \conf \ltree_2$} the \cotree with root a $\conc$-node \resp{$\conf$-node} having as children the roots of $\ltree_1$ and $\ltree_2$. 
	Note that the translation uses the (not necessarily sound) rule $\leafr$ which generalizes $\axrule$ up-to the existence of a proper unifier.
	This is due to the necessity of updating the witness map and the leaves of the \cotree  if we want to give an inductive translation of derivations in $\PIL$ which is compositional with respect to the structure of the derivation.

	To prove that  $\cnof{\dD}$ is coalescent, it suffices to consider the sequence of coalescence steps corresponding to the $\PIL$ rules used in the proof, in the same order in which the rules of $\dD$ have been translated: each side condition for the coalescence steps in \Cref{fig:coalescence} is satisfied by construction.
	Moreover, in the translation of the rules $\exists$, $\nuloadr$, and $\yaloadr$, the witness map is updated so that, once the corresponding quantifier is coalesced, the associated dualizers are `emptied'.
\end{proof}
%%%%%%%%%%%%%%%%%%%%%%%%%%%%%%%%%%%%%%%%%%%%%%%%%%%%%%%%%%%%%%%%

%%%%%%%%%%%%%%%%%%%%%%%%%%%%%%%%%%%%%%%%%%%%%%%%%%%%%%%%%%%%%%%%
% FIG: Desequ PIL
%%%%%%%%%%%%%%%%%%%%%%%%%%%%%%%%%%%%%%%%%%%%%%%%%%%%%%%%%%%%%%%%
\begin{figure}[t]
	\centering\adjustbox{max width=\textwidth}{$\begin{array}{c}
		%
		%multiplicative
		\cnof{
			\vlderivation{
				\vliin{\rrule[2]}{}{\sdash[\sS_1,\sS_2] \Gamma}{
					\vlpr{\dD_1}{}{\sdash[\sS_1] \Gamma_1}
				}{
					\vlpr{\dD_2}{}{\sdash[\sS_2] \Gamma_2}
				}
			}
		}
		=
		\Tuple{
			\canon{\treeof{\dD_1} \conc \treeof{\dD_2}}
		, 
			\witnessof{\dD_1} \cup \witnessof{\dD_2}
		}
	\qquad
		% 
		%unary
		\cnof{
			\vlderivation{\vlin{\rrule[1]}{}{\sdash \Gamma}{\vlpr{\dD_1}{}{\sdash[\sS_1] \Gamma_1}}}
		}
		=
		\cnof{\dD_1}
	\\[20pt]
		%unit
		\cnof{
			\vlderivation{\vlin{\lunit}{}{\sdash \lunit}{\vlhy{}}}
		}
		=
		\Tuple{
			\set{\vpz1{\lunit}},
			\dualizerof[\emptyset]
		}
	\qquad
		%
		%additive
		\cnof{
			\vlderivation{
				\vliin{\lwith}{}{\sdash \Gamma}{
					\vlpr{\dD_1}{}{\sdash \Gamma_1}
				}{
					\vlpr{\dD_2}{}{\sdash \Gamma_2}
				}
			}
		}
		=
		\Tuple{
			\canon{\treeof{\dD_1} \conf \treeof{\dD_2}}
		, 
			\witnessof{\dD_1} \uplus \witnessof{\dD_2}
		}
	\\[20pt]
		%
		%axioms
		\cnof{
			\vlinf{\leafr}{}{\sdash \lsend xy, \lrecv zt}{\vlhy{}}
		}
		=
		\Tuple{
			\set{\lsend xy,\lrecv zt}
		, 
			\dualizerof[\emptyset]
		}
	\quad
		%
		%exists
		\cnof{
			\vlderivation{
				\vlin{\exists}{}{\sdash \Gamma, \lEx xA}{
					\vlpr{\dD_y}{}{\sdash \Gamma, A\fsubst yx}
				}
			}
		}
		=
		\Tuple{
			\treeof{\dD_x}
		, 
			\Set{
				\la_x \mapsto \dualizerof[\la_x]^{\dD} \mid \la_y \in \domain(\witnessof{\dD_y})
			}
		}
	\\[20pt]
		%
		%pop
		\cnof{
			\vlderivation{
				\vlin{\nqpopr}{}{\sdash[\sS, \isna y] \Gamma, \lnNa xA}{
					\vlpr{\dD_y}{}{\sdash \Gamma, A\fsubst yx}
				}
			}
		}
		=
		\Tuple{
			\canon{\set{ x, \isna y} \conc \treeof{\dD_x} }
		, 
			\Set{\set{ x, \isna y}\mapsto \fsubst{y}{x}}
			\cup
			\Set{
				\la_x \mapsto \dualizerof[\la_x]^{\dD} \mid \la_y \in \domain(\witnessof{\dD_y})
			}
		}
	\end{array}$}

	\caption{
		Translation of a derivation in $\PIL$ into a conflict net,
		with 
		$\rrule[1]\in\set{\lpar,\oplus,\forall,\nurule,\yurule,\nuloadr,\yaloadr}$,
		and
		$\rrule[2]\in\set{\ltens,\lprec}$, 
		and
		where we denote by
		$\dD_x $ \resp{$\la_x$} the derivation \resp{link} obtained from $\dD_y$ \resp{$\la_y$} by replacing all occurrences of the variable $y$ involved in the rule $\rrule\in\set{\exists,\napopr}$ with $x$ and where $\dualizerof[\la_x]^{\dD} =\fsubst{y}{x} \dualizerof[\la_y]^{\dD_y}$ if $\la_x\neq\la_y$ and $\dualizerof[\la_y]^{\dD_y}$ otherwise.
	}
	\label{fig:deseq}
\end{figure}
%%%%%%%%%%%%%%%%%%%%%%%%%%%%%%%%%%%%%%%%%%%%%%%%%%%%%%%%%%%%%%%%
%%%%%%%%%%%%%%%%%%%%%%%%%%%%%%%%%%%%%%%%%%%%%%%%%%%%%%%%%%%%%%%%

To prove the adequacy of conflict nets with respect to $\PIL$, it now suffices to provide a sequentialization procedure associating to a given \conet for $\Gamma$ a derivation in $\PIL$ with conclusion $\Gamma$.

%%%%%%%%%%%%%%%%%%%%%%%%%%%%%%%%%%%%%%%%%%%%%%%%%%%%%%%%%%%%%%%%
\begin{restatable}{theorem}{thmCNsoundAndComplete}\label{thm:CNsoundAndComplete}
	There is a \conet $\tuple{\linktree,\dualizerof}$ for $\Gamma$
	iff $\proves[\PIL]\Gamma$.
\end{restatable}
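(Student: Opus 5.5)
The direction from right to left is exactly \Cref{lem:confTranslation}: given $\proves[\PIL]\Gamma$ we take $\cnof{\dD}$. The remaining work is the converse, a sequentialization. I would prove a stronger statement by induction: every pre-structure $\tuple{\linktree,\dualizerof^{\linking}}$ reachable from a proof structure by coalescence, and itself coalescent, sequentializes leaf by leaf. Concretely, to each leaf $\la$ of a coalescent pre-structure on $\sdash\Gamma$ I associate a judgement $\sdash[\sS_{\la}]\la\,\dualizerof[\la]$. Here $\la$ is read as a sub-sequent, the dualizer instantiates the not-yet-introduced existential and nominal witnesses, and the store $\sS_{\la}$ collects the nominal variables of nominal links popped by $\la$. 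I then show, by induction on the length of a coalescing sequence, that for every leaf this judgement has a $\PIL$ derivation. In the final trivial pre-structure the single leaf is the whole $\Gamma$ with empty dualizer and empty store, which yields $\proves[\PIL]\Gamma$.

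For the base case, the leaves of a proof structure are axiomatic and the witness map is valid. A link $\set{\lsend xy,\lrecv zt}$ under its dualizer becomes an instance of $\axrule$. A link $\set\lunit$ is the rule $\lunit$. An axiomatic nominal link $\set{x,y}$ is not a sequent and contributes no derivation by itself. It only records that the $\cneg\nabla$ bound by $y$ will be popped against the $\nabla$ binding $x$, and it is consumed by a $\napopr$ step.

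The inductive step is a case analysis on the coalescence steps of \Cref{fig:coalescence}, read as inference rules applied to the derivations of the leaves involved. The unary steps $\lpar,\lplus,\forall,\naloadr,\naur$ map to the rules of the same name, and their side conditions give the eigenvariable condition $\dagger$. For $\forall$, the condition $x\notin\domof{\dualizerof[\la]}$ ensures that $x$ was never instantiated. For $\naloadr$, I track the nominal variable $\isna x$ in the store of the new leaf. The $\exists$ step removes $x$ from the dualizer, so the premise is the old judgement $\la\dualizerof[\la]$, in which $x$ is replaced by $\dualizerof[\la](x)$. This is exactly an $\exists$ rule with that witness, or with a fresh variable if $x$ is not in the domain.

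The binary $\conc$-steps are handled as follows. For $\ltens$ and $\lprec$, the coherence condition $\newcohe{\dualizerof[\la]}{\dualizerof[\lb]}$ and the join $\dualizerof[\la]\join\dualizerof[\lb]$ let me apply the rule to the two derivations. Since the two dualizers agree on shared variables, applying the join changes neither premise. Disjointness of the stores follows from cleanness, because each nominal variable is loaded once. The $\napopr$ step combines a leaf with a nominal link into $\nupopr$ or $\yapopr$. The $\lwith$ step uses $\newcohe{\dualizerof[\la]}{\dualizerof[\lb]}$ so that both premises share the context $C_1,\dots,C_n$ under the same instantiation. The $\conf$-splitting step and the $\bullet$-collapse steps do not change the set of derivable leaves.

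The main obstacle is substitution bookkeeping for the quantifiers. The difficulty is showing that the judgement attached to a leaf, after applying its dualizer, is genuinely a $\PIL$ judgement with the correct store. Specifically, an $\exists$-witness or popped nominal variable introduced in one branch must not clash with the eigenvariable condition of a later $\forall$ or $\naloadr$ step. I would first prove an auxiliary lemma: along any coalescence sequence, each dualizer's image avoids variables bound by $\forall$/$\nabla$ quantifiers not yet coalesced, except those linked through nominal links. This follows from the side conditions and the uniqueness of the dual nominal pair from cleanness. If that lemma proves awkward, I would fall back on renaming eigenvariables in the constructed derivations, using $\alpha$-equivalence.
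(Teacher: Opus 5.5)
Your architecture is the paper's: induct along a coalescence path, label each non-nominal leaf $\la$ by a derivation of its dualized judgement, and read each coalescence step as the corresponding rule (the deductive \cotrees of \Cref{fig:coacot}). The gap is in how you discharge the eigenvariable condition $\dagger$ at the $\forall$, $\naur$ and $\naloadr$ steps.

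Your auxiliary lemma is false from the outset. On $\vdash \lFa y{\lsend yz}, \lEx x{\lrecv xz}$ the only valid dualizer of the unique link is $\fsubst yx$, whose image contains the $\forall$-bound $y$. This is unavoidable, since existential witnesses are typically eigenvariables.

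The stated side conditions do not help either. The condition $x\notin\domof{\dualizerof[\la]}$ is vacuous for a $\forall$-bound $x$, because dualizer domains contain only $\exists$- and nominally bound variables. So the path that applies the $\forall$-step before the $\exists$-step is legal. Your construction then produces a $\forall$-rule with premise $\vdash \lsend yz, \lrecv yz$, where $y$ is free in the context. Renaming the eigenvariable does not repair this, since validity forces the $\exists$-witness to be that very variable.

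Worse, read literally, the criterion coalesces the unprovable sequent $\vdash \lEx y{\lFa x{\lsend xy}}, \lEx{x'}{\lFa{y'}{\lrecv{x'}{y'}}}$, which is the invalid quantifier swap $\forall\exists\Rightarrow\exists\forall$. Take the link $\set{\lsend xy,\lrecv{x'}{y'}}$ with the valid dualizer $\fsubsts{x/x',y'/y}$. Both $\forall$-steps apply, because syntactically $x\notin\freeof{\lrecv{x'}{y'}}$ and $y'\notin\freeof{\lFa x{\lsend xy}}$. Then both $\exists$-steps empty the dualizer. Hence no lemma derived from the stated side conditions, and no renaming, can close your argument.

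What is missing is a stronger side condition. The paper's own sketch relies on it tacitly when it asserts that the side conditions ensure the rules apply. A $\forall$-, $\naur$- or $\naloadr$-step on $x$ must require that $x$ does not occur in the image of $\dualizerof[\la]$. Then $x\notin\freeof{B_i\dualizerof[\la]}$ for the other formulas $B_i$ of the link, and $\dagger$ holds for the dualized premise outright. This does not hurt \Cref{lem:confTranslation}, since in a clean derivation no $\exists$-witness below a $\forall x$ rule can be $x$. With that condition, your case analysis goes through.

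One smaller inaccuracy concerns the $\ltens$/$\lprec$ cases. There the join can change a premise, namely when a variable of $\domof{\dualizerof[\lb]}\setminus\domof{\dualizerof[\la]}$ occurs in $\la$. In those cases you also need closure of derivations under substitution of free variables.
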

\begin{proof}
	The right-to-left implication is consequence of \Cref{lem:confTranslation}.
	
	To prove the other implication, we provide a sequentialization procedure for \conets to construct a derivation in $\PIL$ from a given \conet for $\Gamma$ by induction on the length $n$ of a coalescence path for the \conet, which exists by definition of coalescent \cotree with witnesses.
	We define \emph{deductive \cotrees} pre-structures where leaves which are sequents (i.e., not nominal links) are labeled by derivations with conclusion of a judgement for the given sequent.
	
	If $n=0$, then the \conet is trivial.
	The only possible cases are if either $\Gamma=\lunit$ or $\Gamma=\lsend xy, \lrecv xy$ for some $x,y\in\varset$ since $
	\dualizerof[\Gamma]$ must be valid and empty.
	In both cases, we can easily construct a derivation in $\PIL$ with conclusion $\Gamma$ by applying the rules $\lunit$ or $\leafr$ respectively.

	If $n>0$, then we reason by cases on the last coalescence step in the coalescence path as shown in \Cref{fig:coacot}, reminding the reader that the coalescence steps defined `top-down' operating on leaves.
	In particular, each coalescence step corresponds to a rule of $\PIL$, the side conditions for the coalescence steps ensure that the corresponding rule of $\PIL$ can be applied, and the new leaf produced by the coalescence step is labeled by a derivation in $\PIL$ obtained by applying the corresponding rule to the derivations labeling the leaves which are coalesced.
\end{proof}
%%%%%%%%%%%%%%%%%%%%%%%%%%%%%%%%%%%%%%%%%%%%%%%%%%%%%%%%%%%%%%%%

%%%%%%%%%%%%%%%%%%%%%%%%%%%%%%%%%%%%%%%%%%%%%%%%%%%%%%%%%%%%%%%%
% FIG sequentialization
%%%%%%%%%%%%%%%%%%%%%%%%%%%%%%%%%%%%%%%%%%%%%%%%%%%%%%%%%%%%%%%%
\begin{figure*}[t]
	\centering\adjustbox{max width=\textwidth}{$\begin{array}{c}
		\begin{array}{c|c}
		\begin{array}[t]{c|c|c}
			\dD_{\la}& \mbox{step} & \dD_{\lc}
		\\\hline
			%Par
		 	\vlderivation{
		 		\vlpr{\pi}{}{\dualizerof[{\la}](\sdash A, B, \Gamma)}
	 		}
		 	&\lcoalto{\lpar}&
		 	\vlderivation{
		 		\vlin{\lpar}{}{
					\dualizerof[{\lc}](\sdash A \lpar B, \Gamma)
				}{
		 			\vlpr{\pi}{}{
						\dualizerof[{\la}](\sdash A, B, \Gamma)
					}
	 			}
		 	}
 		\\
			%Plus
	 		\vlderivation{
				\vlpr{\pi }{}{\dualizerof[{\la}](\sdash A_i, \Gamma)}
			}
	 		&\lcoalto{\oplus}&
	 		\vlderivation{
	 			\vlin{\lpar}{}{
					\dualizerof[{\lc}](\sdash A_1 \oplus A_2, \Gamma)
				}{
	 				\vlpr{\pi }{}{\dualizerof[{\la}](\sdash A_i , \Gamma)}
	 			}
	 		}
 		\\
 			%forall
 			\vlderivation{
 				\vlpr{\pi}{}{\dualizerof[{\la}](\sdash A , \Gamma)}
 			}
 			&\lcoalto{\forall}&
 			\vlderivation{
				\vlin{\forall}{}{
					\dualizerof[{\lc}](\sdash \lFa xA , \Gamma)
 				}{
					\vlpr{\pi}{}{\dualizerof[{\la}](\sdash A , \Gamma)}
				}
 			}
	   \\
			%exists
			\vlderivation{
				\vlpr{\pi}{}{\dualizerof[{\la}](\sdash A\fsubst yx , \Gamma)}
			}
			&\lcoalto{\exists}&
			\vlderivation{
				\vlin{\exists}{}{
						\dualizerof[{\lc}](\sdash \lEx xA, \Gamma)
					}{
					\vlpr{\pi}{}{
						\dualizerof[{\la}](\sdash A\fsubst yx , \Gamma)
					}
				}
			}
		\end{array}
		&
\begin{array}[t]{c|c|c}
			\dD_{\la}& \mbox{step} & \dD_{\lc}
		\\\hline
			%nuur
			\vlderivation{
				\vlpr{\pi}{}{\dualizerof[{\la}](\sdash A , \Gamma)}
			}
			&\lcoalto{\naur}&
			\vlderivation{
			   \vlin{\naur}{}{
				   \dualizerof[{\lc}](\sdash \lNa xA , \Gamma)
				}{
				   \vlpr{\pi}{}{\dualizerof[{\la}](\sdash A , \Gamma)}
			   }
			}
	   \\
 			%nupop
 			\vlderivation{
 				\vlpr{\pi}{}{\dualizerof[{\la}](\sdash A\fsubst yx , \Gamma)}
 			}
 			&\lcoalto{\napopr}&
 			\vlderivation{
				\vlin{\napopr}{}{
 					 \dualizerof[{\lc}]( \sdash[\sS,\isna x]  \lnNa xA, \Gamma)
 				}{
					\vlpr{\pi}{}{\dualizerof[{\la}](\sdash A\fsubst yx, \Gamma)}
				}
 			}
 		\\
			%Load
			\vlderivation{
				\vlpr{\pi}{}{\dualizerof[{\la}](\sdash[\sS,\isna x] A , B, \Gamma)}
			}
			&\lcoalto{\naloadr}&
			\vlderivation{\vlin{\naloadr}{}{
					\dualizerof[{\lc}](\sdash \lNa xA , \Gamma)
				}{
					\vlpr{\pi}{}{
						\dualizerof[{\la}](\sdash[\sS,\isna x] A , \Gamma)
					}
				}
			}
		\\
		\end{array}
		\end{array}
	\\[25pt]\\
		\begin{array}[t]{c|c|c|c}
			\dD_{\la}& \dD_{\lb}&  \mbox{step} & \dD_{\lc}
		\\\hline
		%tensor
			\vlderivation{\vlpr{\pi_1}{}{\dualizerof[{\la}](\sdash[\sS_1] A, \Gamma)}}
		&
			\vlderivation{\vlpr{\pi_2}{}{\dualizerof[{\lb}](\sdash[\sS_2] B, \Delta)}}
		&\lcoalto{\ltens}&
			\vlderivation{
				\vliin{\ltens}{}{
					\dualizerof[{\lc}](\sdash[\sS_1,\sS_2]A \ltens B, \Gamma,\Delta)
				}{
					\vlpr{\pi_1}{}{\dualizerof[{\la}](\sdash[\sS_1]A, \Gamma)}
				}{
					\vlpr{\pi_2}{}{\dualizerof[{\lb}](\sdash[\sS_2]B, \Delta)}
				}
			}
		\\
		%with
			\vlderivation{\vlpr{\pi_1}{}{\dualizerof[{\la}](\sdash[\sS]A, \Gamma)}}
			&
			\vlderivation{\vlpr{\pi_2}{}{\dualizerof[{\lb}](\sdash[\sS]B, \Delta)}}
			&\lcoalto{\lwith}&
			\vlderivation{
				\vliin{\lwith}{}{
					\dualizerof[{\lc}](\sdash[\sS] A \lwith B, \Gamma,\Delta)
				}{
					\vlpr{\pi_1}{}{\dualizerof[{\la}](\sdash[\sS]A, \Gamma)}
				}{
					\vlpr{\pi_2}{}{\dualizerof[{\lb}](\sdash[\sS]B, \Delta)}
				}
			}
		\\
			\vlderivation{\vlpr{\pi_1}{}{\dualizerof[{\la}](\sdash[\sS_1]A_1,\dots, A_n, \Gamma)}}
			&
			\vlderivation{\vlpr{\pi_2}{}{\dualizerof[{\lb}](\sdash[\sS_2]B_1, \dots, B_n, \Delta)}}
			&\lcoalto{\lprec}&
			\vlderivation{
				\vliin{\lprec}{}{
					\dualizerof[{\lc}](\sdash[\sS_1,\sS_2]A_1 \lprec B_1, \dots, A_n \lprec B_n, \Gamma,\Delta)
				}{
					\vlpr{\pi_1}{}{\dualizerof[{\la}](\sdash[\sS_1]A_1,\dots, A_n, \Gamma)}
				}{
					\vlpr{\pi_2}{}{\dualizerof[{\lb}](\sdash[\sS_2]B_1, \dots, B_n, \Delta)}
				}
			}
		\\
		\end{array}
	\end{array}$}
	\caption{
		Effect of coalescence steps in \Cref{fig:coalescence} on \cotrees with leaves labeled by derivations.
		The steps $\bullet$ and $\conf$ change no link labels. 
	}
	\label{fig:coacot}
\end{figure*}
%%%%%%%%%%%%%%%%%%%%%%%%%%%%%%%%%%%%%%%%%%%%%%%%%%%%%%%%%%%%%%%%
% END FIG sequentialization
%%%%%%%%%%%%%%%%%%%%%%%%%%%%%%%%%%%%%%%%%%%%%%%%%%%%%%%%%%%%%%%%

The \defn{size} of a \conet $\tuple{\linktree,\dualizerof^{\linking}}$ for a sequent $\Gamma$ is the number $\sizeof{\linktree}$ of nodes in the \cotree $\linktree$ plus the size of the sequent $\Gamma$.
%%%%%%%%%%%%%%%%%%%%%%%%%%%%%%%%%%%%%%%%%%%%%%%%%%%%%%%%%%%%%%%%
\begin{proposition}\label{prop:coal}
	It is possible to check if a \cotree with witnesses is coalescent in polynomial time with respect to its size.
\end{proposition}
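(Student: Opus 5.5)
We decide coalescence by a greedy search, proving that the order of steps does not matter. The usual argument for MALL conflict nets \cite{hei:hug:conflict} has two ingredients. First, coalescence terminates within a number of steps polynomial in the size. Second, it has a confluence property, so that reaching a trivial pre-structure does not depend on which available step is fired first. With these, the algorithm simply fires any enabled step until none applies, and then checks whether the result is trivial.

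For termination we define a measure on pre-structures: the number of tree nodes, plus, for each sequent leaf $\la$, the number of formula-tree nodes of $\Gamma$ lying strictly below the roots of $\la$. We then check each step in \Cref{fig:coalescence}.
\begin{itemize}
\item The unary steps ($\lpar,\lplus,\forall,\exists,\naur,\naloadr$) replace the subformulas in $\la$ by their parent connective, so the second component strictly decreases.
\item The binary steps ($\ltens,\lprec,\napopr,\lwith$) merge two leaves into one, so the leaf count decreases.
\item The $\bullet$ step deletes a node.
\item The $\conf$ step adds nodes, but it is applied at a $\conf$-node only along a partition induced by some $A\lwith B$ in $\Gamma$. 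It can therefore be fired at most $O(|\Gamma|\cdot|\linktree|)$ times; equivalently, we may restrict to canonical trees and treat $\conf$-splitting together with the following $\lwith$ as one macro-step.
\end{itemize}
Hence every coalescence sequence has length polynomial in the size. Each step is recognized and executed in polynomial time: we scan pairs of sibling leaves and a connective of $\Gamma$, test side conditions such as freeness, membership of a variable in $\domof{\dualizerof[\la]}$, and coherence $\newcohe{\dualizerof[\la]}{\dualizerof[\lb]}$, and compute joins $\dualizerof[\la]\join\dualizerof[\lb]$ of variable-to-variable substitutions in time linear in their domains. The dualizers stay of size bounded by the number of variables of $\Gamma$.

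For confluence we aim for a diamond-style lemma: if $P\coalto[r_1]P_1$ and $P\coalto[r_2]P_2$, then either $P_1=P_2$ or there is $Q$ with $P_1\coaltos Q$ and $P_2\coaltos Q$. Combined with termination, this shows that either every maximal sequence ends in a trivial pre-structure or none does.

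Steps acting on disjoint leaves commute directly. We must also check that firing a step does not disable another one. Side conditions only mention the formulas of the involved leaves and the presence of a nominal link $\set{x,y}$. Merging links only enlarges freeness contexts within a new leaf, and that leaf is then a distinct redex. Dualizer updates preserve coherence because $\join$ and $\fsubminus{x}$ are monotone.

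The main obstacle is the interaction of the quantifier steps with dualizers and nominal links. The $\forall$-step requires $x\notin\domof{\dualizerof[\la]}$, and the $\naur$-step requires $\set{x,y}\notin\linktree$. Both could, in principle, depend on whether an $\exists$- or $\napopr$-step on the same leaf has already been done. A choice between $\naur$ and $\naloadr$ on the same quantifier may also seem to create nondeterminism. We would first try to show that this choice is forced: $\naloadr$ is needed exactly when a matching nominal link exists, and only then can $\napopr$ consume it. So at most one of the two leads to a trivial result, and the algorithm can decide which by inspecting $\linktree$. If full confluence fails in some corner case, the fallback is a priority discipline: fire unary and $\bullet$ steps eagerly, and binary steps only at the lowest nodes. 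Following \cite{hei:hug:conflict}, we would then argue that any successful sequence can be reordered to respect this discipline, again using the permutability of independent steps.
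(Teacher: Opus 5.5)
\textbf{Gap: coalescence is not confluent here, even on coalescent structures.} Your diamond lemma is false even when restricted to coalescent pre-structures. The reason is that the $\lprec$-step leaves a choice of which $\lprec$-formulas joining two concord-sibling leaves are coalesced at once. This includes $k=0$, i.e.\ mix, which must be allowed since the sequent rule has $n\geq 0$.

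Take $\Gamma = A_1\lprec B_1, A_2\lprec B_2$ with $A_1=B_2=\lsend xy$ and $A_2=B_1=\lrecv xy$. Take the \cotree $a\conc b$ with $a=\set{A_1,A_2}$, $b=\set{B_1,B_2}$ and empty dualizers.
\begin{itemize}
\item It is a \conet: it is the translation of a single $\lprec$-rule with $n=2$ over two axioms, and the $\lprec$-step with $k=2$ yields the leaf $\Gamma$.
\item The $\lprec$-step with $k=1$, on $A_1\lprec B_1$ alone, is also enabled. After a $\bullet$-step it gives the single leaf $\set{A_1\lprec B_1, A_2, B_2}$, which is stuck and is not $\Gamma$. $A_2$ and $B_2$ now sit in the same leaf, and no coalescence step ever separates the contents of a leaf.
\item The mix step, giving $\set{A_1,A_2,B_1,B_2}$, is an equally dead end.
\end{itemize}
So two enabled steps from a coalescent structure have no common reduct, and ``fire any enabled step'' can reject a correct net. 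Your fallback discipline does not help: both competing steps are binary steps on the same pair of leaves at the same node. The paper itself remarks that coalescence is not confluent in general; its proof instead reuses the Hughes--Heijltjes algorithm and only bounds the extra cost of the $\lprec$-steps, side conditions and dualizer operations.

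\textbf{What is missing.} You need a restriction making concord merges canonical, plus a proof that the restricted system commutes on coalescent structures. For instance: never fire a binary step whose resulting leaf contains subformulas of both immediate subformulas of some $\ltens$- or $\lprec$-formula not coalesced by that very step. Such a leaf is irrecoverably dead. This rule forces $k$ to be maximal, forbids mixing leaves joined by a pending $\ltens$ or $\lprec$, and is checkable in polynomial time. You then have to redo your commutation analysis for the restricted steps. Relatedly, recognizing a $\lprec$-step means selecting a \emph{set} of $\lprec$-formulas, not scanning ``a connective of $\Gamma$''.

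\textbf{Two smaller points.}
\begin{itemize}
\item A $\conf$-step with $h=k=1$ followed by two $\bullet$-steps returns to the starting tree. So unrestricted coalescence does not terminate, and your $O(|\Gamma|\cdot|\linktree|)$ bound holds only for your macro-step variant.
\item The choice between $\naur$ and $\naloadr$ is not a branching at all: both produce the same pre-structure (the link $c$ with $\dualizerof[c]=\dualizerof[a]$). Your claim that at most one of them succeeds is therefore wrong, but it is also not needed.
\end{itemize}
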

\begin{proof}
	The result follows from the same argument (and algorithms) used in \cite{hei:hug:conflict} in the proof of the similar result for $\MALL$, where the complexity is $\mathcal O(n^4)$.

	The new multiplicative coalescence steps (the ones involving the $\lprec$) are as complex as the $\ltens$. Note that the additional side condition on the $\ltens$ and $\lprec$ steps can be checked in linear time with respect to the size of the formula (tree).
	The additional manipulations of the dualizers are all linear in the size of the dualizer (see \cite{mar:mon:unification}), thus linear in the size of the formula.
	Thus, complexity has an upper bound of $\mathcal O(n^5)$.
\end{proof}
%%%%%%%%%%%%%%%%%%%%%%%%%%%%%%%%%%%%%%%%%%%%%%%%%%%%%%%%%%%%%%%%

%%%%%%%%%%%%%%%%%%%%%%%%%%%%%%%%%%%%%%%%%%%%%%%%%%%%%%%%%%%%%%%%
\begin{example}
	Consider the sequent $\Gamma=\lsend ab\lwith \lsend ab, \lrecv ab \ltens \lNu x\lsend xz, \lYa y\lrecv yz$ and the following coalescence path:
	\\
	\noindent\adjustbox{max width=\textwidth}{$
		\begin{array}{ccccccc}
			\vpz1{\la}
			&
			\vpz8{\lnom}
			&
			\vpz5{\lc}
			&&
			\vpz2{\lb}
			&
			\vpz7{\lc}
			&
			\vpz9{\lnom}
			\\[10pt]
			&\vpz3{\conc}&
			&&
			&\vpz4{\conc}
			\\
			&&&\vpz0{\conf}
		\end{array}
		\multiGedges{pz0}{pz3,pz4}
		\multiGedges{pz3}{pz1,pz5,pz8}
		\multiGedges{pz4}{pz2,pz7,pz9}
		\coalto[\nupopr+\yapopr]
		\begin{array}{c@{}c@{}c}
			\vpz1{\la}
			\quad
			\vpz5{\lcnu}
			&&
			\vpz2{\lb}
			\quad
			\vpz9{\lcya}
		\\[10pt]
			\vpz3{\conc}
			&&
			\vpz4{\conc}
			\\
			&\vpz0{\conf}
		\end{array}
		\multiGedges{pz0}{pz3,pz4}
		\multiGedges{pz3}{pz1,pz5}
		\multiGedges{pz4}{pz2,pz9}
		\coalto[\nuloadr+\yaloadr]
		\begin{array}{ccc}
			\vpz1{\la}
		\quad
			\vpz5{\lcdue}
			&&
			\vpz2{\lb}
		\quad
			\vpz9{\lcdue}
			\\[10pt]
			\vpz3{\conc}
			&&
			\vpz4{\conc}
			\\
			&\vpz0{\conf}
		\end{array}
		\multiGedges{pz0}{pz3,pz4}
		\multiGedges{pz3}{pz1,pz5}
		\multiGedges{pz4}{pz2,pz9}
	\coalto[2\times\ltens]
		\begin{array}{c}
			\vpz3{\laprimo}
		\qquad
			\vpz4{\lbprimo}
		\\
			\vpz0{\conf}
		\end{array}
		\multiGedges{pz0}{pz3,pz4}
	\coalto[\lwith + \bullet]
		\Gamma
	$}
	\\
	where we are considering the links on $\Gamma$ and on two judgements appearing during sequentialization:

	$$
		\vdash
		\vpz8{\lsend ab} \!\vlwith1\! \vpz9{\lsend ab}
		,\!
		\vpz1{\lrecv ab} 
		\vltens1 
		\vNu1{\vx1}{\vpz2{\lsend xz}}
		,
		\vYa1{\vy1}\vpz3{\lrecv yz}
		\pzlinks{pz8.75/pz1.105/4/\la/red/}
		\pzlinks{pz9.-75/pz1.-105/-4/\lb/blue/}
		\pzlinks{pz2/pz3/24/\lc/violet/}
		\pzlinks{Nu1.-75/Ya1.-105/-6/\lcdue/violet/}
		\pzlinks{x1/y1/20/\lnom/skewcolor/}
		\pzlinks{pz8/Ya1/16/\laprimo/red/{Nu1,pz1}}
		\pzlinks{pz9/Ya1/-16/\lbprimo/blue/{Nu1,pz1}}
	\quad
		\begin{array}{c}
			\vpz0{\isnu x}
			\vdash
			\vpz3{\lsend xz}
			\;,\;
			\vYa1{\vy1}\vpz4{\lrecv yz}
			\pzlinks{pz3/Ya1/12/\lcnu/skewcolor/}
		\\
			\vpz2{\isya y}
			\vdash
			\vNu1{\vx1}{\vpz3{\lsend xz}}
			\;,\;
			\vpz4{\lrecv yz}
			\pzlinks{Nu1/pz4/-12/\lcya/skewcolor/}
		\end{array}
	$$

	\noindent with 
	$\dualizerof[\la]=\dualizerof[\lb]=\dualizerof[\laprimo]=\dualizerof[\lbprimo]=\dualizerof[\lcdue]=\dualizerof[\emptyset]$
	and 
	$\dualizerof[\lnom]=\dualizerof[\lc]=\dualizerof[\lcnu]=\dualizerof[\lcya]=\fsubst{x}{y}$.

	The derivation we obtain by sequentialization is the following:
	\begin{equation}\label{eq:coalExample}
	\adjustbox{max width=.92\textwidth}{$\vlderivation{
		\vliin{\lwith}{}{
			\vdash \lsend ab \with \lsend ab, \lrecv ab \ltens \lNu{\vx1}{\lsend xz}, \vYa1{\vy1}\lrecv yz
		}{
			\vliin{\ltens}{}{
				\vdash \lsend ab, \lrecv ab \ltens \lNu{\vx1}{\lsend xz}, \vYa1{\vy1}\lrecv yz
			}{
				\vlin{\axrule}{}{\vdash\lsend ab, \lrecv ab }{\vlhy{}}
			}{
				\vlin{\nuloadr}{}{
					\vdash \lNu{\vx1}{\lsend xz}, \vYa1{\vy1}\lrecv yz
				}{
					\vlin{\nupopr}{}{
						\isnu y \vdash \lNu{\vx1}{\lsend xz}, \lrecv yz
					}{
						\vlin{\axrule}{}{
							\vdash \lsend yz , \lrecv yz
						}{\vlhy{}}
					}
				}
			}
		}{
			\vliin{\ltens}{}{
				\vdash \lsend ab, \lrecv ab \ltens \lNu{\vx1}{\lsend xz}, \vYa1{\vy1}\lrecv yz
			}{
				\vlin{\axrule}{}{\vdash\lsend ab, \lrecv ab }{\vlhy{}}
			}{
				\vlin{\yaloadr}{}{
					\vdash \lNu{\vx1}{\lsend xz}, \vYa1{\vy1}\lrecv yz
				}{
					\vlin{\yapopr}{}{
						\isya x \vdash \lsend xz, \lYa y \lrecv yz
					}{
						\vlin{\axrule}{}{
							\vdash \lsend xz , \lrecv xz
						}{\vlhy{}}
					}
				}
			}
		}
	}$}
	\end{equation}
	Its sub-derivations are associated to the links we observe in the coalescence path:
	\begin{itemize}
		\item the derivations above the premises of the $\lwith$-rule are the derivations associated to the links $\laprimo$ and $\lbprimo$;
		\item the derivation above the left premise of the left \resp{right} $\ltens$-rule is the derivations associated to the link $\la$ \resp{$\lb$}, and the ones above the right premises are the derivation(s) associated to the link(s) $\lcdue$;
		\item the derivations above the premise of the $\naloadr$-rule is the derivation associated to the link $\lcnu$, and the derivation above the premise of the $\yaloadr$-rule is the derivation associated to the link $\lcya$;
		\item the derivations above the premises of the $\nupopr$-rule and the $\yapopr$-rule are the derivations associated to the links $\lcnu$ and $\lcya$ respectively.
	\end{itemize}
\end{example}
%%%%%%%%%%%%%%%%%%%%%%%%%%%%%%%%%%%%%%%%%%%%%%%%%%%%%%%%%%%%%%%%

%%%%%%%%%%%%%%%%%%%%%%%%%%%%%%%%%%%%%%%%%%%%%%%%%%%%%%%%%%%%%%%%
%%%%%%%%%%%%%%%%%%%%%%%%%%%%%%%%%%%%%%%%%%%%%%%%%%%%%%%%%%%%%%%%
%%%%%%%%%%%%%%%%%%%%%%%%%%%%%%%%%%%%%%%%%%%%%%%%%%%%%%%%%%%%%%%%
\section{Slice nets}\label{sec:SN}
\def\flatto{\Rightarrow}
\def\miniflatto{\Rightarrow_{\mathsf{small}}}
\def\flattos{\Rightarrow^*}
\def\flatof#1{[#1]_{\mathsf{flat}}}
%%%%%%%%%%%%%%%%%%%%%%%%%%%%%%%%%%%%%%%%%%%%%%%%%%%%%%%%%%%%%%%%
%%%%%%%%%%%%%%%%%%%%%%%%%%%%%%%%%%%%%%%%%%%%%%%%%%%%%%%%%%%%%%%%
%%%%%%%%%%%%%%%%%%%%%%%%%%%%%%%%%%%%%%%%%%%%%%%%%%%%%%%%%%%%%%%%

In this section, we define \emph{\slnet} as special \conet whose underlying \cotree is `flat', i.e., it has at most a single $\conf$-node at its root.
While correctness of \slnets follows from the result on \conets, to prove the converse we introduce a procedure allowing us to transform any \conet into a \slnet.

%%%%%%%%%%%%%%%%%%%%%%%%%%%%%%%%%%%%%%%%%%%%%%%%%%%%%%%%%%%%%%%%
\begin{definition}
	A \defn{\slnet} for a sequent $\Gamma$ is a \conet $\tuple{\linktree,\dualizerof^{\linking}}$ (for $\Gamma$) such that the underlying \cotree $\linktree$ has at most a single $\conf$-node at its root.
\end{definition}
%%%%%%%%%%%%%%%%%%%%%%%%%%%%%%%%%%%%%%%%%%%%%%%%%%%%%%%%%%%%%%%%

The flattening procedure, seen on \cotrees, consists of moving $\conc$-nodes above $\conf$-nodes in a \cotree.
This is done by moving all children of a $\conc$-node except one of its $\conf$-child above this latter, by replacing each child of the selected $\conf$-node with a $\conc$-node having as children the selected $\conf$-child and all the children of $\conc$-node we want to move up.
Formally, this is done by applying the following rewriting rule on \cotrees.
%%%%%%%%%%%%%%%%%%%%%%%%%%%%%%%%%%%%%%%%%%%%%%%%%%%%%%%%%%%%%%%%
\begin{definition}
	A \defn{flattening step} on \cotrees is defined by the following rewriting rule on graphs
	\begin{equation}\label{eq:flattening}
		\begin{array}{cccccc}
			\vpz1{\tau_{1}}
			&\ldots&
			\vpz2{\tau_{n}}
			\\[10pt]
			&
			\vpz3{\conf} && \vpz4{\tau_1'} 
			& \ldots & 
			\vpz5{\tau_m'} 
		\\[10pt]
			&&&\vpz6{\conc}
		\\[10pt]
			&&&
			\vpz{r}{}
		\\[-15pt]
		\end{array}
		\multiGedges{pz3}{pz1,pz2}
		\multiGedges{pz6}{pz3,pz4,pz5}
		\dGedges{pz6/pzr}
	\qquad\flatto\qquad
		\canon{
			\begin{array}{ccccccccc} 
				\vpz1{{\tau_{1}}} 
				& 
				\vpz2{\tau_1'} 
			& \dots &
				\vpz3{\tau_m'}
				&&
				\vpz4{\tau_n}
				&
				\vpz5{\tau_1'}
			&\dots & 
				\vpz6{\tau_m'} 
			\\[10pt] 
				& \vpz7{\conc} &
			&& \ldots &&
				\vpz8{\conc}
			\\[10pt]
			&&&&
				\vpz9{\conf}
			\\[10pt]
			&&&&
				\vpz{r}{}
			\\[-15pt]
			\end{array}
			\multiGedges{pz7}{pz1,pz2,pz3}
			\multiGedges{pz8}{pz4,pz5,pz6}
			\multiGedges{pz9}{pz7,pz8}
			\dGedges{pz9/pzr}
		}
	\end{equation}
	where $\tau_1,\ldots,\tau_n,\tau'_1,\ldots,\tau'_m$ are \cotrees, and where the dotted edge represent, when it exists, the edge connecting root of the subtree on which we apply the flattening step to its parent.
	Flattening extends to \conets by applying the flattening step on the underlying \cotree, and preserving the dualizer for each of the copied link.

	We call \defn{flattening} the rewriting relation defined by the flattening steps.
\end{definition}
%%%%%%%%%%%%%%%%%%%%%%%%%%%%%%%%%%%%%%%%%%%%%%%%%%%%%%%%%%%%%%%%

We then prove that the normal forms of flattening are unique, and that flattening preserves coalescence.

%%%%%%%%%%%%%%%%%%%%%%%%%%%%%%%%%%%%%%%%%%%%%%%%%%%%%%%%%%%%%%%%
%FIG: Local confluence of flattening
%%%%%%%%%%%%%%%%%%%%%%%%%%%%%%%%%%%%%%%%%%%%%%%%%%%%%%%%%%%%%%%%
\begin{figure}[t]
	$$\adjustbox{max width=\textwidth}{$
		\begin{array}{ccc}
			\begin{array}{ccccccc}
			\vpz1{\tau^1_{1}}
			&\ldots&
			\vpz2{\tau^1_{n}}
			&&
			\vpz3{\tau^2_{1}}
			&\ldots&
			\vpz4{\tau^2_{m}}
			\\[10pt]
			&
			\vpz5{\conf_1} &  
			&&&
			\vpz6{\conf_2} 
		\\[10pt]
			&&&\vpz7{\conc}
		\end{array}
		\Gedges{
			pz1/pz5,pz2/pz5,pz3/pz6,pz4/pz6,
			pz5/pz7,pz6/pz7%
		}
		&\flatto&
		\canon{
			\begin{array}{ccccccccc}
				&\vpz1{\tau^2_{1}} & \dots & \vpz2{\tau^2_{m}}
				&&&
				\vpz3{\tau^2_{1}} & \dots & \vpz4{\tau^2_{m}}
			\\[10pt]
				\vpz5{{\tau^1_{1}}} 
				&& \vpz6{\conf_2} &
				&&
				\vpz7{\tau^1_n}
				&&\vpz8{\conf_2} & 
			\\[10pt] 
				& \vpz9{\conc} &
				&& \ldots &&
				\vpz{10}{\conc}
			\\[10pt]
				&&&&
				\vpz{11}{\conf_1}
			\end{array}
			\Gedges{
				pz1/pz6,pz2/pz6,pz3/pz8,pz4/pz8,
				pz5/pz9,pz6/pz9,pz7/pz10,pz8/pz10,
				pz9/pz11,pz10/pz11%
			}
		}
		\\\\[-10pt]
		\Downarrow&& \qquad \qquad \Downarrow^* \quad \mbox{($n$ steps)}
		\\\\[-10pt]
		\canon{
			\begin{array}{ccccccccc}
				&\vpz1{\tau^1_{1}} & \dots & \vpz2{\tau^1_{n}}
				&&&
				\vpz3{\tau^1_{1}} & \dots & \vpz4{\tau^1_{n}}
			\\[10pt]
				\vpz5{{\tau^2_{1}}} 
				&& \vpz6{\conf_1} &
				&&
				\vpz7{\tau^2_m}
				&&\vpz8{\conf_1} & 
			\\[10pt] 
				& \vpz9{\conc} &
				&& \ldots &&
				\vpz{10}{\conc}
			\\[10pt]
				&&&&
				\vpz{11}{\conf_2}
			\end{array}
			\Gedges{
				pz1/pz6,pz2/pz6,pz3/pz8,pz4/pz8,
				pz5/pz9,pz6/pz9,pz7/pz10,pz8/pz10,
				pz9/pz11,pz10/pz11%
			}
		}
		&
		\begin{array}{c}
			\flattos \\ \mbox{($m$ steps)}
			\end{array}
		&
		\canon{
			\begin{array}{ccccccccccccccc}
				\vpz1{\tau^1_{1}} && \vpz2{\tau^2_{1}}
				&&
				\vpz3{\tau^1_{n}} && \vpz4{\tau^2_{1}}
				&&
				\vpz5{\tau^1_{1}} && \vpz6{\tau^2_{m}}
				&&
				\vpz7{\tau^1_{n}} && \vpz8{\tau^2_{m}}
			\\[10pt]
				&\vpz9{\conc}&&\ldots&&\vpz{10}{\conc}
				&&\ldots&&\vpz{11}{\conc}&&\ldots&&\vpz{12}{\conc}
			\\[10pt]
				&&&&&&&\vpz{13}{\conf}
			\end{array}
			\Gedges{
				pz1/pz9,pz2/pz9,pz3/pz10,pz4/pz10,pz5/pz11,pz6/pz11,pz7/pz12,pz8/pz12,
				pz9/pz13,pz10/pz13,pz11/pz13,pz12/pz13%
			}
		}
		\end{array}
		$
	}$$
	\caption{Convergence of the non-trivial critical pair for flattening.}
	\label{fig:lconfluence}
\end{figure}
%%%%%%%%%%%%%%%%%%%%%%%%%%%%%%%%%%%%%%%%%%%%%%%%%%%%%%%%%%%%%%%%
%%%%%%%%%%%%%%%%%%%%%%%%%%%%%%%%%%%%%%%%%%%%%%%%%%%%%%%%%%%%%%%%

%%%%%%%%%%%%%%%%%%%%%%%%%%%%%%%%%%%%%%%%%%%%%%%%%%%%%%%%%%%%%%%%
\begin{restatable}{lemma}{lemFlatteningNorm}\label{lem:flat:norm}
	Flattening is strongly normalizing and has a unique normal form.
\end{restatable}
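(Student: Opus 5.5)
We treat flattening as a term-rewriting system on canonical \cotrees and prove termination through a decreasing measure. Uniqueness of normal forms then follows from Newman's lemma once local confluence is checked on critical pairs.

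\emph{Strong normalization.} To every node $\nu$ of a \cotree we assign its $\conf$-depth, namely the number of $\conf$-nodes lying strictly below $\nu$ on the path to the root. For the measure we take the multiset, over all $\conf$-nodes $\nu$, of the number of $\conc$-nodes below $\nu$. A flattening step acts on a $\conf$-node with $n$ children sitting under a $\conc$-node. It moves that $\conf$-node one $\conc$-level down, and the $m$ siblings $\tau'_j$ are copied $n$ times, each copy placed above the $\conf$-node.

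\begin{itemize}
\item The $\conf$-nodes inside the copied $\tau'_j$ do not increase their count of $\conc$-ancestors. Their $\conc$-ancestors are the new $\conc$-nodes, which replace the old one (canonization may merge these with $\conc$-nodes at the roots of the $\tau_i$). So each copy has a value at most the original.
\item The moved $\conf$-node strictly loses one $\conc$-ancestor.
\end{itemize}

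Copying is not harmful in the multiset ordering provided the duplicated elements are smaller than an element that strictly decreases. I expect this to be the main obstacle, because the copies have the \emph{same} value as before, not a smaller one.

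To resolve it, I would refine the measure lexicographically. The rule only ever pushes a $\conf$-node towards the root past a $\conc$-node. So I first order the $\conf$-nodes by depth-from-root, reading the tree through the nested-multiset (recursive path) ordering. Concretely, I view the \cotree as a first-order term with a variadic $\conc$ of higher precedence than $\conf$. The rule
\[
\conc(\conf(\tau_1,\ldots,\tau_n),\vec\tau') \to \conf(\conc(\tau_1,\vec\tau'),\ldots,\conc(\tau_n,\vec\tau'))
\]
is exactly the distributivity rule, whose termination under the multiset path ordering with $\conc>\conf$ is standard. Each right-hand $\conc(\tau_i,\vec\tau')$ is smaller than the left-hand side, since its argument multiset $\{\tau_i\}\cup\vec\tau'$ is dominated by $\{\conf(\vec\tau)\}\cup\vec\tau'$. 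The ordering is closed under contexts. Canonization only merges nodes of equal label or removes unary nodes, and it is compatible with the ordering when we work modulo flattening of associative-commutative symbols (the AC-variant of the RPO). This establishes strong normalization.

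\emph{Uniqueness.} By Newman's lemma it then suffices to prove local confluence. I would split the critical pairs into cases.

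\begin{itemize}
\item Steps at disjoint positions, or steps nested in different subtrees, commute trivially. At worst, one step is duplicated inside copies of a $\tau'_j$, and it is closed by performing it in every copy.
\item Two steps at the same $\conc$-node choosing the same $\conf$-child are identical.
\item The only genuine overlap is the one in \Cref{fig:lconfluence}: the $\conc$-node has two $\conf$-children $\conf_1,\conf_2$, and the two steps select different ones. Pushing $\conf_1$ first and then $\conf_2$ in each of the $n$ resulting branches gives the same canonical tree, a single $\conf$ over all $\conc(\tau^1_i,\tau^2_j,\ldots)$, as the symmetric order. This uses associativity and commutativity of $\conf$ and $\conc$ under $\canon{\cdot}$.
\item A step overlapping with one inside $\tau_i$ or $\tau'_j$ closes in the same way as the disjoint case.
\end{itemize}

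This yields a unique normal form, namely the \cotree with a single $\conf$-node at the root (if any) over $\conc$-nodes.
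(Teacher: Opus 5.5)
Your argument is correct and follows the paper's overall plan: termination, then local confluence with the two-$\conf$-children overlap of \Cref{fig:lconfluence} as the only genuine critical pair, then Newman's lemma. Where you differ is the termination proof. The paper uses an explicit arithmetic interpretation: leaves $\mapsto 2$, a $\conf$-node $\mapsto$ the sum of its children $+1$, and a $\conc$-node $\mapsto$ their product. With $P=\prod_j\mu(\tau'_j)\geq 2$ (as $m>0$) this gives $(\sum_i\mu(\tau_i)+1)\,P > P\sum_i\mu(\tau_i)+1$. Canonization is then dispatched by noting that merging nodes or deleting unary nodes never increases $\mu$. You instead recognize the step as distributivity and orient it with the multiset path ordering, taking $\conc>\conf$. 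Your route is more off-the-shelf: there is no interpretation to invent, and it shows clearly why duplicating $\vec\tau'$ is harmless. The cost is that it imports well-foundedness of MPO on variadic terms, whereas the paper's argument is self-contained arithmetic.

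The one soft spot is canonization. Appealing to an AC-compatible RPO is heavier than needed, and it does not match the check you actually did. Your multiset computation is the plain-MPO one; AC-RPO and Bachmair--Plaisted's APO are defined differently, so the rule would have to be re-verified for them. Plain variadic MPO already suffices. Starting from a canonical tree, no unary node is created ($n\geq 2$, $m\geq 1$), and only two merges can occur:
\begin{enumerate}
\item The new $\conf$-root of the reduct merges into a $\conf$-parent. This replaces the argument $s=\conc(\conf(\vec\tau),\vec\tau')$ by $t_1,\dots,t_n$, each smaller than $s$. That is a multiset decrease which keeps the head of the subtree, so monotonicity propagates upward.
\item A $\conc$-rooted $\tau_i=\conc(\vec w)$ merges into $\conc(\tau_i,\vec\tau')$. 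Here $\conc(\vec w,\vec\tau')<s$ still holds, since every $w_k$ is a proper subterm of $\conf(\vec\tau)$.
\end{enumerate}
You can also drop the first multiset attempt, which, as you note yourself, does not decrease. Finally, some nested overlaps do not close ``in the same way as the disjoint case'' but only thanks to canonization. An example is a step at the root of some $\tau_i$, whose new $\conf$-root is absorbed into the selected $\conf$-node. These overlaps do close, and the paper is no more detailed on this point.
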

\begin{proof}
	Flattening admits a unique form of non-trivial critical pairs, which is confluent (see \Cref{fig:lconfluence}).

	To prove termination, we define a measure $\mu$ on \cotrees defined as follows: 
	if the root of $\tau$ is a leaf, then $\mu(\tau)=2$; 
	if $\tau = \conf(\tau_1, \tau_2)$, then $\mu(\tau)=\mu(\tau_1)+\mu(\tau_2)+1$; 
	and if $\tau = \conc(\tau_1, \tau_2)$, then $\mu(\tau)=\mu(\tau_1)\cdot \mu(\tau_2)$.
	Each flattening step strictly decreases the measure of the \cotree: 
	let $\tau$ be the sub\cotree rooted at the $\conf$ on the left-hand side of \Cref{eq:flattening}, 
	and let $\tau'$ be the sub\cotree on the right-hand side. 
	By construction $m>0$, thus
	$\mu(\tau)= (\sum_{i = 1}^{n}\mu(\tau_i ) +1) \cdot \prod_{i=1}^{m}\mu(\tau'_i)
	> \sum_{i = 1}^{n}\mu(\tau_i ) \cdot \prod_{i=1}^{m}\mu(\tau'_i) +1 \geq \mu(\tau')$.
	The last inequality follows from the fact that the operation of canonization ($\canon{\cdot}$) further reduces the measure.

	Therefore, flattening is terminating, and by Newman's Lemma, it is confluent, yielding the uniqueness of normal forms.
\end{proof}
%%%%%%%%%%%%%%%%%%%%%%%%%%%%%%%%%%%%%%%%%%%%%%%%%%%%%%%%%%%%%%%%

%%%%%%%%%%%%%%%%%%%%%%%%%%%%%%%%%%%%%%%%%%%%%%%%%%%%%%%%%%%%%%%%
\begin{restatable}{theorem}{thmFlat}\label{thm:flat}
	Let $\pn$ be a \conet.
	If $\pn'$ be is the normal form of $\pn$ with respect to flattening, then $\pn'$ is a \slnet.
\end{restatable}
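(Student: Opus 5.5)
The argument splits into two parts. First, the normal form $\pn'$ has a flat underlying \cotree. Second, flattening preserves being a \conet, i.e.\ being an axiomatic, canonical proof structure with a valid witness map that is coalescent. Combining them gives that $\pn'$ is a \slnet.

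\emph{Shape of normal forms.} By \Cref{lem:flat:norm} the normal form $\pn'$ exists and is unique. Suppose its canonical \cotree had a $\conf$-node $v$ that is not the root. Since the tree is canonical, the parent of $v$ is a $\conc$-node. A $\conc$-node with a $\conf$-child is exactly a redex of the flattening step \eqref{eq:flattening}: take $m\geq 1$ because the parent is canonical and so has at least one other child. This contradicts normality. Hence every $\conf$-node is the root, so there is at most one.

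\emph{Preservation of the proof-structure properties.} Leaves are only copied, so the linking remains axiomatic. The canonization $\canon{\cdot}$ in the step keeps the tree canonical. Each copied link keeps its dualizer, so the witness map remains valid.

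\emph{Preservation of coalescence.} This is the main obstacle. I would show that if $\tau$ is coalescent and $\tau\flatto\tau'$, then $\tau'$ is coalescent, by induction on the length of a coalescence path of $\tau$. The cleanest route is through sequentialization. By \Cref{thm:CNsoundAndComplete}, a coalescence path of $\tau$ corresponds to a derivation. In it, the subtree $\conc(\conf(\tau_1,\dots,\tau_n),\tau'_1,\dots,\tau'_m)$ is coalesced into a single leaf. Inside that subtree, the $\conf$-node $v$ is eventually resolved by $\conf$-splitting steps and $\lwith$-steps into one link, and then combined with the links from the $\tau'_j$ through multiplicative steps ($\ltens$, $\lprec$, $\napopr$).

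Flattening corresponds to permuting those multiplicative steps above the $\lwith$-steps. To do this, I would build a coalescence path for $\tau'$ as follows:
\begin{enumerate}
\item Coalesce each copy of $\tau'_j$ exactly as in the original path. This is possible because coalescence of a subtree depends only on that subtree and the sequent.
\item In each branch $\conc(\tau_i,\tau'_1,\dots,\tau'_m)$, run the steps used on $\tau_i$, interleaved with the multiplicative steps that originally joined the $\conf$-result to the $\tau'_j$.
\item Finally, perform the $\lwith$ and $\conf$-steps at the new root.
\end{enumerate}
The side conditions are the points to verify:
\begin{itemize}
\item The conditions on $\ltens$, $\lprec$ and $\napopr$ are local to the formulas in the two links and to the coherence of dualizers. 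These remain true because the dualizers of the links in each branch are less general than, or equal to, the join obtained after the $\lwith$-step. Coherence is inherited, since $\sigma\join\tau$ exists only if the components are coherent.
\item The $\conf$-splitting condition refers to a $\lwith$-formula whose two sides separate the leaves. Adding the same copied $\tau'_j$ links to both sides does not affect it, because the $\tau'_j$ links do not contain that $\lwith$'s immediate subformulas.
\item Store and freshness conditions ($\forall$, $\naloadr$) concern variables that already appear in the merged link, so they remain satisfied.
\end{itemize}

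Iterating this preservation along the finite flattening sequence, which terminates by \Cref{lem:flat:norm}, shows that $\pn'$ is coalescent. Together with the two previous parts, $\pn'$ is a \conet whose \cotree has at most a single $\conf$-node at the root, i.e.\ a \slnet.
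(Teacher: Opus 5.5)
Your first two parts are fine. In particular, your argument that a canonical, flattening-normal \cotree has its only $\conf$-node at the root is more explicit than the paper's proof. The gap is in the preservation of coalescence, at your step 2 and in the first bullet.

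The side condition of a $\ltens$-step is not just coherence of dualizers. It requires the immediate subformulas $A$ and $B$ of the principal formula $A\ltens B$ to be roots in the two links, and similarly for $\lprec$, $\napopr$ and the unary steps. In the branch $\conc(\tau_i,\tau'_1,\dots,\tau'_m)$, the link coming from $\tau_i$ contains only the $i$-th component of the $\lwith$-formula, never the $\lwith$-formula itself. So any step of the original path whose principal formula is that $\lwith$-formula, or one of its ancestors, cannot be replayed in the branch.

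Nor can such a step be postponed until after the root $\lwith$-step. No coalescence step ever puts a $\conf$-node above a $\conc$-node. So once the $\conf$-nodes sit at the bottom, a link produced by a $\lwith$-step is only ever reattached to a $\conf$-node by $\bullet$-steps. It never becomes the child of a $\conc$-node, and so never takes part in a multiplicative step.

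Concretely, take the derivation of $\vdash(\lunit\lwith\lunit)\ltens\lunit$: a $\ltens$-rule whose left premise comes from a $\lwith$-rule on two $\lunit$-axioms.
\begin{itemize}
\item Its \conet is $\conc(\conf(\{\lunit_1\},\{\lunit_2\}),\{\lunit_3\})$. It coalesces by a $\lwith$-step, a $\bullet$-step and a $\ltens$-step.
\item Its flattening normal form is $\conf(\conc(\{\lunit_1\},\{\lunit_3\}),\conc(\{\lunit_2\},\{\lunit_3\}))$, and no coalescence step applies to it. The only $\ltens$-step needs a link containing $\lunit\lwith\lunit$. The $\lwith$-step needs two leaves as children of a $\conf$-node.
\item By the observation above, no \cotree with at most one $\conf$-node at its root coalesces for this sequent at all.
\end{itemize}
So, with the paper's definitions, the statement is false, and no refinement of the replay argument can rescue it.

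The paper's own proof makes the same move: ``the same sequence of coalescence steps can be applied to $a_1,b_1,\dots,b_m$ and to $a_2,b_1,\dots,b_m$ \dots\ since both $a_1$ and $a_2$ are sub-sequents of $a$''. It therefore does not close this gap either.

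The replay is sound exactly when the $\lwith$-formula is a side formula of every step that merges the $\lwith$-result with the $\tau'_j$. That is the situation of the non-local permutation in \Cref{fig:permutations1}. A correct result needs one of two changes:
\begin{itemize}
\item restrict flattening to such redexes, at the cost of normal forms no longer being flat; or
\item replace coalescence, for slice nets, with a slice-wise correctness criterion in the style of \cite{hug:van:slice}.
\end{itemize}
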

\begin{proof}
	Flattening preserves the fact that the \cotree is axiomatic, and it is defined to preserve canonicity.
	Moreover, the validity of the witness map is preserved by the fact that flattening may `duplicate' links and their dualizer, but it does not change the dualizer of any link.
	To conclude, we have to show that flattening preserves coalescence.

	Thanks to \Cref{lem:flat:norm}, we can assume that the $\conf$-node involved in the flattening step is the root of $\linktree$, because we can always assume the coalescence path first reduces the \cotree rooted in the $\conf$-node involved in the flattening step, and that no coalescence steps can be applied to $\tau_1,\ldots,\tau_n,\tau'_1,\ldots,\tau'_m$, in particular, that they are \cotrees made of a single link $\la_1,\ldots,\la_n,\lb_1,\ldots,\lb_m$ respectively.
	
	If $n=2$, then we know we can apply a $\lwith$-step to $\la_1$ and $\la_2$, obtaining a link $\la$, and then, after a $\bullet$-step, we know we can apply some coalescence steps from the ones in the top part of \Cref{fig:coalescence} on $\la$ and $\lb_1,\ldots,\lb_n$ until the \cotree coalesces on a single link $\laprimo$.
	$$\begin{array}{ccccl}
		\begin{array}{cccccc}
			\vpz1{{\la_1}}
			&&
			\vpz2{{\la_2}}
			\\[10pt]
			&\vpz3{\conf} &
			& 
			\vpz4{{\lb_1}} & \ldots & \vpz5{{\lb_m}} \\[10pt]
			&&&\vpz6{\conc}
		\end{array}
		\Gedges{pz1/pz3,pz2/pz3,pz3/pz6,pz4/pz6,pz5/pz6}
	&\coalto[\lwith \;+\; \bullet]&
		\begin{array}{cccc}
			\vpz1{\la} & \vpz2{\lb_1} & \ldots & \vpz3{\lb_m}
			\\[10pt]
			&&\vpz4{\conc}
		\end{array}
		\Gedges{pz1/pz4,pz2/pz4,pz3/pz4}
	&\coalto[\rrule_1  \;+\;  \cdots  \;+\;  \rrule_n  \;+\;  \bullet]&
		\laprimo
	\end{array}$$
	Therefore, we know that the same sequence of coalescence steps can be applied to $\la_1,\lb_1,\ldots,\lb_m$ and to $\la_2,\lb_1,\ldots,\lb_m$ 
	reaching links $\laprimouno$ and $\laprimodue$ respectively, since both $\la_1$ and $\la_2$ are sub-sequents of $\la$.
	$$\begin{array}{ccccl}
		\begin{array}{ccccccccc} 
			\vpz1{{{\la_1}}} & \vpz2{{\lb_1}} & \dots &
			\vpz3{{\lb_m}}
			&&
			\vpz4{{\la_2}}
			&
			\vpz5{{\lb_1}}
			&
			\dots & \vpz6{{\lb_m}} 
		\\[10pt] 
			& \vpz7{\conc} &
			&& &&
			\vpz8{\conc}
		\\[10pt]
			&&
			&&
			\vpz9{\conf}
		\end{array}
		\multiGedges{pz9}{pz7,pz8}
		\multiGedges{pz7}{pz1,pz2,pz3}
		\multiGedges{pz8}{pz4,pz5,pz6}
	&\coalto[2\times({\rrule_1+\cdots+\rrule_n}+\bullet)]&
		\begin{array}{ccc} 
			\vpz1{{{\laprimouno}}} 
			&&
			\vpz2{{\laprimodue}}
		\\[10pt] 
			&
			\vpz9{\conf}
		\end{array}
		\multiGedges{pz9}{pz1,pz2}
	&\coalto[\lwith+\bullet]&
		\laprimo
	\end{array}$$
	For $n>2$, we apply a similar reasoning after remarking that in this case the coalescence has to start with a $\conf$-step as follows
	\begin{equation}\label{eq:bho}
		\begin{array}{cccccc}
			\vpz1{\la_1}
			&\ldots&
			\vpz2{\la_n}
			\\[10pt]
			&
			\vpz3{\conf}
			&&
			\vpz4{\lb_1}
			&\ldots&
			\vpz5{\lb_m}
			\\[10pt]
			&&&\vpz6{\conc}
		\end{array}
		\Gedges{pz1/pz3,pz2/pz3,pz3/pz6,pz4/pz6,pz5/pz6}
	\quad\coalto[\conf]\quad
		\begin{array}{ccccccccc}
			\vpz1{\la_1}
			&\ldots &\vpz7{\la_i} && \vpz8{\la_{i+1}} &\ldots&
			\vpz2{\la_n}
			\\[10pt]
			&
			\vpz9{\conf}
			&&&&
			\vpz{10}{\conf}
			&
			\\[10pt]
			&&&
			\vpz3{\conf}
			&&&
			\vpz4{\lb_1}
			&
			\ldots
			&
			\vpz5{\lb_m}
			\\[10pt]
			&&&&&\vpz6{\conc}
		\end{array}
		\Gedges{pz1/pz9,pz7/pz9,pz8/pz10,pz2/pz10,pz9/pz3,pz10/pz3,pz3/pz6,pz4/pz6,pz5/pz6}
	\end{equation}
	and then conclude by \Cref{lem:flat:norm} since one flattening step to the \cotree on the left-hand side of \Cref{eq:bho} gives a \cotree which can be obtained by applying two flattening steps on the \cotree on the right-hand side of \Cref{eq:bho} as shown below.
	$$\adjustbox{max width=\textwidth}{$
	\begin{array}{cccccccccccc}
		\vpz0{\la_1}&\ldots&\vpz1{\la_i}&&&&\vpz2{\la_{i+1}}&\ldots&\vpz3{\la_n}
		\\[10pt]
		&\vpz4{\conf}&&\vpz5{\lb_1}&\ldots&\vpz6{\lb_m}&&\vpz7{\conf}&&\vpz8{\lb_{1}}&\ldots&\vpz9{\lb_m}
		\\[10pt]
		&&\vpz{10}{\conc}&&&&&&\vpz{11}{\conc}
		\\[10pt]
		&&&&&\vpz{12}{\conf}
	\end{array}
	\Gedges{
		pz0/pz4,pz1/pz4,pz2/pz7,pz3/pz7,
		pz4/pz10,pz5/pz10,pz6/pz10,pz7/pz11,pz8/pz11,pz9/pz11,
		pz10/pz12,pz11/pz12%
	}
	\quad\flatto\flatto\quad
	\begin{array}{ccccccccc}
		\vpz1{\la_1}
		&
		\vpz2{\lb_1}
		&
		\dots
		&
		\vpz3{\lb_m}
		&&
		\vpz4{\la_n}
		&
		\vpz5{\lb_1}
		&
		\dots
		&
		\vpz6{\lb_m}
		\\[10pt]
		&
		\vpz7{\conc}
		&&&
		\ldots
		&&
		\vpz8{\conc}
		\\[10pt]
		&&
		&&
		\vpz9{\conf}
	\end{array}
	\Gedges{
		pz1/pz7,pz2/pz7,pz3/pz7,pz4/pz8,pz5/pz8,pz6/pz8,
		pz7/pz9,pz8/pz9%
	}
	$}$$
\end{proof}
%%%%%%%%%%%%%%%%%%%%%%%%%%%%%%%%%%%%%%%%%%%%%%%%%%%%%%%%%%%%%%%%

We can now conclude the proof of the main result of this section, which states that \slnets are sound and complete for $\PIL$.

%%%%%%%%%%%%%%%%%%%%%%%%%%%%%%%%%%%%%%%%%%%%%%%%%%%%%%%%%%%%%%%%
\begin{restatable}{theorem}{Slice}\label{thm:slice}
	Let $\Gamma$ be a sequent.
	Then $\proves[\PIL]\Gamma$ iff there is a \slnet for $\Gamma$.
\end{restatable}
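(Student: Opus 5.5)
The statement follows by chaining the results already established. I will prove the two implications separately.

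For the right-to-left implication, I observe that by definition a \slnet for $\Gamma$ is in particular a \conet for $\Gamma$. Its underlying \cotree is only additionally required to have at most one $\conf$-node, at its root. Hence \Cref{thm:CNsoundAndComplete} applies directly and yields $\proves[\PIL]\Gamma$. No further work is needed here, since the sequentialization procedure of \Cref{thm:CNsoundAndComplete} already treats any coalescent proof structure, flat or not.

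For the left-to-right implication, I start from a derivation $\dD$ of $\Gamma$ and proceed in three steps.
\begin{itemize}
\item By \Cref{lem:confTranslation} (or equivalently the forward direction of \Cref{thm:CNsoundAndComplete}), I obtain a \conet $\pn=\cnof{\dD}$ for $\Gamma$.
\item By \Cref{lem:flat:norm}, flattening is strongly normalizing, so $\pn$ has a (unique) normal form $\pn'$.
\item By \Cref{thm:flat}, $\pn'$ is a \slnet, and flattening does not change the conclusion sequent $\Gamma$. Thus $\pn'$ is a \slnet for $\Gamma$.
\end{itemize}

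The only point deserving care is the claim, used implicitly in the third step, that a flattening normal form indeed has at most one $\conf$-node, located at the root. I would argue this by contradiction. Suppose a $\conf$-node occurs below the root of a canonical \cotree in normal form. Its parent must be a $\conc$-node, since canonization merges adjacent $\conf$-nodes. But then the left-hand side of \Cref{eq:flattening} matches at that parent, contradicting normality.

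So every $\conf$-node is the root, and canonicity leaves at most one. This holds provided $m\geq 1$ in the redex, which is automatic because a canonical $\conc$-node has at least two children. I expect this normal-form characterization to be the only (mild) obstacle; everything else is direct invocation of the preceding theorems.
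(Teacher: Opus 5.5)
Your proposal is correct and takes essentially the paper's route. For the right-to-left direction you use that every \slnet is a \conet and apply \Cref{thm:CNsoundAndComplete}; for the left-to-right direction you translate the derivation into a \conet and flatten it to its normal form via \Cref{lem:flat:norm} and \Cref{thm:flat}. Your extra argument that a canonical flattening normal form has at most one $\conf$-node, located at the root, is sound, and it makes explicit a point the paper's proof of \Cref{thm:flat} leaves implicit.
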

\begin{proof}
	It follows from the result on \conets \Cref{thm:CNsoundAndComplete} and the fact that flattening preserves coalescence (\Cref{thm:flat}).
\end{proof}
%%%%%%%%%%%%%%%%%%%%%%%%%%%%%%%%%%%%%%%%%%%%%%%%%%%%%%%%%%%%%%%%

%%%%%%%%%%%%%%%%%%%%%%%%%%%%%%%%%%%%%%%%%%%%%%%%%%%%%%%%%%%%%%%%
%%%%%%%%%%%%%%%%%%%%%%%%%%%%%%%%%%%%%%%%%%%%%%%%%%%%%%%%%%%%%%%%
%%%%%%%%%%%%%%%%%%%%%%%%%%%%%%%%%%%%%%%%%%%%%%%%%%%%%%%%%%%%%%%%
\section{Canonicity}\label{sec:canon}
%%%%%%%%%%%%%%%%%%%%%%%%%%%%%%%%%%%%%%%%%%%%%%%%%%%%%%%%%%%%%%%%
%%%%%%%%%%%%%%%%%%%%%%%%%%%%%%%%%%%%%%%%%%%%%%%%%%%%%%%%%%%%%%%%
%%%%%%%%%%%%%%%%%%%%%%%%%%%%%%%%%%%%%%%%%%%%%%%%%%%%%%%%%%%%%%%%

%%%%%%%%%%%%%%%%%%%%%%%%%%%%%%%%%%%%%%%%%%%%%%%%%%%%%%%%%%%%%%%%
% Fig: rule permutations
%%%%%%%%%%%%%%%%%%%%%%%%%%%%%%%%%%%%%%%%%%%%%%%%%%%%%%%%%%%%%%%%
\begin{figure*}[t]
	\centering
	\adjustbox{max width=\textwidth}{$\begin{array}{c}
		\mbox{Local rule permutations}
	\\[20pt]
		\vlderivation{
			\vliin{\rclr{\brrule[1]}}{}{ \sdash[\sS_1, \sS_2,\sS_3] \Gamma_1,\Gamma_2,\Gamma_3,\rclr{\Theta_1}, {\Theta_2}}{
				\vlhy{\sdash[\sS_1]\Gamma_1, \rclr{\Delta_1}}
			}{
				\vliin{\bclr{\brrule[2]}}{}{
					\sdash[\sS_2,\sS_3] \Gamma_2,\Gamma_3,\rclr{\Delta_2},\bclr{\Theta_2}
				}{
					\vlhy{ \sdash[\sS_2] \Gamma_2,{\Delta_2},\bclr{\Delta_3}}}{\vlhy{ \sdash[\sS_3] \Gamma_3, \bclr{\Delta_4}}
				}
			}
		}
		\peq
		\vlderivation{
			\vliin{\bclr{\brrule[2]}}{}{
				\sdash[\sS_1, \sS_2,\sS_3]  \Gamma_1, \Gamma_2, \Gamma_3,{\Theta_1},\bclr{\Theta_2}
			}{
				\vliin{\rclr{\brrule[1]}}{}{
					\sdash[\sS_1,\sS_2] \Gamma_1,\Gamma_2, \rclr{\Theta_1}, \bclr{\Delta_2}
				}{
					\vlhy{\sdash[\sS_1] \Gamma_1,  \rclr{\Delta_1}}
				}{
					\vlhy{\sdash[\sS_3] \Gamma_2, \rclr{\Delta_2},{\Delta_3}}
				}
			}{\vlhy{\sdash[\sS_3] \Gamma_3, \bclr{\Delta_4}}}
		}
	\\[20pt]
		\vlderivation{
			\vlin{\bclr{\urrule[2]}}{}{
				\sdash[\sS_1,\sS_2]\Gamma, {C_1},\bclr{C_2}}{
				\vlin{\rclr{\urrule[1]}}{}{
					\sdash[\sS_1,\sS_2']\Gamma, \rclr{C_1}, \bclr{\Delta_2}
				}{\vlhy{\sdash[\sS_1',\sS_2']\Gamma, \rclr{\Delta_1},{\Delta_2}}}}
		}
		\peq
		\vlderivation{
			\vlin{\rclr{\urrule[1]}}{}{
				\sdash[\sS_1,\sS_2]\Gamma, \rclr{C_1}, {C_2}
			}{
				\vlin{\bclr{\urrule[2]}}{}{
					\sdash[\sS'_1,\sS_2]\Gamma,  \rclr{\Delta_1},\bclr{C_2}
				}{
					\vlhy{\sdash[\sS'_1,\sS'_2]\Gamma, {\Delta_1},\bclr{\Delta_2}}
				}
			}
		}
	\qquad
		\vlderivation{
			\vlin{\rclr{\urrule[1]}}{}{\sdash[\sS_1, \sS_2]\Gamma_1, \Gamma_2,\rclr{C_1},\bclr{\Theta_2}}{
				\vliin{\bclr{\brrule}}{}{\sdash[\sS_1', \sS_2] \Gamma_1, \Gamma_2,\rclr{\Delta_1},\bclr{\Theta_2}
				}{
					\vlhy{ \sdash[\sS_1']\Gamma_1, {\Delta_1},\bclr{\Delta_2}}
				}{
					\vlhy{\sdash[\sS_2] \Gamma_2, \bclr{\Delta_3}}
				}
			}
		}
		\peq
		\vlderivation{
			\vliin{\bclr{\brrule}}{}{
				\sdash[\sS_1, \sS_2] \Gamma_1, \Gamma_2,{C_1},\bclr{\Theta_2}
			}{
				\vlin{\rclr{\urrule[1]}}{}{
					\sdash[\sS_1]\Gamma,\rclr{C_1}, \bclr{\Delta_2}
				}{
					\vlhy{\sdash[\sS_1']\Gamma_1, \rclr{\Delta_1}, {\Delta_2}}
				}
			}{
				\vlhy{\sdash[\sS_2] \Gamma_2, \bclr{\Delta_3}}
			}
		}
	\\[20pt]
		\vlderivation{
			\vliin{\rclr{\lwith}}{}{
				\sdash \Gamma,\rclr{A\lwith B}, C\lwith D
			}{
				\vliin{\bclr{\lwith}}{}{\sdash \Gamma,\rclr{A},\bclr{C\lwith D}}{
					\vlhy{\sdash \Gamma,A,\bclr{C}}
				}{
					\vlhy{\sdash \Gamma,A,\bclr{D}}
				}
			}{
				\vliin{\bclr{\lwith}}{}{\sdash \Gamma,\rclr{B},\bclr{C\lwith D}}{
					\vlhy{\sdash \Gamma,B,\bclr{C}}
				}{
					\vlhy{\sdash \Gamma,B,\bclr{D}}
				}
			}
		}
		\peq
		\vlderivation{
			\vliin{\bclr{\with}}{}{
				\sdash \Gamma, A\lwith B, \bclr{C\lwith D}
			}{
				\vliin{\rclr{\lwith}}{}{
					\sdash \Gamma, \rclr{A\lwith B}, \bclr{C},
				}{
					\vlhy{\sdash \Gamma, \rclr{B},C}
				}{
					\vlhy{\sdash \Gamma, \rclr{A},C}
				}
			}{
				\vliin{\rclr{\lwith}}{}{
					\sdash \Gamma, \rclr{A\lwith B}, \bclr{D},
				}{
					\vlhy{\sdash \Gamma, \rclr{B},D}
				}{
					\vlhy{\sdash \Gamma, \rclr{A},D}
				}
			}
		}
	\\[20pt]
		\vlderivation{
			\vlin{\bclr{\alpha}}{}{
				\sdash[\sS] \Gamma,A\lwith B,\bclr{C}
			}{
				\vliin{\rclr{\lwith}}{}{
					\sdash[\sS'] \Gamma, \rclr{A\lwith B}, \bclr{\Delta}
				}{
					\vlhy{\sdash[\sS'] \Gamma, \rclr{B},\Delta}
				}{
					\vlhy{\sdash[\sS'] \Gamma, \rclr{A},\Delta}
				}
			}
		}
		\peq
		\vlderivation{
			\vliin{\rclr{\lwith}}{}{
				\sdash \Gamma,\rclr{A\lwith B}, C
			}{
				\vlin{\bclr{\urrule}}{}{
					\sdash \Gamma,\rclr{A},\bclr{C}
				}{
					\vlhy{\sdash[\sS'] \Gamma,A,\bclr{\Delta}}
				}
			}{
				\vlin{\bclr{\urrule}}{}{
					\sdash \Gamma,\rclr{B},\bclr{C}
				}{
					\vlhy{\sdash[\sS'] \Gamma,B,\bclr{\Delta}}
				}
			}
		}
	\\[20pt]\hline\\[-5pt]
		\mbox{Non-local rule permutations}
	\\
		\vlderivation{
			\vliin{\rclr{\brrule}}{}{
				\sdash[\sS_1,\sS_2] \rclr{\Gamma}, C\lwith D
			}{
				\vlpr{\dD}{}{\sdash[\sS_1] \rclr{\Gamma_1}}
			}{
				\vliin{\bclr{\lwith}}{}{
					\sdash[\sS_2] \rclr{\Gamma_2},\bclr{C\lwith D}
				}{
					\vlhy{\sdash[\sS_2] \Gamma_2,\bclr{C}}
				}{
					\vlhy{\sdash[\sS_2] \Gamma_2,\bclr{D}}
				}
			}
		}
		\quad\speq\quad
		\vlderivation{
			\vliin{\bclr{\with}}{}{
				\sdash[\sS_1,\sS_2] \Gamma,\bclr{C\lwith D}
			}{
				\vliin{\rclr{\brrule}}{}{
					\sdash[\sS_1,\sS_2] \rclr{\Gamma}, \bclr{C},
				}{
					\vlpr{\dD}{}{\sdash[\sS_1] \rclr{\Gamma_1}}
				}{
					\vlhy{\sdash[\sS_2] \rclr{\Gamma_2}, C}
				}
			}{
				\vliin{\rclr{\brrule}}{}{
					\sdash[\sS_1,\sS_2] \rclr{\Gamma}, \bclr{D},
				}{
					\vlpr{\dD}{}{\sdash[\sS_1] \rclr{\Gamma_1}}
				}{
					\vlhy{\sdash[\sS_2] \rclr{\Gamma_2}, D}
				}
			}
		}
	\end{array}$}
	\caption{
		Rule permutations 
		in $\PIL$
        with
		$\brrule,\brrule[1],\brrule[2]\in\set{\ltens,\lprec}$
		and with
		$\urrule,\urrule[1],\urrule[2]\in\set{\lpar,\lplus,\exists,\forall,\nuurule,\yurule,\nupopr,\yapopr,\nuloadr,\yaloadr}$.
	}
	\label{fig:permutations1}
\end{figure*}
%%%%%%%%%%%%%%%%%%%%%%%%%%%%%%%%%%%%%%%%%%%%%%%%%%%%%%%%%%%%%%%%
%%%%%%%%%%%%%%%%%%%%%%%%%%%%%%%%%%%%%%%%%%%%%%%%%%%%%%%%%%%%%%%%

In this section we discuss the proof equivalence enforced by conflict nets and slice nets, called \defn{local} and \defn{strong} canonicity, respectively.

For this, we define two equivalence relations over derivations, and a notion of isomorphism for \conets (and thus for \slnets) which we use to state the canonicity results.
%%%%%%%%%%%%%%%%%%%%%%%%%%%%%%%%%%%%%%%%%%%%%%%%%%%%%%%%%%%%%%%%
\begin{definition}
	We define the two following equivalence relations over derivations in $\PIL$ using the rule permutations in \Cref{fig:permutations1}:
	\begin{itemize}
		\item \defn{local equivalence} (denoted $\dD_1\peq \dD_2$) if it is possible to transform $\dD_1$ into $\dD_2$ using a sequence of local rule permutations;
		\item \defn{strong equivalence} (denoted $\dD_1\speq \dD_2$) if it is possible to transform $\dD_1$ into $\dD_2$ using a sequence of both local and non-local rule permutations.
	\end{itemize}
\end{definition}
%%%%%%%%%%%%%%%%%%%%%%%%%%%%%%%%%%%%%%%%%%%%%%%%%%%%%%%%%%%%%%%%

%%%%%%%%%%%%%%%%%%%%%%%%%%%%%%%%%%%%%%%%%%%%%%%%%%%%%%%%%%%%%%%%
\begin{definition}
	Two conflict nets $\tuple{\linking_1,\dualizerof[1]}$ and $\tuple{\linking_2,\dualizerof[2]}$
	are \defn{isomorphic} (denoted $\tuple{\linking_1,\dualizerof[1]} = \tuple{\linking_2,\dualizerof[2]}$) 
	if there is a label-preserving isomorphism between $\linking_1$ and $\linking_2$  and such that $\dualizerof[1]=\dualizerof[2]$.
\end{definition}
%%%%%%%%%%%%%%%%%%%%%%%%%%%%%%%%%%%%%%%%%%%%%%%%%%%%%%%%%%%%%%%%

%%%%%%%%%%%%%%%%%%%%%%%%%%%%%%%%%%%%%%%%%%%%%%%%%%%%%%%%%%%%%%%%
\begin{restatable}{lemma}{confCoal}\label{lem:confCoal}
	Let $\pn=\tuple{\linktree,\dualizerof^\linking}$ be a \conet for $\Gamma$ and $\dD_{\pn}$ be a derivation obtained by sequentializing it.
	If $\pn \coaltos \pn'$,
	then $\pn'$ can be sequentialized into a derivation $\dD_{\pn'}$ such that $\dD_{\pn'}\peq \dD_{\pn}$.
\end{restatable}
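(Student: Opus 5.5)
We need to interpret the statement: $\pn'$ is the partially coalesced pre-structure, whose leaves are links on intermediate judgements. Sequentializing $\pn'$ presumably means continuing the coalescence of $\pn'$ to a trivial structure and reading off the derivation as in the proof of \Cref{thm:CNsoundAndComplete}, starting from the deductive \cotree for $\pn$. The claim is then that any two coalescence paths of $\pn$ yield locally equivalent derivations. In particular, the derivation obtained by first performing $\pn\coaltos\pn'$ and then completing the path is $\peq$ to $\dD_{\pn}$.

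I would argue by induction on the length of the sequence $\pn \coaltos \pn'$, which reduces the problem to a single step $\pn\coalto\pn'$. By induction on the size of $\pn$, I would then prove a diamond-like property: if $\pn\coalto[s_1]\pn_1$ and $\pn\coalto[s_2]\pn_2$ are two steps, then either they coincide or there are paths $\pn_1\coaltos\pn_3$ and $\pn_2\coaltos\pn_3$. Moreover, the partial derivations labelling the leaves of $\pn_3$ along the two routes are related by local permutations from \Cref{fig:permutations1}.

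The key step is a case analysis on pairs of steps $(s_1,s_2)$. If $s_1$ and $s_2$ act on disjoint leaves, they commute trivially and the resulting derivations are identical. If both act on the same leaf $\la$ but on different formulas, three cases arise:
\begin{itemize}
\item two unary steps give the unary/unary permutation;
\item a unary step and an $\rrule\in\set{\ltens,\lprec,\napopr}$-step give the unary/binary permutation;
\item two binary $\conc$-steps sharing a leaf give the binary/binary permutation.
\end{itemize}
A $\lwith$-step interacting with a unary step gives the $\alpha/\lwith$ local permutation. Two $\lwith$-steps on the same conflict node require the $\conf$-splitting step, and they yield the $\lwith/\lwith$ permutation; this is what accounts for the $\conf$ and $\bullet$ steps, which do not change derivations. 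In each case I also check that the side conditions (freshness for $\forall$, $\naur$, $\naloadr$, and coherence of dualizers) still hold after the other step. For this I use the fact that the dualizers are updated only by restriction ($\fsubminus x$) or by join, which commute.

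The main obstacle is the interaction between the dualizer updates and the store. For example, a $\napopr$-step consumes a nominal link $\set{x,\isna y}$, and I must show that performing it before or after a $\ltens$-step on the same leaf gives the same joined dualizer and a correct splitting $\sS_1,\sS_2$ of the store. I would handle this by showing that $(\dualizerof[\la]\join\dualizerof[\lb])\fsubminus x=\dualizerof[\la]\fsubminus x\join\dualizerof[\lb]$ whenever $x$ occurs only in $\la$, which cleanness guarantees. With the diamond property in hand, a standard Newman-style argument, using the fact that coalescence strictly decreases size, shows that all complete coalescence paths yield $\peq$-equivalent derivations, and the lemma follows.
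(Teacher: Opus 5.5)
Your overall route matches the paper's: local confluence of coalescence on conflict nets by a critical-pair analysis, each pair closed by local permutations of \Cref{fig:permutations1}, plus termination. The gap is the claim that \emph{every} critical pair closes in this way.

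The nominal links consumed by $\napopr$-steps are axiomatic links $\set{x,y}$ between two bound variables (not $\set{x,\isna y}$ as you write), and such a link can be consumed in either direction. Take a $\conc$-node whose children include a link $\la$ on $A,B,\Gamma$ and the nominal link $\set{x,y}$, with $\lNu xA$ and $\lYa yB$ in the sequent. The $\napopr$-step then applies to this same pair of leaves in two ways:
\begin{itemize}
\item as a $\nupopr$-step, introducing $\lYa yB$ with witness $x$;
\item as a $\yapopr$-step, introducing $\lNu xA$ with witness $y$.
\end{itemize}
This is not an instance of your ``two binary steps sharing a leaf''. Both steps consume the same nominal link, so neither can be performed after the other. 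The two pre-structures are joined only by coalescing the remaining quantifier with a load step.

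The two routes therefore produce a $\nuloadr$-rule directly below a $\nupopr$-rule, versus a $\yaloadr$-rule directly below a $\yapopr$-rule. The permutations of \Cref{fig:permutations1} only reorder rule instances, so no sequence of them relates these two derivations. The paper singles out exactly this pair as a special case. There it concludes only equivalence modulo the choice of fresh name ($\pweq$, cf.\ \Cref{sec:witnesses}), not $\peq$. Your analysis has to treat this pair separately, and for it you can only obtain that weaker identification.

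A second point to make explicit concerns Remark~\ref{remark:notcritical}: coalescence is not locally confluent on arbitrary pre-structures. In the examples there, two links under the same $\conc$-node overlap, so after one multiplicative step the other is blocked. This has three consequences for your argument:
\begin{itemize}
\item Your diamond property can hold only for coalescent $\pn$.
\item Your side-condition checks for multiplicative steps sharing a leaf must use that coalescence, for instance to know that the links being merged stay disjoint.
\item The final step cannot be Newman's lemma applied to the whole rewriting system.
\end{itemize}
Instead, argue by induction on size over coalescent structures. Given $\pn\coalto\pn_1$ and a complete coalescence path starting with $\pn\coalto\pn_2$, join $\pn_1$ and $\pn_2$ in some $\pn_3$. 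Then apply the induction hypothesis along the path from the smaller, coalescent $\pn_2$ to $\pn_3$. This shows that $\pn_3$, and hence $\pn_1$, is coalescent with an equivalent derivation.
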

\begin{proof}
	It is sufficient to show that all critical pairs of coalescence steps converge and that the resulting derivations are equivalent modulo the local rule permutations shown in \Cref{fig:permutations1}.
	The most convoluted case is the one for the pair $\conf / \conf$, which can be treated in the exact same way as in \cite{hei:hug:str:ALL1} after remarking that 
	the additional information provided by the witness map does not interfere with the proof because the join operator on dualizers is associative and commutative.
	The cases which are not present in \cite{hei:hug:conflict} are discussed in \Cref{app:proofs}.
\end{proof}
%%%%%%%%%%%%%%%%%%%%%%%%%%%%%%%%%%%%%%%%%%%%%%%%%%%%%%%%%%%%%%%%

%%%%%%%%%%%%%%%%%%%%%%%%%%%%%%%%%%%%%%%%%%%%%%%%%%%%%%%%%%%%%%%%
\begin{lemma}\label{lem:flattoId}
	Let $\pn$ and $\pn'$ be two \conets for $\Gamma$ and let $\dD_{\pn}$ and $\dD_{\pn'}$ be two derivations obtained by sequentializing them.
	If $\pn \flatto\pn'$, then $\dD_{\pn'}\speq \dD_{\pn}$.
\end{lemma}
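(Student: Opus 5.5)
The plan is to reduce the statement to a single local configuration and then recognise the non-local permutation of \Cref{fig:permutations1} in the two sequentializations. First, by \Cref{lem:confCoal}, any two sequentializations of the same \conet are $\peq$-equivalent, hence $\speq$-equivalent. It therefore suffices to exhibit \emph{one} sequentialization of $\pn$ and \emph{one} of $\pn'$ that are related by $\speq$. This freedom lets me fix the coalescence paths as I like.

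The first step is to choose the paths. Let the flattening step act on a $\conc$-node $v$ with a $\conf$-child $w$ (children $\tau_1,\dots,\tau_n$) and further children $\tau'_1,\dots,\tau'_m$. Following the argument of \Cref{thm:flat}, I coalesce in $\pn$ first each $\tau_i$ and each $\tau'_j$ into single links $\la_i,\lb_j$. I use the same sequences of steps inside the copies of these subtrees in $\pn'$. Since the copies carry the same dualizers, the corresponding sub-derivations are literally identical, say $\dD_{\la_i}$ and $\dD_{\lb_j}$. Everything below $v$ (resp. below the new $\conf$-node) is likewise coalesced identically in both nets once the subtree is reduced to one link. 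So I only have to compare the derivations produced by the local configuration.

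The second step treats the case $n=2$. In $\pn$ the path is: a $\lwith$-step on $\la_1,\la_2$ for some $A\lwith B$, giving the rule $\lwith$ with premises $\dD_{\la_1},\dD_{\la_2}$; then a sequence of steps $\rrule_1,\dots,\rrule_k$ combining $\la$ with $\lb_1,\dots,\lb_m$. In $\pn'$, as in the proof of \Cref{thm:flat}, the same sequence $\rrule_1,\dots,\rrule_k$ is applied twice, once to $\la_1,\lb_1,\dots,\lb_m$ and once to $\la_2,\lb_1,\dots,\lb_m$, and is followed by a single $\lwith$. The resulting derivations differ exactly by pushing a $\lwith$ up through a block of rules. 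I decompose this block rule by rule. Each binary rule $\brrule\in\set{\ltens,\lprec}$ whose $\lwith$-premise side is crossed gives an instance of the non-local permutation, which duplicates the other premise $\dD_{\lb_j}$. Each unary rule $\urrule$ gives the local $\lwith/\urrule$ permutation. Here I must check that the store is constant across $\lwith$ premises, so that $\nupopr$ and $\naloadr$ steps commute correctly; this is guaranteed by the $\lwith$ rule itself.

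The third step treats $n>2$. I follow \Cref{eq:bho}: the first coalescence step at $w$ is a $\conf$-split. By \Cref{lem:flat:norm}, the single flattening step equals two flattening steps applied after that split. Induction on $n$, together with transitivity of $\speq$, then reduces this case to the case $n=2$.

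The main obstacle is the second step. I must ensure that the steps $\rrule_1,\dots,\rrule_k$ applied to $\la$ can be sequenced so that the $\lwith$ rule appears as the bottom-most rule in one derivation and topmost in the other, with each intermediate commutation being a legitimate permutation. The hardest part is the side conditions on eigenvariables and on the store. The $\forall$ and $\naur$ conditions hold in each branch because $\la_i\subseteq\la$. For dualizers, I would use $\dualizerof[\la]=\dualizerof[\la_1]\join\dualizerof[\la_2]$ together with coherence to show that instantiations in each branch agree with those in the merged derivation.
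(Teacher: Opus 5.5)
Your overall strategy is close to the paper's. Both proofs reduce to the root with the subtrees already collapsed to single links, and both compare a path of $\pn$ that fires the $\lwith$-step first with a path of $\pn'$ that fires it last. You treat the whole block of $\conc$-siblings at once at the level of derivations, whereas the paper moves one sibling at a time via $\miniflatto$.

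The gap is in your second step, where you replay the post-$\lwith$ steps $\rrule_1,\dots,\rrule_k$ of $\pn$ in the two branches of $\pn'$. Some of these steps may have an ancestor of $A\lwith B$ as principal formula. Such a step cannot be replayed on the links coming from $\la_1$ or $\la_2$, since they contain $A$ or $B$, not $A\lwith B$. At the level of derivations, the $\lwith$ rule cannot be permuted past it either, because every permutation in \Cref{fig:permutations1} requires the $\lwith$-formula to be a side formula.

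For a concrete case, let $\Gamma=(\lsend xy\lwith\lsend xy)\lpar\lrecv ab,\ (\lrecv xy\ltens\lsend ab)\ltens\lunit$. Take $\pn$ with \cotree $\conc(\conf(\la_1,\la_2),\lb,\lc)$, where $\la_i$ links the $i$-th occurrence of $\lsend xy$ to $\lrecv xy$, $\lb=\set{\lrecv ab,\lsend ab}$, $\lc=\set{\lunit}$, and all dualizers are empty. Both $\pn$ and its flattening are \conets. Yet the legitimate path $\lwith,\ltens,\lpar,\ltens$ of $\pn$ contains a $\lpar$-step on $(\lsend xy\lwith\lsend xy)\lpar\lrecv ab$ that has no counterpart inside the branches.

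Your construction also never uses the hypothesis that $\pn'$ is a \conet. If it were sound, it would show that flattening always preserves coalescence. That already fails for $\vdash(\lsend xy\lwith\lsend xy)\ltens\lunit,\lrecv xy$: the $\ltens$-step there needs the $\lwith$-formula and $\lunit$ in distinct links. That never happens in $\conf(\conc(\la_1,\set{\lunit}),\conc(\la_2,\set{\lunit}))$, so this tree cannot coalesce to $\Gamma$. The replay claim you borrow from the proof of \Cref{thm:flat} has the same defect.

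To repair the argument, you must normalise the path of $\pn$ using the coalescence of $\pn'$.
\begin{enumerate}
\item No \emph{binary} step inside the flattened subtree can act on an ancestor of $A\lwith B$. In $\pn'$ the formula $A\lwith B$ only appears once the new $\conf$-node has collapsed to a single link. By then the formulas on the other side of such a step already lie in that same link, so the step could never fire.
\item \emph{Unary} steps on ancestors of $A\lwith B$ feed none of the remaining steps of the subtree. Using the freedom of path choice given by \Cref{lem:confCoal}, you can postpone them until the subtree has collapsed, where they are common to $\pn$ and $\pn'$.
\end{enumerate}
Only after this normalisation are the remaining steps replayable in every branch, since they touch neither $A$, $B$ nor their ancestors. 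They can then be decomposed into local $\lwith/\urrule$ and non-local $\lwith/\brrule$ permutations as you describe.

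This is also why the paper runs the comparison in the opposite direction: for each small step it assumes the right-hand side is coalescent and reads the path of the left-hand side off it. Your induction for $n>2$ needs the same care, since its intermediate nets must be shown to be \conets before the induction hypothesis applies.
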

\begin{proof}
	For the same reasons of the ones used in the proof of \Cref{thm:flat}, we only discuss the case in which the flattening step is applied to the root of the \cotree, that the $\conf$-node affected by the step has only two children, and that no coalescence step can be applied in the subtree affected by flattening.

	To simplify the reasoning, use the following \emph{small-step} reduction
	\begin{equation}\label{eq:miniflat}
		\begin{array}{ccccccc}
			\vpz1{\la_{1}}
		&&
			\vpz2{\la_{2}}
		\\[10pt]
			&
			\vpz3{\conf} 
			&& 
			\vpz4{\lb} 
			& 
			\vpz5{\lb_1}
			\cdots
			\vpz6{\lb_m}
		\\[10pt]
			&&&\vpz7{\conc}
		\end{array}
		\multiGedges{pz3}{pz1,pz2}
		\multiGedges{pz7}{pz3,pz4,pz5,pz6}
	\miniflatto
		\canon{
			\begin{array}{ccccccc}
				\vpz{10}{\la_{1}}
				\qquad
				\vpz{14}{\lb}
				&&
				\vpz{11}{\la_{2}}
				\qquad
				\vpz{15}{\lb} 
			\\[10pt]
				\vpz1{\conc}
				&&
				\vpz2{\conc}
			\\
				&
				\vpz3{\conf} && 
				\vpz5{\lb_1}\ldots\vpz6{\lb_m}
			\\[10pt]
				&&\vpz7{\conc}
			\end{array}
			\multiGedges{pz3}{pz1,pz2}
			\multiGedges{pz7}{pz3,pz5,pz6}
			\multiGedges{pz1}{pz14,pz10}
			\multiGedges{pz2}{pz15,pz11}
		}
	\end{equation}
	decomposing the flattening step into a sequence of steps, and we show that each small-step corresponds to an application of a non-local permutation in \Cref{fig:permutations1}.
	
	If the \conet obtained by applying the small step is still coalescent (we can always assume this is the case, since we can force the small step to be applied only if the resulting \conet is coalescent), then the coalescence path of the left-hand side starts with a $\lwith$-step, followed by a $\bullet$ and a $\rrule$-step involving $\lb$.
	Similarly, we can start the coalescence path of the right-hand side with two $\rrule$-steps, one for $\la_1$ and one for $\la_2$, followed by a $\lwith$-step and a $\bullet$-step.
	The resulting derivations are the same up to the non-local rule permutation in \Cref{fig:permutations1} for the $\lwith$ and $\rrule$ rules.
\end{proof}
%%%%%%%%%%%%%%%%%%%%%%%%%%%%%%%%%%%%%%%%%%%%%%%%%%%%%%%%%%%%%%%%

%%%%%%%%%%%%%%%%%%%%%%%%%%%%%%%%%%%%%%%%%%%%%%%%%%%%%%%%%%%%%%%%
\begin{theorem}\label{thm:PNareCanon}
	Let $\dD$ and $\dD'$ be two derivations in $\PIL$.
	Then,
	\begin{enumerate}
		\item\label{pweq} $\dD\peq\dD'$ iff $\cnof{\dD}=\cnof{\dD'}$ .
		\item\label{spweq} $\dD\speq\dD'$ iff $\snof{\dD}=\snof{\dD'}$ .
	\end{enumerate}
	where we denote by $\snof{\cdot}$ the function mapping a derivation to the \slnet obtained by flattening the conflict net associated to it via $\cnof{\cdot}$ defined in \Cref{fig:deseq}.
\end{theorem}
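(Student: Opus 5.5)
For each item we prove the two implications separately: soundness (equivalent derivations have equal nets) and faithfulness (equal nets come from equivalent derivations).

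For soundness of item~\ref{pweq}, we induct on the number of local permutations relating $\dD$ and $\dD'$. It suffices to check each local permutation of \Cref{fig:permutations1} against the translation $\cnof{\cdot}$ of \Cref{fig:deseq}.

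\begin{itemize}
\item Two unary rules $\urrule[1],\urrule[2]$ are transparent for the tree, since $\cnof{\cdot}$ is the identity on the tree for $\rrule[1]$-rules. The only change is to the witness map, through the substitutions introduced by $\exists$ and $\napopr$. These act on distinct variables, so they commute.
\item Two binary multiplicative rules $\brrule[1],\brrule[2]$ yield $\canon{\tau_1\conc(\tau_2\conc\tau_3)}=\canon{(\tau_1\conc\tau_2)\conc\tau_3}$, by canonization.
\item A unary rule against a binary one is similarly transparent.
\item The $\lwith/\lwith$ permutation gives $\canon{(\tau_{1}\conf\tau_{2})\conf(\tau_3\conf\tau_4)}$ up to reordering of leaves. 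Both sides canonize to a single $\conf$-node over the same four subtrees, so the label-preserving isomorphism exists. One must still check that the links, as sub-forests of the conclusion, coincide; this holds since the links are built from the axioms, which are identical on both sides.
\item A unary rule permuting with $\lwith$ duplicates the unary rule, but the unary rule contributes nothing to the tree. Its effect on the dualizers is applied uniformly to every link in both branches, so the witness maps agree.
\end{itemize}

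For faithfulness of item~\ref{pweq}, suppose $\cnof{\dD}=\cnof{\dD'}=\pn$. First we show that sequentializing $\cnof{\dD}$ along the coalescence path induced by $\dD$ (as in the proof of \Cref{lem:confTranslation}) returns exactly $\dD$, up to the renaming of leaf rules $\leafr$ into $\axrule$. This holds because the steps of \Cref{fig:coacot} invert those of \Cref{fig:deseq}. Then $\dD$ and $\dD'$ are both sequentializations of the same net $\pn$ along possibly different coalescence paths. By \Cref{lem:confCoal}, taking $\pn'$ to be the trivial net reached by either path, any two sequentializations of $\pn$ are $\peq$-equivalent. Hence $\dD\peq\dD'$.

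The main obstacle is the faithfulness direction, specifically the fact that the translation loses no information about $\exists$-witnesses and $\napopr$-witnesses. In the translation, $\dualizerof[\la_x]=\fsubst yx\dualizerof[\la_y]$, and I would check that the sequentialization step $\coalto[\exists]$, which removes $x$ from the dualizer, recovers exactly the witness $y$ used in $\dD$. This needs the cleanness assumption, so that variables bound by distinct quantifiers are distinct and the domains of the dualizers determine which quantifier each substitution entry belongs to.

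For item~\ref{spweq}, soundness follows from item~\ref{pweq} together with the non-local permutation. For the latter, the two sides translate to $\canon{\tau_\dD\conc(\tau_C\conf\tau_D)}$ and $\canon{(\tau_\dD\conc\tau_C)\conf(\tau_\dD\conc\tau_D)}$, and these differ by exactly one flattening step. Since flattening normal forms are unique (\Cref{lem:flat:norm}), the two have the same \slnet. For faithfulness, given $\snof{\dD}=\snof{\dD'}$, we take the flattening sequences $\cnof{\dD}\flattos\snof{\dD}$ and $\cnof{\dD'}\flattos\snof{\dD'}$. Coalescence is preserved along them by \Cref{thm:flat}, and by \Cref{lem:flattoId} each step changes the sequentialization only up to $\speq$. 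Combined with item~\ref{pweq}, relating $\dD$ to a sequentialization of $\cnof{\dD}$, this gives $\dD\speq\dD_{\snof{\dD}}\speq\dD'$.
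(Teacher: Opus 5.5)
Your overall architecture matches the paper's proof. The left-to-right directions are a per-permutation check against the translation, the converse of item~1 uses \Cref{lem:confCoal}, and the converse of item~2 uses \Cref{thm:flat} together with \Cref{lem:flattoId}. The step that fails is your treatment of all rules in $\urrule$ as invisible in the co-tree.

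In \Cref{fig:deseq}, $\napopr$ is not translated as a unary rule. It yields $\canon{\set{x,\isna y}\conc\treeof{\dD_x}}$, so it adds a nominal-link leaf under a $\conc$-node, and in the co-tree it behaves like a $\ltens$ against a nominal axiom. In the unary/unary and unary/$\brrule$ permutations this is harmless, because canonization absorbs the extra $\conc$; your conclusions there survive, though not for the reason you give. However, \Cref{fig:permutations1} also lists $\nupopr,\yapopr$ among the $\urrule$ that permute \emph{locally} with $\lwith$. In that case the two sides translate to
\[
\canon{\set{x,\isna y}\conc(\tau_A\conf\tau_B)}
\qquad\text{and}\qquad
\canon{(\set{x,\isna y}\conc\tau_A)\conf(\set{x,\isna y}\conc\tau_B)} .
\]
The first has one copy of the nominal link and a $\conc$-root; the second has two copies and a $\conf$-root. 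They are not isomorphic. They are related by flattening, exactly as in the non-local $\brrule/\lwith$ permutation. So your fifth bullet (``the unary rule contributes nothing to the tree'') is false for $\napopr$. No argument can close this case as the definitions stand: it is a counterexample to the left-to-right direction of item~1. The paper's proof passes over it with ``the additional cases for $\PIL$ are straightforward''.

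The repair belongs in the statement rather than in the proof. Either move $\nupopr/\lwith$ and $\yapopr/\lwith$ to the non-local permutations, or change the translation of $\napopr$. Item~2 is unaffected: by \Cref{lem:flat:norm}, the two co-trees above have the same flattening normal form, so your item~2 argument handles them.

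A minor point on the non-local case: the two sides are joinable by flattening but are not always exactly one step apart, for instance when $\tau_C$ or $\tau_D$ already has a $\conf$-root and canonization merges it. Uniqueness of normal forms still covers this, so your conclusion stands.
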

\begin{proof}
	To prove the left-to-right implication of both points, it suffices to show that if $\dD$ and $\dD'$ are two derivations with the bottom-most rules in a configuration as in \Cref{fig:permutations1}, then $\cnof{\dD}=\cnof{\dD'}$ and $\snof{\dD}=\snof{\dD'}$, respectively.
	For the rules in $\MALL$, the proof is the one in \cite{hei:hug:conflict} and \cite{hug:van:slice}, respectively, and the additional cases for the rules of $\PIL$ are straightforward.

	The right-to-left implication of \Cref{pweq} follows from \Cref{lem:confCoal} since each non-trivial critical pair of coalescence steps corresponds to a local rule permutation in \Cref{fig:permutations1}.
	The right-to-left implication of \Cref{spweq} is more involved. 
	It relies on \Cref{thm:flat} (flattening preserves coalescence), and 
	\Cref{lem:flattoId} (flattening steps correspond to non-local rule permutations in \Cref{fig:permutations1} plus possibly some local rule permutations).
	When we can conclude from \Cref{pweq}.
\end{proof}
%%%%%%%%%%%%%%%%%%%%%%%%%%%%%%%%%%%%%%%%%%%%%%%%%%%%%%%%%%%%%%%%

%%%%%%%%%%%%%%%%%%%%%%%%%%%%%%%%%%%%%%%%%%%%%%%%%%%%%%%%%%%%%%%%
%%%%%%%%%%%%%%%%%%%%%%%%%%%%%%%%%%%%%%%%%%%%%%%%%%%%%%%%%%%%%%%%
%%%%%%%%%%%%%%%%%%%%%%%%%%%%%%%%%%%%%%%%%%%%%%%%%%%%%%%%%%%%%%%%
\section{Conclusion and Perspectives}\label{sec:conc}
%%%%%%%%%%%%%%%%%%%%%%%%%%%%%%%%%%%%%%%%%%%%%%%%%%%%%%%%%%%%%%%%
%%%%%%%%%%%%%%%%%%%%%%%%%%%%%%%%%%%%%%%%%%%%%%%%%%%%%%%%%%%%%%%%
%%%%%%%%%%%%%%%%%%%%%%%%%%%%%%%%%%%%%%%%%%%%%%%%%%%%%%%%%%%%%%%%

In this paper we defined \emph{conflict nets} and \emph{slice nets} for $\PIL$, proved their soundness and completeness, provided sequentialization procedures and proof translations, and studied the proof equivalences they capture.

The results presented here, in conjunction with the ones in \cite{acc:man:mon:FaP}, open to the possibility of using proof nets as a basis for the study of the semantics of concurrent systems.
For example, a direct correspondence between choreographies and proof nets for $\PIL$ can be easily obtained by combining the proof translation from this paper and the results in \cite{acc:man:mon:FaP}.

Moreover, the canonicity results we proved suggest that proof nets can provide a canonical representatives of execution trees for concurrent programs, since they allow to identify execution trees while ignoring irrelevant differences (such as permutations of independent actions), while still distinguishing execution trees that differ due to externally observable phenomena (such as race conditions or side effects).
The distinction between local and strong canonicity also suggests that different notions of proof equivalence may be relevant for the definition of different notions of equivalences for concurrent systems, depending on the specific requirements of the application at hand.
For instance, local canonicity may be more suitable for applications where the order of branching during an exectution is relevant, while strong canonicity may be more suitable for applications where the order of branching is irrelevant, such as in the case of bisimulation equivalences for concurrent systems.

In future work, we may study cut-elimination for our proof nets, where the mechanics of nominal quantifiers is far for being trivial, and, already in the sequent calculus, it requires additional rules \cite{acc:man:mon:FaPext}.
We expect that proof nets for first-order multiplicative additive linear logic can be obtained by a straightforward extension of our definitions, allowing for general terms and predicate symbols (and removing unit, the $\lprec$, and nominal quantifiers, as well as the corresponding coalescence steps).\footnote{
	In this setting, an axiomatic linking $\linking$ on $\Gamma$ consists of links of the form $\set{P(t_1,\dots,t_n), \cneg P(s_1,\dots,s_n)}$, and the dualizer for each such link is a substitution (with the same constarins on their domain as in \Cref{def:dualizer}) unifying $P(t_1,\dots,t_n)$ and $P(s_1,\dots,s_n)$.
}
Such proof nets could be used to extend the result on interpolation for linear logic \cite{saurin:Interpolation,PN:interpolation}, relying on proof nets with contractability correctness criteria similar to the one we used for conflict nets.

Concerning the nominal quantifiers, proof nets underlay again their `linear' nature in handling fresh names, which suggest them as a possible basis for the study of `multiplicative' quantifiers (see, e.g., \cite{nic:pia:tes:multQuant})
Finally, an important open question is the characterization of the proof equivalence induced by abstracting from the witnesses of nominal quantifiers, thus defining a stronger notion of proof equivalence reflecting the agnostic nature of nominal quantifiers with respect to the choice of fresh names, as well as of all quantifiers as in \cite{hug:unification,hei:hug:str:ALL1} 
(see \Cref{sec:witnesses} for a preliminary discussion on this topic).

%%%%%%%%%%%%%%%%%%%%%%%%%%%%%%%%%%%%%%%%%%%%%%%%%%%%%%%%%%%%%%%%
%%%%%%%%%%%%%%%%%%%%%%%%%%%%%%%%%%%%%%%%%%%%%%%%%%%%%%%%%%%%%%%%
\bibliographystyle{splncs04}
\bibliography{biblio}

@misc{acc:man:mon:FaPext,
	title={Formulas as Processes, Deadlock-Freedom as Choreographies (Extended Version)}, 
	author={Matteo Acclavio and Giulia Manara and Fabrizio Montesi},
	year={2025},
	eprint={2501.08928},
	archivePrefix={arXiv},
	primaryClass={cs.LO},
	url={https://arxiv.org/abs/2501.08928}, 
}

@InProceedings{acc:man:mon:FaP,
	author="Acclavio, Matteo
	and Manara, Giulia
	and Montesi, Fabrizio",
	editor="Vafeiadis, Viktor",
	title="Formulas as Processes, Deadlock-Freedom as Choreographies",
	booktitle="Programming Languages and Systems",
	year="2025",
	publisher="Springer Nature Switzerland",
	address="Cham",
	pages="23--55",
	isbn="978-3-031-91118-7"
}

@article{nic:pia:tes:multQuant, 
	title={Non-contractive Logics, Paradoxes, and Multiplicative Quantifiers}, 
	volume={17}, 
	DOI={10.1017/S1755020323000138}, 
	number={4}, 
	journal={The Review of Symbolic Logic}, 
	author={Nicolai, Carlo and Piazza, Mario and Tesi, Matteo}, 
	year={2024}, 
	pages={996–1017}
}

@article{dan:reg:89,
  title={The structure of multiplicatives},
  author={Danos, Vincent and Regnier, Laurent},
  journal={Archive for Mathematical logic},
  volume={28},
  number={3},
  pages={181--203},
  year={1989},
  doi={10.1007/BF01622878},
  publisher={Springer}
}

@article{gir:meanII,
  title={On the meaning of logical rules {II}: multiplicatives and additives},
  author={Girard, Jean-Yves},
  journal={NATO ASI Series F Computer and Systems Sciences},
  volume={175},
  pages={183--212},
  year={2000},
  publisher={Citeseer}
}

@InProceedings{acc:mai:20,
	author =	{Matteo Acclavio and Roberto Maieli},
	title =	{Generalized Connectives for Multiplicative Linear Logic},
	booktitle =	{28th EACSL Annual Conference on Computer Science Logic (CSL 2020)},
	pages =	{6:1--6:16},
	series =	{LIPIcs},
	ISBN =	{978-3-95977-132-0},
	ISSN =	{1868-8969},
	year =	{2020},
	volume =	{152},
	editor =	{Maribel Fern{\'a}ndez and Anca Muscholl},
	publisher =	{Schloss Dagstuhl--Leibniz-Zentrum fuer Informatik},
	address =	{Dagstuhl, Germany},
	URL =		{https://drops.dagstuhl.de/opus/volltexte/2020/11649},
	URN =		{urn:nbn:de:0030-drops-116490},
	doi =		{10.4230/LIPIcs.CSL.2020.6}
}

@InProceedings{saurin:Interpolation,
  author =	{Saurin, Alexis},
  title =	{{Interpolation as Cut-Introduction: On the Computational Content of Craig-Lyndon Interpolation}},
  booktitle =	{10th International Conference on Formal Structures for Computation and Deduction (FSCD 2025)},
  pages =	{32:1--32:21},
  series =	{Leibniz International Proceedings in Informatics (LIPIcs)},
  ISBN =	{978-3-95977-374-4},
  ISSN =	{1868-8969},
  year =	{2025},
  volume =	{337},
  editor =	{Fern\'{a}ndez, Maribel},
  publisher =	{Schloss Dagstuhl -- Leibniz-Zentrum f{\"u}r Informatik},
  address =	{Dagstuhl, Germany},
  URL =		{https://drops.dagstuhl.de/entities/document/10.4230/LIPIcs.FSCD.2025.32},
  URN =		{urn:nbn:de:0030-drops-236478},
  doi =		{10.4230/LIPIcs.FSCD.2025.32}
}

@misc{PN:interpolation,
	author = {Fiorillo, Luca and Osorio-Valencia, Juan and Saurin, Alexis},
	title = {On Correctness, Sequentialization and Interpolation},
	howpublished = {Talk at TLLA 2025: 17th International Workshop on the Theory and Applications of Linear Logic and Related Topics},
	year = {2025},
	note = {\url{https://lipn.univ-paris13.fr/TLLA/2025/abstracts/14-fiorillo-osoriovalencia-saurin.pdf}}
}

@book{mar:mon:unification,
	title={Unification in linear time and space: A structured presentation},
	author={Martelli, Alberto and Montanari, Ugo},
	year={1976},
	publisher={Istituto di Elaborazione della Informazione, Consiglio Nazionale delle Ricerche}
}

@article{ret:handsome,
	title = {Handsome proof-nets: perfect matchings and cographs},
	journal = {Theoretical Computer Science},
	volume = {294},
	number = {3},
	pages = {473-488},
	year = {2003},
	note = {Linear Logic},
	issn = {0304-3975},
	doi = {https://doi.org/10.1016/S0304-3975(01)00175-X},
	url = {https://www.sciencedirect.com/science/article/pii/S030439750100175X},
	author = {Christian Retoré}
}

@article{bellin:subnets,
	title={Subnets of proof-nets in multiplicative linear logic with MIX},
	author={Bellin, Gianluigi},
	journal={Mathematical Structures in Computer Science},
	volume={7},
	number={6},
	pages={663--669},
	year={1997},
	publisher={Cambridge University Press}
}

@inproceedings{hug:unification,
	author = {Hughes, Dominic J. D.},
	title = {Unification nets: canonical proof net quantifiers},
	year = {2018},
	isbn = {9781450355834},
	publisher = {Association for Computing Machinery},
	address = {New York, NY, USA},
	url = {https://doi.org/10.1145/3209108.3209159},
	doi = {10.1145/3209108.3209159},
	booktitle = {Proceedings of the 33rd Annual ACM/IEEE Symposium on Logic in Computer Science},
	pages = {540–549},
	numpages = {10},
	location = {Oxford, United Kingdom},
	series = {LICS '18}
}

@InProceedings{hei:hug:str:ALL1,
	author =	{Heijltjes, Willem B. and Hughes, Dominic J. D. and Stra{\ss}burger, Lutz},
	title =	{{Proof Nets for First-Order Additive Linear Logic}},
	booktitle =	{4th International Conference on Formal Structures for Computation and Deduction (FSCD 2019)},
	pages =	{22:1--22:22},
	series =	{Leibniz International Proceedings in Informatics (LIPIcs)},
	ISBN =	{978-3-95977-107-8},
	ISSN =	{1868-8969},
	year =	{2019},
	volume =	{131},
	editor =	{Geuvers, Herman},
	publisher =	{Schloss Dagstuhl -- Leibniz-Zentrum f{\"u}r Informatik},
	address =	{Dagstuhl, Germany},
	URL =		{https://drops.dagstuhl.de/entities/document/10.4230/LIPIcs.FSCD.2019.22},
	URN =		{urn:nbn:de:0030-drops-105297},
	doi =		{10.4230/LIPIcs.FSCD.2019.22}
}

@inproceedings{girard:96:PN,
	author = {Jean-Yves Girard},
	title = {Proof-nets : the parallel syntax for proof-theory},
	booktitle = {Logic and Algebra},
	publisher = {Marcel Dekker, New York},
	editor = {Aldo Ursini and Paolo Agliano},
	year = {1996}
}

@unpublished{hughes:firstorder,
	author    = {Dominic Hughes},
	title={First-order proofs without syntax},
	note="Berkeley Logic Colloquium",
	year=2014,
}

@INPROCEEDINGS{hug:str:wu:CP1,
	author={Hughes, Dominic J. D. and Straßburger, Lutz and Wu, Jui-Hsuan},
	booktitle={2021 36th Annual ACM/IEEE Symposium on Logic in Computer Science (LICS)},
	title={Combinatorial Proofs and Decomposition Theorems for First-order Logic},
	year={2021},
	volume={},
	number={},
	pages={1-13},
	doi={10.1109/LICS52264.2021.9470579}}

@inproceedings{hei:hug:conflict,
	author = {Hughes, Dominic and Heijltjes, Willem},
	title = {Conflict nets: Efficient locally canonical MALL proof nets},
	year = {2016},
	isbn = {9781450343916},
	publisher = {Association for Computing Machinery},
	address = {New York, NY, USA},
	url = {https://doi.org/10.1145/2933575.2934559},
	doi = {10.1145/2933575.2934559},
	booktitle = {Proceedings of the 31st Annual ACM/IEEE Symposium on Logic in Computer Science},
	pages = {437–446},
	numpages = {10},
	location = {New York, NY, USA},
	series = {LICS '16}
}

@book{M80,
	author       = {Robin Milner},
	title        = {A Calculus of Communicating Systems},
	series       = {Lecture Notes in Computer Science},
	volume       = {92},
	publisher    = {Springer},
	year         = {1980},
	url          = {https://doi.org/10.1007/3-540-10235-3},
	doi          = {10.1007/3-540-10235-3},
	isbn         = {3-540-10235-3},
	bibsource    = {dblp computer science bibliography, https://dblp.org}
}

@article{gabbay:pitts:nominal,
	author = {Gabbay, Murdoch J. and Pitts, Andrew M.},
	title = {A New Approach to Abstract Syntax with Variable Binding},
	year = {2002},
	issue_date = {Jul 2002},
	publisher = {Springer-Verlag},
	address = {Berlin, Heidelberg},
	volume = {13},
	number = {3–5},
	issn = {0934-5043},
	url = {https://doi.org/10.1007/s001650200016},
	doi = {10.1007/s001650200016},
	journal = {Form. Asp. Comput.},
	month = {jul},
	pages = {341–363},
	numpages = {23}
}

@PHDTHESIS{danos:phd,
	url = "http://www.theses.fr/1990PA077188",
	title = "La Logique Linéaire appliquée à l'étude de divers processus de normalisation (principalement du Lambda-calcul)",
	author = "Danos, Vincent",
	year = "1990",
	note = "Thèse de doctorat dirigée par Girard, Jean-Yves Mathématiques Paris 7 1990",
	note = "1990PA077188",
	school = "Université Paris 7"
}

@article{yetter:90, 
	title={Quantales and (noncommutative) linear logic}, 
	volume={55}, DOI={10.2307/2274953}, 
	number={1}, 
	journal={Journal of Symbolic Logic}, 
	author={Yetter, David N.}, 
	year={1990}, 
	pages={41–64}
}

@article{abr:rue:noncomI,
	title = {Non-commutative logic {I}: the multiplicative fragment},
	journal = {Annals of Pure and Applied Logic},
	volume = {101},
	number = {1},
	pages = {29-64},
	year = {1999},
	issn = {0168-0072},
	doi = {https://doi.org/10.1016/S0168-0072(99)00014-7},
	url = {https://www.sciencedirect.com/science/article/pii/S0168007299000147},
	author = {V.Michele Abrusci and Paul Ruet}
}

@article{pitts:nominal,
	title = {Nominal logic, a first order theory of names and binding},
	journal = {Information and Computation},
	volume = {186},
	number = {2},
	pages = {165-193},
	year = {2003},
	note = {Theoretical Aspects of Computer Software (TACS 2001)},
	issn = {0890-5401},
	doi = {https://doi.org/10.1016/S0890-5401(03)00138-X},
	url = {https://www.sciencedirect.com/science/article/pii/S089054010300138X},
	author = {Andrew M. Pitts}
}

@article{gir:ll,
	title = {Linear logic},
	journal = {Theoretical Computer Science},
	volume = {50},
	number = {1},
	pages = {1-101},
	year = {1987},
	issn = {0304-3975},
	doi = {10.1016/0304-3975(87)90045-4},
	author = {Jean-Yves Girard}
}

@book{montesi:book,
	author={Montesi, Fabrizio},
	title={Introduction to Choreographies},
	place={Cambridge},
	doi={10.1017/9781108981491},
	publisher={Cambridge University Press},
	year={2023}
}

@phdthesis{retore:phd,
	author =  {Retor{\'e}, Christian},
	title =   "R{\'e}seaux et S{\'e}quents Ordonn{\'e}s",
	school =  {Universit{\'e} Paris VII},
	year =    1993,
}

@article{ret:newPomset,
	title={Pomset Logic: The other approach to noncommutativity in logic},
	author={Retor{\'e}, Christian},
	journal={Joachim Lambek: The Interplay of Mathematics, Logic, and Linguistics},
	pages={299--345},
	year={2021},
	publisher={Springer}
}

@article{tiu:SIS-II,
	author = "Tiu, Alwen Fernanto",
	title = "A System of Interaction and Structure {II}:
	{T}he Need for Deep Inference",
	journal=lmcs,
	volume=2,
	number=2,
	pages="1--24",
	year = "2006",
	doi = "10.2168/LMCS-2(2:4)2006",
}

@unpublished{tito:str:SIS-III,
	TITLE = {{A System of Interaction and Structure III: The Complexity of BV and Pomset Logic}},
	AUTHOR = {Lê Thành Dũng Nguyên and Lutz Straßburger},
	URL = {https://hal.inria.fr/hal-03909547},
	NOTE = {working paper or preprint},
	YEAR = {2022},
	HAL_ID = {hal-03909547},
	HAL_VERSION = {v1},
}

@InProceedings{tito:lutz:csl22,
	author =	{Lê Thành Dũng Nguyên and Lutz Straßburger},
	title =	{{BV and Pomset Logic are not the same}},
	booktitle =	{30th EACSL Annual Conference on Computer Science Logic (CSL 2022)},
	pages =	{3:1--3:17},
	series =	{Leibniz International Proceedings in Informatics (LIPIcs)},
	ISBN =	{978-3-95977-218-1},
	ISSN =	{1868-8969},
	year =	{2022},
	volume =	{216},
	editor =	{Manea, Florin and Simpson, Alex},
	publisher =	{Schloss Dagstuhl -- Leibniz-Zentrum f{\"u}r Informatik},
	address =	{Dagstuhl, Germany},
	URL =		{https://drops.dagstuhl.de/opus/volltexte/2022/15723},
	URN =		{urn:nbn:de:0030-drops-157231},
	doi =		{10.4230/LIPIcs.CSL.2022.3}
}

@misc{hughes:conflict,
      title={Abstract p-time proof nets for MALL: Conflict nets}, 
      author={Dominic J. D. Hughes},
      year={2008},
      eprint={0801.2421},
      archivePrefix={arXiv},
      primaryClass={math.LO},
      url={https://arxiv.org/abs/0801.2421}, 
}

@article{hug:van:slice,
	author = {Hughes, Dominic J. D. and Van Glabbeek, Rob J.},
	title = {Proof nets for unit-free multiplicative-additive linear logic},
	year = {2005},
	issue_date = {October 2005},
	publisher = {Association for Computing Machinery},
	address = {New York, NY, USA},
	volume = {6},
	number = {4},
	issn = {1529-3785},
	url = {https://doi.org/10.1145/1094622.1094629},
	doi = {10.1145/1094622.1094629},
	journal = {ACM Trans. Comput. Logic},
	month = {oct},
	pages = {784–842},
	numpages = {59}
}
%%%%%%%%%%%%%%%%%%%%%%%%%%%%%%%%%%%%%%%%%%%%%%%%%%%%%%%%%%%%%%%%
%%%%%%%%%%%%%%%%%%%%%%%%%%%%%%%%%%%%%%%%%%%%%%%%%%%%%%%%%%%%%%%%

\clearpage

%%%%%%%%%%%%%%%%%%%%%%%%%%%%%%%%%%%%%%%%%%%%%%%%%%%%%%%%%%%%%%%%
%%%%%%%%%%%%%%%%%%%%%%%%%%%%%%%%%%%%%%%%%%%%%%%%%%%%%%%%%%%%%%%%
%%%%%%%%%%%%%%%%%%%%%%%%%%%%%%%%%%%%%%%%%%%%%%%%%%%%%%%%%%%%%%%%
\appendix
%%%%%%%%%%%%%%%%%%%%%%%%%%%%%%%%%%%%%%%%%%%%%%%%%%%%%%%%%%%%%%%%
%%%%%%%%%%%%%%%%%%%%%%%%%%%%%%%%%%%%%%%%%%%%%%%%%%%%%%%%%%%%%%%%
%%%%%%%%%%%%%%%%%%%%%%%%%%%%%%%%%%%%%%%%%%%%%%%%%%%%%%%%%%%%%%%%

%%%%%%%%%%%%%%%%%%%%%%%%%%%%%%%%%%%%%%%%%%%%%%%%%%%%%%%%%%%%%%%%
%%%%%%%%%%%%%%%%%%%%%%%%%%%%%%%%%%%%%%%%%%%%%%%%%%%%%%%%%%%%%%%%
%%%%%%%%%%%%%%%%%%%%%%%%%%%%%%%%%%%%%%%%%%%%%%%%%%%%%%%%%%%%%%%%
\section{Details of the proofs of \Cref{lem:confCoal}}\label{app:proofs}
%%%%%%%%%%%%%%%%%%%%%%%%%%%%%%%%%%%%%%%%%%%%%%%%%%%%%%%%%%%%%%%%
%%%%%%%%%%%%%%%%%%%%%%%%%%%%%%%%%%%%%%%%%%%%%%%%%%%%%%%%%%%%%%%%
%%%%%%%%%%%%%%%%%%%%%%%%%%%%%%%%%%%%%%%%%%%%%%%%%%%%%%%%%%%%%%%%

%%%%%%%%%%%%%%%%%%%%%%%%%%%%%%%%%%%%%%%%%%%%%%%%%%%%%%%%%%%%%%%%
\def\nowDer{
	\newline The two distinct derivations labeling the link in the bottom-right corner of the diagram according to the anticlockwise and clockwise sequence of coalescence steps are:
}
\begin{lemma}
	Let $\pn$ be a \conet such that $\pn_1 \leftarrow \pn \coalto \pn_2$ then there is a \conet $\pn'$ such that $\pn_1 \coaltos \pn'$ and $\pn_2 \coaltos \pn'$, 
	and the derivations corresponding to the coalescence paths $\pn\coalto\pn_2\coaltos \pn'$ and $\pn\coalto\pn_1\coaltos \pn'$ are equivalent modulo the local rule permutations in \Cref{fig:permutations1}.
\end{lemma}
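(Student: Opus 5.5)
We argue by a local-confluence analysis of critical pairs of coalescence steps, decorated with the derivations labelling the leaves as in the proof of \Cref{thm:CNsoundAndComplete}. Given $\pn_1 \leftarrow \pn \coalto \pn_2$, we first separate the \emph{disjoint} from the \emph{overlapping} cases. Two steps are disjoint if they rewrite different leaves and neither changes the parent structure the other relies on. Then each step can still be applied after the other, since every side condition in \Cref{fig:coalescence} only mentions the link(s) being rewritten and the formulas they contain. So $\pn'$ is obtained by performing both steps. The two derivations labelling the final leaves coincide, or differ only by the order of two rules acting on distinct formulas. This is exactly one of the local permutations of \Cref{fig:permutations1}: unary/unary, unary/binary, or binary/binary with $\brrule\in\set{\ltens,\lprec}$. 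The store annotations match because each rule only loads or pops its own nominal variable, and the stores of the premises of $\ltens$ and $\lprec$ are disjoint.

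The overlapping cases are those where both steps act on the same link $\la$, or on links sharing a parent $\conc$/$\conf$ node. We enumerate them by the pair of step kinds.

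\textbf{Two unary steps on the same link $\la$ at different formulas.} This is the standard unary/unary commutation. The only new check concerns the witness map. Composing $\fsubminus{x}$ for $\exists$ or $\napopr$ with the identity update of the other steps is order independent, because the restrictions act on distinct bound variables (judgements are clean). The freshness side conditions of $\forall$, $\naur$ and $\naloadr$ are also preserved after the other step, because the other step only turns a formula into a supformula whose free variables are contained in the old ones plus the removed bound variable.

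\textbf{A unary step and a binary step sharing a link.} If a unary step applies to $\la$ while $\la$ and $\lb$ can be joined by $\ltens$, $\lprec$ or $\napopr$, then after the unary step the joined link still satisfies the binary side condition. The binary step produces a link on which the same unary step applies. The results agree because $(\dualizerof[\la]\fsubminus{x})\join\dualizerof[\lb]=(\dualizerof[\la]\join\dualizerof[\lb])\fsubminus{x}$ when $x$ does not occur in $\lb$, and coherence is preserved under restriction. The derivations differ by the unary/binary permutation. The $\lwith$ case under a $\conf$ node yields the $\lwith$/unary permutation (last local one). Here the unary step must be done on both branches, which is possible since both premises contain the same context.

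\textbf{Two binary steps.} For two $\conc$-steps under the same $\conc$ node, associativity and commutativity of $\join$ give the binary/binary permutation. For the pairs $\conf/\conf$ and $\lwith/\lwith$ we follow the argument of \cite{hei:hug:conflict,hei:hug:str:ALL1} essentially verbatim. The splitting by a $\conf$-step along two different $\lwith$ formulas is reconciled by further $\conf$-steps and the $\lwith/\lwith$ permutation. Witnesses only enter through $\join$, which is associative and commutative, and $\conf$- and $\bullet$-steps do not touch dualizers.

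\textbf{Main obstacle.} We expect the main obstacle to be the nominal cases: $\napopr$ against $\naloadr$, $\naur$, or another $\napopr$ sharing a nominal link. Here we must show two things. First, the side condition $\set{x,y}\notin\linktree$ of $\naur$ excludes the bad overlaps. Second, a nominal link $\set{x,y}$ can be consumed by at most one $\napopr$-step, so two $\napopr$-steps cannot compete for the same link. We first try to prove a small invariant along coalescence: every nominal link occurs at most once in each $\conc$-branch. This invariant should follow from validity and from the linear (disjoint-store) nature of the $\ltens/\lprec$ steps. With it, the remaining overlaps reduce to the disjoint case above.
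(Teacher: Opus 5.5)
Your overall method is the paper's: a critical-pair analysis of coalescence steps with leaves labelled by derivations, with $\conf/\conf$ deferred to Hughes--Heijltjes. Your $\lprec$, quantifier and $\lwith$ cases match its diagrams.

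The gap is in the nominal part. You claim that two $\napopr$-steps cannot compete for the same nominal link, but they can, because the $\napopr$ step is symmetric in the two dual quantifiers. Take $\lNu xA$ and $\lYa yB$ in the sequent, a link $\la$ containing $A$ and $B$, and the nominal link $\set{x,y}$ under the same $\conc$-node. You may either pop $\lYa yB$ (reading $\nabla=\lnewsymb$) or pop $\lNu xA$ (reading $\nabla=\lyasymb$). Both steps consume the \emph{same single occurrence} of $\set{x,y}$ together with the same $\la$. So your invariant (at most one occurrence per $\conc$-branch) is beside the point, and this overlap does not reduce to the disjoint case.

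The paper treats exactly this as the special case $\nupopr/\yapopr$. A load step on the remaining quantifier brings both sides to $\set{\lNu xA,\lYa yB,\Gamma}$. However, the two sequentializations are $\nuloadr$ over $\nupopr$ (witness $x$) and $\yaloadr$ over $\yapopr$ (witness $y$). No permutation in \Cref{fig:permutations1} changes which rule introduces a given quantifier occurrence, so these derivations are not locally equivalent. The paper relates them only by $\pweq$, i.e.\ modulo the choice of fresh name (\Cref{sec:witnesses}). This case therefore cannot be closed with the equivalence in the statement; it needs that extra identification.

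Your unary/binary paragraph has a related hole. You claim that after a unary step on $\la$ the binary side condition still holds. This fails for $\napopr$ when the unary step is $\naloadr$ on one of the two quantifiers paired by the nominal link, say on $\lYa yB$ while $\la$ contains $A$ and $B$. That load step is enabled, since unlike $\naur$ it does not check $\set{x,y}\notin\linktree$. Afterwards no link ever contains the body $B$ again, so no pop can consume $\set{x,y}$, and the $\conc$-node keeps the nominal link as a child forever. This pair does not join at all, and the paper's case list does not cover it either. It has to be excluded, for instance by an extra side condition on $\naloadr$, rather than closed by a permutation.
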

\begin{proof}
	We only discuss the critical pairs for coalescence rules not already discussed in \cite{hei:hug:conflict}.
	Together with the confluence diagram of each critical pair, we show the two derivations corresponding to the two sequences of coalescence steps.
	\begin{itemize}
		\item Case $\lpar/\lprec$:

		$$
		\adjustbox{max width=\textwidth}{$\begin{array}{ccc}
			\viA1 \vpz1{\lprec} \viB1,\ldots,\viA{n} \vpz2{\lprec} \viB{n}, \vC1 \vpz3{\lpar} \vD1, \vpz4{\Gamma},\vpz5{\Delta}
			\pzlinks{A1/{An}/12/\la/red/{C1,D1,pz4}}
			\pzlinks{B1/{Bn}/-12/\lb/blue/{pz5}}
			&\to&
			\viA1 \vpz1{\lprec} \viB1,\ldots,\viA n \vpz2{\lprec} \viB n,\vC1 \vpz3{\lpar} \vD1, \vpz4{\Gamma},\vpz5{\Delta}
			\pzlinks{pz1/pz2/12/\labb/pzgreen/{C1,D1,pz4,pz5}}
			\\\\
			\downarrow && \downarrow
			\\\\
			\viA1 \vpz1{\lprec} \viB1,\ldots,\viA n \vpz2{\lprec} \viB n, \vC1 \vpz3{\lpar} \vD1, \vpz4{\Gamma},\vpz5{\Delta}
			\pzlinks{A1/{An}/12/\laa/red/{pz3,pz4}}
			\pzlinks{B1/{Bn}/-12/\lb/blue/{pz5}}
			&\to&
			\viA1 \vpz1{\lprec} \viB1,\ldots,\viA n \vpz2{\lprec} \viB n, \vC1 \vpz3{\lpar} \vD1, \vpz4{\Gamma},\vpz5{\Delta}
			\pzlinks{pz1/pz2/12/\laabb/magenta/{pz3,pz4,pz5}}
		\end{array}$}
		$$

		With $\dualizerof[\laa]= \dualizerof[\la]$ and $\dualizerof[\labb] = \dualizerof[\laabb] = \dualizerof[\la] \duasum \dualizerof[\lb]$.
		\nowDer

		$$\adjustbox{max width=\textwidth}{$\begin{array}{c}
			\vlderivation{
			\vlin{\lpar}{}{
				\sdash A_1\lprec B_1,\ldots, A_n\lprec B_n,C\lpar D,\Gamma, \Delta
			}{
				\vliin{\lprec}{}{
					\sdash A_1\lprec B_1, \ldots, A_n\lprec B_n,C, D,\Gamma, \Delta
				}{
					\vlhy{\sdash A_1,\ldots, A_n,C \lpar D,\Gamma}
				}{
					\vlhy{\sdash B_1,\ldots, B_n,\Delta}
				}
			}
		}
		\\[20pt]
		\quad\peq\quad
		\\[5pt]
		\vlderivation{
			\vliin{\lprec}{}{\sdash A_1\lprec B_1,\ldots, A_n\lprec B_n,C\lpar D,\Gamma, \Delta}{
				\vlin{\lpar}{}{
					\sdash A_1,\ldots, A_n,C \lpar D,\Gamma
				}{
					\vlhy{\sdash A_1,\ldots, A_n,C, D,\Gamma}
				}
			}{
				\vlhy{\sdash B_1,\ldots, B_n,\Delta}
			}
		}
		\end{array}
		$}
		$$

		\item Case $\lplus/\lprec$:
		$$
		\adjustbox{max width=\textwidth}{$\begin{array}{ccc}
			\viA1 \vpz1{\lprec} \viB1,\ldots,\viA{n} \vpz2{\lprec} \viB{n}, \vC1 \vpz3{\oplus} \vD1, \vpz4{\Gamma},\vpz5{\Delta}
			\pzlinks{A1/{An}/12/\la/red/{C1,pz4}}
			\pzlinks{B1/{Bn}/-12/\lb/blue/{pz5}}
			&\to&
			\viA1 \vpz1{\lprec} \viB1,\ldots,\viA{n} \vpz2{\lprec} \viB{n},\vC1 \vpz3{\oplus} \vD1, \vpz4{\Gamma},\vpz5{\Delta}
			\pzlinks{pz1/pz2/12/\labb/pzgreen/{C1,pz4,pz5}}
			\\\\
			\downarrow && \downarrow
			\\\\
			\viA1 \vpz1{\lprec} \viB1,\ldots,\viA{n} \vpz2{\lprec} \viB{n}, \vC1 \vpz3{\oplus} \vD1, \vpz4{\Gamma},\vpz5{\Delta}
			\pzlinks{A1/{An}/12/\laa/red/{pz3,pz4}}
			\pzlinks{B1/{Bn}/-12/\lb/blue/{pz5}}
			&\to&
			\viA1 \vpz1{\lprec} \viB1,\ldots,\viA{n} \vpz2{\lprec} \viB{n}, \vC1 \vpz3{\oplus} \vD1, \vpz4{\Gamma},\vpz5{\Delta}
			\pzlinks{pz1/pz2/12/\laabb/magenta/{pz3,pz4,pz5}}
		\end{array}$}
		$$

		With $\dualizerof[\laa]= \dualizerof[\la]$ and $\dualizerof[\labb] = \dualizerof[\laabb] = \dualizerof[\la] \duasum \dualizerof[\lb]$.
		\nowDer
		$$
		\adjustbox{max width=\textwidth}{$\begin{array}{c}
			\vlderivation{
			\vlin{\lplus}{}{
				\sdash A_1\lprec B_1, \ldots, A_n\lprec B_n, C\lplus D,\Gamma, \Delta
			}{
				\vliin{\lprec}{}{
					\sdash A_1\lprec B_1,\ldots, A_n\lprec B_n,C,\Gamma, \Delta
				}{
					\vlhy{\sdash A_1,\ldots, A_n,C,\Gamma}
				}{
					\vlhy{\sdash B_1,\ldots, B_n,\Delta}
				}
			}
		}
		\\[20pt]
		\quad\peq\quad
		\\[5pt]
		\vlderivation{
			\vliin{\lprec}{}{\sdash A_1\lprec B_1,\ldots, A_n\lprec B_n, C\lplus D,\Gamma, \Delta}{
				\vlin{\lplus}{}{
					\sdash A_1,\ldots, A_n, C,\Gamma
				}{
					\vlhy{\sdash A_1,\ldots, A_n, C,\Gamma}
				}
			}{
				\vlhy{\sdash B_1,\ldots, B_n,\Delta}
			}
		}
		\end{array}$}
		$$

		\item Case $\lprec/\lprec$:

		$$
		\hskip-1.5em\adjustbox{max width=\textwidth}{$
		\begin{array}{ccc}
			\viA1 \vpz1{\lprec} \viB1,\ldots, \viA n \vpz2{\lprec} \viB n, \viC1 \vpz3{\lprec} \viD1,\ldots, \viC m \vpz4{\lprec} \viD m, \vpz5{\Gamma},\vpz6{\Delta}, \vpz7{\Sigma}
			\pzlinks{A1/{An}/12/\la/red/{C1,{Cm},pz5}}
			\pzlinks{B1/{Bn}/-12/\lb/blue/{pz6}}
			\pzlinks{D1/{Dm}/-18/\lc/magenta/{pz7}}
			&\to&
			\viA1 \vpz1{\lprec} \viB1,\ldots, \viA n \vpz2{\lprec} \viB n, \viC1 \vpz3{\lprec} \viD1,\ldots, \viC m \vpz4{\lprec} \viD m, \vpz5{\Gamma},\vpz6{\Delta}, \vpz7{\Sigma}
			\pzlinks{pz1/pz2/12/\labb/pzgreen/{C1,{Cm},pz5,pz6}}
			\pzlinks{D1/{Dm}/-12/\lc/magenta/{pz7}}
			\\\\
			\downarrow && \downarrow
			\\\\
			\viA1 \vpz1{\lprec} \viB1,\ldots, \viA n \vpz2{\lprec} \viB n, \viC1 \vpz3{\lprec} \viD1,\ldots, \viC m \vpz4{\lprec} \viD m, \vpz5{\Gamma},\vpz6{\Delta}, \vpz7{\Sigma}
			\pzlinks{A1/{An}/12/\lac/violet/{pz3,pz4,pz5,pz7}}
			\pzlinks{B1/{Bn}/-12/\lb/blue/{pz6}}
			&\to&
			\viA1 \vpz1{\lprec} \viB1,\ldots, \viA n \vpz2{\lprec} \viB n, \viC1 \vpz3{\lprec} \viD1,\ldots, \viC m \vpz4{\lprec} \viD m, \vpz5{\Gamma},\vpz6{\Delta}, \vpz7{\Sigma}
			\pzlinks{pz1/pz2/12/\labc/brown/{pz3,pz4,pz5,pz6,pz7}}
		\end{array}
		$}$$

		With $\dualizerof[\labb] = \dualizerof[\la] \duasum \dualizerof[\lb]$, $\dualizerof[\lac] = \dualizerof[\la] \duasum \dualizerof[\lc]$
		and $\dualizerof[\labc] = \dualizerof[\la] \duasum \dualizerof[\lb] \duasum \dualizerof[\lc]$.
		\nowDer
		$$\adjustbox{max width=\textwidth}{$\begin{array}{c}
			\vlderivation{
			\vliin{\lprec}{}{
				\sdash A_1\lprec B_1,\ldots, A_n\lprec B_n, C_1\lprec D_1,\ldots, C_m \lprec D_m,\Gamma, \Delta,\Sigma
			}{
				\vliin{\lprec}{}{
					\sdash A_1,\ldots, A_n, C_1\lprec D_1,\ldots, C_m \lprec D_m,\Gamma,\Sigma
				}{
					\vlhy{\sdash A_1,\ldots, A_n,C_1,\ldots, C_m, \Gamma}
				}{
					\vlhy{\sdash D_1,\ldots, D_m,\Sigma}
				}
			}{
				\vlhy{\sdash B_1,\ldots, B_n,\Delta}
			}
		}
		\\[20pt]
		\quad\peq\quad
		\\[5pt]
		\vlderivation{
			\vliin{\lprec}{}{\sdash A_1\lprec B_1,\ldots, A_n\lprec B_n, C_1\lprec D_1,\ldots, C_m \lprec D_m,\Gamma, \Delta,\Sigma}{
				\vliin{\lprec}{}{
					\sdash A_1\lprec B_1,\ldots, A_n\lprec B_n, C_1,\ldots, C_m,\Gamma, \Delta
				}{
					\vlhy{\sdash A_1,\ldots, A_n, C_1,\ldots, C_m,\Gamma}
				}{
					\vlhy{\sdash B_1,\ldots, B_n,\Delta}
				}
			}{
				\vlhy{\sdash D_1,\ldots, D_m,\Sigma}
			}
		}
		\end{array}$}
		$$

		\item Case $\ltens/\lprec$:
		$$\adjustbox{max width=\textwidth}{$
			\begin{array}{ccc}
				\\
				\viA1 \vpz1{\lprec} \viB1,\dots, \viA n \vpz2{\lprec} \viB n, \vC1 \vpz3{\ltens} \vD1, \vpz4{\Gamma}, \vpz5{\Delta}, \vpz6{\Sigma}
				\pzlinks{A1/{An}/12/\la/red/{C1,pz4}}
				\pzlinks{B1/{Bn}/-12/\lb/blue/{pz5}}
				\pzlinks{D1/pz6/-18/\lc/violet/}
				&\to&
				\viA1 \vpz1{\lprec} \viB1,\dots, \viA n \vpz2{\lprec} \viB n, \vC1 \vpz3{\ltens} \vD1, \vpz4{\Gamma}, \vpz5{\Delta}, \vpz6{\Sigma}
				\pzlinks{pz1/pz2/12/\labb/pzgreen/{C1,pz4,pz5}}
				\pzlinks{D1/pz6/-12/\lc/violet/}
				\\\\
				\downarrow && \downarrow
				\\\\
				\viA1 \vpz1{\lprec} \viB1,\dots, \viA n \vpz2{\lprec} \viB n, \vC1 \vpz3{\ltens} \vD1, \vpz4{\Gamma}, \vpz5{\Delta}, \vpz6{\Sigma}
				\pzlinks{A1/{An}/12/\lac/orange/{pz3,pz4,pz6}}
				\pzlinks{B1/{Bn}/-12/\lb/blue/{pz5}}
				&\to&
				\viA1 \vpz1{\lprec} \viB1,\dots, \viA n \vpz2{\lprec} \viB n, \vC1 \vpz3{\ltens} \vD1, \vpz4{\Gamma}, \vpz5{\Delta}, \vpz6{\Sigma}
				\pzlinks{pz1/pz2/12/\labc/brown/{pz3,pz4,pz5,pz6}}
				\\\\
			\end{array}
			$}$$

		With $\dualizerof[\labb] = \dualizerof[\la] \duasum \dualizerof[\lb]$, $\dualizerof[\lac] = \dualizerof[\la] \duasum \dualizerof[\lc]$
		and $\dualizerof[\labc] = \dualizerof[\la] \duasum \dualizerof[\lb] \duasum \dualizerof[\lc]$.
		\nowDer
		$$\adjustbox{max width=\textwidth}{$
			\begin{array}{c}
				\vlderivation{
				\vliin{\ltens}{}{
					\sdash A_1\lprec B_1,\ldots, A_n\lprec B_n,C\ltens D,\Gamma, \Delta,\Sigma
				}{
					\vliin{\lprec}{}{
						\sdash A_1\lprec B_1,\ldots, A_n\lprec B_n,C,\Gamma,\Delta
					}{
						\vlhy{\sdash A_1,\dots, A_n,C,\Gamma}
					}{
						\vlhy{\sdash B_1,\ldots,B_n,\Delta}
					}
				}{
					\vlhy{\sdash D,\Sigma}
				}
			}
			\\[20pt]
			\quad\peq\quad
			\\[5pt]
			\vlderivation{
				\vliin{\lprec}{}{
					\sdash A_1\lprec B_1,\ldots, A_n\lprec B_n,C\ltens D,\Gamma, \Delta,\Sigma
				}{
					\vliin{\ltens}{}{
						\sdash A_1,\ldots, A_n,C\ltens D,\Gamma, \Sigma
					}{
						\vlhy{\sdash A_1,\ldots,A_n,C,\Gamma}
					}{
						\vlhy{\sdash D,\Sigma}
					}
				}{
					\vlhy{\sdash B_1,\ldots,B_n,\Delta}
				}
			}
			\end{array}
		$}$$

		\

		\item Case $\exists/\lprec$ :
		$$
		\adjustbox{max width=\textwidth}{$\begin{array}{ccc}
			\viA1 \vpz1{\lprec} \viB1,\ldots, \viA n \vpz2{\lprec} \viB n, \vpz3{\exists}x.\vC1, \vpz4{\Gamma}, \vpz5{\Delta}
			\pzlinks{A1/{An}/12/\la/red/{C1,pz4}}
			\pzlinks{B1/{Bn}/-12/\lb/blue/{pz5}}
			&\to&
			\viA1 \vpz1{\lprec} \viB1,\ldots, \viA n \vpz2{\lprec} \viB n, \vpz3{\exists}x.\vC1, \vpz4{\Gamma}, \vpz5{\Delta}
			\pzlinks{pz1/pz2/12/\labb/pzgreen/{C1,pz4,pz5}}
			\\\\
			\downarrow && \downarrow
			\\\\
			\viA1 \vpz1{\lprec} \viB1,\ldots, \viA n \vpz2{\lprec} \viB n, \vpz3{\exists}x.\vC1, \vpz4{\Gamma}, \vpz5{\Delta}
			\pzlinks{A1/{An}/12/\laa/violet/{pz3,pz4}}
			\pzlinks{B1/{Bn}/-12/\lb/blue/{pz5}}
			&\to&
			\viA1 \vpz1{\lprec} \viB1,\ldots, \viA n \vpz2{\lprec} \viB n, \vpz3{\exists}x.\vC1, \vpz4{\Gamma}, \vpz5{\Delta}
			\pzlinks{pz1/pz2/12/\laabb/brown/{pz3,pz4,pz5}}
		\end{array}$}
		$$

		With $\dualizerof[ab] = \dualizerof[\la] \duasum \dualizerof[\lb]$, $\dualizerof[\laa] = \dualizerof[\la]\fsubminus{x}$
		and $\dualizerof[a'b] = \dualizerof[\labb] \fsubminus{x}$
		\nowDer
		$$
		\adjustbox{max width=\textwidth}{$
		\begin{array}{c}
			\vlderivation{
			\vlin{\exists}{}{
				\sdash A_1\lprec B_1,\ldots, A_n\lprec B_n,\exists x C,\Gamma, \Delta
			}{
				\vliin{\lprec}{}{
					\sdash A_1\lprec B_1,\ldots, A_n\lprec B_n,C \fsubst{c}{x},\Gamma, \Delta
				}{
					\vlhy{\sdash A_1,\ldots, A_n,C \fsubst{c}{x},\Gamma}
				}{
					\vlhy{\sdash B_1,\ldots, B_n,\Delta}
				}
			}
		}
		\\[20pt]
		\quad\peq\quad
		\\[5pt]
		\vlderivation{
			\vliin{\lprec}{}{\sdash A_1\lprec B_1,\ldots, A_n\lprec B_n, \exists xC , \Gamma, \Delta}{
				\vlin{\exists}{}{
					\sdash A_1,\ldots, A_n,\exists xC ,\Gamma
				}{
					\vlhy{\sdash A_1,\ldots, A_n,C \fsubst{c}{x} ,\Gamma}
				}
			}{
				\vlhy{\sdash B_1,\ldots, B_n,\Delta}
			}
		}
		\end{array}
		$}
		$$

		\

		\item Case $\forall/\lprec$:
		$$
		\adjustbox{max width=\textwidth}{$\begin{array}{ccc}
			\viA1 \vpz1{\lprec} \viB1,\ldots, \viA n \vpz2{\lprec} \viB n, \vpz3{\forall}x.\vC1, \vpz4{\Gamma}, \vpz5{\Delta}
			\pzlinks{A1/{An}/12/\la/red/{C1,pz4}}
			\pzlinks{B1/{Bn}/-12/\lb/blue/{pz5}}
			&\to&
			\viA1 \vpz1{\lprec} \viB1,\ldots, \viA n \vpz2{\lprec} \viB n, \vpz3{\forall}x.\vC1, \vpz4{\Gamma}, \vpz5{\Delta}
			\pzlinks{pz1/pz2/12/\labb/pzgreen/{C1,pz4,pz5}}
			\\\\
			\downarrow && \downarrow
			\\\\
			\viA1 \vpz1{\lprec} \viB1,\ldots, \viA n \vpz2{\lprec} \viB n, \vpz3{\forall}x.\vC1, \vpz4{\Gamma}, \vpz5{\Delta}
			\pzlinks{A1/{An}/12/\la/violet/{pz3,pz4}}
			\pzlinks{B1/{Bn}/-12/\lb/blue/{pz5}}
			&\to&
			\viA1 \vpz1{\lprec} \viB1,\ldots, \viA n \vpz2{\lprec} \viB n, \vpz3{\forall}x.\vC1, \vpz4{\Gamma}, \vpz5{\Delta}
			\pzlinks{pz1/pz2/12/\labb/brown/{pz3,pz4,pz5}}
		\end{array}$}
		$$

		With $\dualizerof[\labb] = \dualizerof[\la] \duasum \dualizerof[\lb]$.
		\nowDer
		$$
		\adjustbox{max width=\textwidth}{$
		\begin{array}{c}
			\vlderivation{
			\vlin{\forall}{}{
				\sdash A_1\lprec B_1,\ldots, A_n\lprec B_n,\lFa x C,\Gamma, \Delta
			}{
				\vliin{\lprec}{}{
					\sdash A_1\lprec B_1,\ldots, A_n\lprec B_n,C  ,\Gamma, \Delta
				}{
					\vlhy{\sdash A_1,\ldots, A_n,C ,\Gamma}
				}{
					\vlhy{\sdash B_1,\ldots, B_n,\Delta}
				}
			}
		}
		\\[20pt]
		\quad\peq\quad
		\\[5pt]
		\vlderivation{
			\vliin{\lprec}{}{\sdash A_1\lprec B_1,\ldots, A_n\lprec B_n, \lFa xC , \Gamma, \Delta}{
				\vlin{\forall}{}{
					\sdash A_1,\ldots, A_n,\lFa xC ,\Gamma
				}{
					\vlhy{\sdash A_1,\ldots, A_n,C  ,\Gamma}
				}
			}{
				\vlhy{\sdash B_1,\ldots, B_n,\Delta}
			}
		}
		\end{array}
		$}
		$$

		\

		\item Case $\naloadr/\lprec$ with $\nabla\in\set{\lnewsymb,\lyasymb}$:
		similarly to the case $\forall/\lprec$, but considering the rule $\naloadr$ and the nominal quantifier $\nabla$ instead of $\forall$.

		$$
		\adjustbox{max width=\textwidth}{$\begin{array}{ccc}
			\viA1 \vpz1{\lprec} \viB1,\ldots, \viA n \vpz2{\lprec} \viB n, \vpz3{\nabla}x.\vC1, \vpz4{\Gamma}, \vpz5{\Delta}
			\pzlinks{A1/{An}/12/\la/red/{C1,pz4}}
			\pzlinks{B1/{Bn}/-12/\lb/blue/{pz5}}
			&\to&
			\viA1 \vpz1{\lprec} \viB1,\ldots, \viA n \vpz2{\lprec} \viB n, \vpz3{\nabla}x.\vC1, \vpz4{\Gamma}, \vpz5{\Delta}
			\pzlinks{pz1/pz2/12/\labb/pzgreen/{C1,pz4,pz5}}
			\\\\
			\downarrow && \downarrow
			\\\\
			\viA1 \vpz1{\lprec} \viB1,\ldots, \viA n \vpz2{\lprec} \viB n, \vpz3{\nabla}x.\vC1, \vpz4{\Gamma}, \vpz5{\Delta}
			\pzlinks{A1/{An}/12/\la/violet/{pz3,pz4}}
			\pzlinks{B1/{Bn}/-12/\lb/blue/{pz5}}
			&\to&
			\viA1 \vpz1{\lprec} \viB1,\ldots, \viA n \vpz2{\lprec} \viB n, \vpz3{\nabla}x.\vC1, \vpz4{\Gamma}, \vpz5{\Delta}
			\pzlinks{pz1/pz2/12/\labb/brown/{pz3,pz4,pz5}}
		\end{array}$}
		$$
		With $\dualizerof[\labb] = \dualizerof[\la] \duasum \dualizerof[\lb]$.
		\nowDer
		$$
		\adjustbox{max width=\textwidth}{$
		\begin{array}{c}
			\vlderivation{
			\vlin{\naloadr}{}{
				\sdash A_1\lprec B_1,\ldots, A_n\lprec B_n,\lNa x C,\Gamma, \Delta
			}{
				\vliin{\lprec}{}{
					\sdash[x] A_1\lprec B_1,\ldots, A_n\lprec B_n,C ,\Gamma, \Delta
				}{
					\vlhy{\sdash[x] A_1,\ldots, A_n,C ,\Gamma}
				}{
					\vlhy{\sdash[x] B_1,\ldots, B_n,\Delta}
				}
			}
		}
		\\[20pt]
		\quad\peq\quad
		\\[5pt]
		\vlderivation{
			\vliin{\lprec}{}{\sdash A_1\lprec B_1,\ldots, A_n\lprec B_n, \lNa xC , \Gamma, \Delta}{
				\vlin{\naloadr}{}{
					\sdash A_1,\ldots, A_n,\lNa xC ,\Gamma
				}{
					\vlhy{\sdash[x] A_1,\ldots, A_n,C ,\Gamma}
				}
			}{
				\vlhy{\sdash B_1,\ldots, B_n,\Delta}
			}
		}
		\end{array}
		$}
		$$

		\

		\item Case $\naur/\lprec$ with $\nabla\in\set{\lnewsymb,\lyasymb}$:
		similar to the previous case, but considering the rule $\naur$ instead of $\naloadr$.

		\item Case $\napopr/\lprec$ with $\nabla\in\set{\lnewsymb,\lyasymb}$:

		$$
		\adjustbox{max width=\textwidth}{$\begin{array}{ccc}
			\viA1 \vpz1{\lprec} \viB1,\ldots, \viA n \vpz2{\lprec} \viB n, \vpz3{\nabla} \vx1.\vC1,\vpz4{\cneg \nabla} \vy1.\vD1, \vpz5{\Gamma}, \vpz6{\Delta}
			\pzlinks{A1/{An}/12/\la/red/{C1,D1,pz5}}
			\pzlinks{B1/{Bn}/-12/\lb/blue/{pz6}}
			\pzlinks{x1/y1/15/\lc/brown/}
			&\to&
			\viA1 \vpz1{\lprec} \viB1,\ldots, \viA n \vpz2{\lprec} \viB n, \vpz3{\nabla}\vx1.\vC1,\vpz4{\cneg \nabla}\vy1.\vD1, \vpz5{\Gamma}, \vpz6{\Delta}
			\pzlinks{pz1/pz2/12/\labb/pzgreen/{C1,D1,pz5,pz6}}
			\pzlinks{x1/y1/15/\lc/brown/}
			\\\\
			\downarrow && \downarrow
			\\\\
			\viA1 \vpz1{\lprec} \viB1,\ldots, \viA n \vpz2{\lprec} \viB n, \vpz3{\nabla}x.\vC1,\vpz4{\cneg \nabla}y.\vD1, \vpz5{\Gamma}, \vpz6{\Delta}
			\pzlinks{A1/{An}/12/\laa/violet/{C1,pz4,pz5}}
			\pzlinks{B1/{Bn}/-12/\lb/blue/{pz6}}
			&\to&
			\viA1 \vpz1{\lprec} \viB1,\ldots, \viA n \vpz2{\lprec} \viB n, \vpz3{\nabla}x.\vC1,\vpz4{\cneg \nabla}y.\vD1, \vpz5{\Gamma}, \vpz6{\Delta}
			\pzlinks{pz1/pz2/12/\laabb/purple/{C1,pz4,pz5,pz6}}
		\end{array}$}
		$$

		With $\dualizerof[\laa] = \dualizerof[a] \fsubminus{y}$, $\dualizerof[\labb] = \dualizerof[\la] \duasum \dualizerof[\lb]$
		and $\dualizerof[\laabb] = \dualizerof[\labb]\fsubminus{y}$.
		\nowDer
		$$\adjustbox{max width=\textwidth}{$
		\begin{array}{c}
			\vlderivation{
			\vlin{\napopr}{}{
				\sdash[\sS_1,\sS_2,\isna x] A_1\lprec B_1,\ldots, A_n\lprec B_n, C, \lnNa yD,\Gamma, \Delta
			}{
				\vliin{\lprec}{}{
					\sdash[\sS_1,\sS_2] A_1\lprec B_1,\ldots, A_n\lprec B_n, C , D\fsubst xy,\Gamma, \Delta
				}{
					\vlhy{\sdash[\sS_1] A_1,\ldots, A_n, C, D\fsubst xy ,\Gamma}
				}{
					\vlhy{\sdash[\sS_1] B_1,\ldots, B_n,\Delta}
				}
			}
		}
		\\[20pt]
		\quad\peq\quad
		\\[5pt]
		\vlderivation{
			\vliin{\lprec}{}{
				\sdash[\sS_1,\isna x] A_1\lprec B_1,\ldots, A_n\lprec B_n, C,\lnNa yD,\Gamma, \Delta
			}{
				\vlin{\napopr}{}{
					\sdash[\sS_1, \isna x] A_1,\ldots, A_n, C,\lnNa yD,\Gamma
				}{
					\vlhy{\sdash[\sS_1] A_1,\ldots, A_n,C ,D\fsubst{x}{y} ,\Gamma}
				}
			}{
				\vlhy{\sdash[\sS_2] B_1,\ldots, B_n,\Delta}
			}
		}
		\end{array}
		$}$$

\
		\item
		Case $\exists / \naloadr$ with  $\nabla\in\set{\lnewsymb,\lyasymb}$:

		$$
		\adjustbox{max width=\textwidth}{$
		\begin{array}{ccc}
			\vpz1{\exists}x. \vA1, \vpz3{\nabla}y.\vB1, \vpz4{\Gamma}
			\pzlinks{A1/B1/12/\la/red/{pz4}}
			&\to&
			\vpz1{\exists}x. \vA1, \vpz3{\nabla}y.\vB1, \vpz4{\Gamma}
			\pzlinks{pz1/B1/12/\link[black]{a_1}/pzgreen/{pz4}}
			\\
			\downarrow && \downarrow
			\\\\
			\vpz1{\exists}x. \vA1,\quad \vpz3{\nabla}y.\vB1, \vpz4{\Gamma}
			\pzlinks{A1/pz3/12/\link[black]{a_2}/violet/{pz4}}
			&\to&
			\vpz1{\exists}x. \vA1,\quad \vpz3{\nabla}y.\vB1, \vpz4{\Gamma}
			\pzlinks{pz1/pz3/12/\link[black]{a_3}/purple/{pz4}}
		\end{array}$}
		$$

		With $\dualizerof[a_1] = \dualizerof[\la] \fsubminus{x}$, $\dualizerof[a_2] = \dualizerof[\la] $
		and $\dualizerof[a_3] = \dualizerof[a_1]$.
		\nowDer

		$$
		\adjustbox{max width=\textwidth}{$
		\begin{array}{c}
			\vlderivation{
			\vlin{\naloadr}{}{
				\sdash \lEx xA, \lNa yB,\Gamma
			}{
				\vlin{\exists}{}{
					\sdash[\sS,\isna y] \lEx xA, B ,\Gamma
				}{
					\vlhy{\sdash[\sS,\isna y] A \fsubst{c}{x}, B ,\Gamma}
				}
			}
		}
		\quad\peq\quad
		\vlderivation{
			\vlin{\exists}{}{
				\sdash \lEx xA, \lNa yB,\Gamma
			}{
				\vlin{\naloadr}{}{
					\sdash A\fsubst{c}{x}, \lNa yB,\Gamma
				}{
					\vlhy{\sdash[\sS,\isna y] A\fsubst{c}{x}, B ,\Gamma}
				}
			}
		}
		\end{array}
		$}
		$$

		\

		\item
		Case $\exists / \naur$ with $\nabla\in\set{\lnewsymb,\lyasymb}$:
		similar to the previous case, but considering the rule $\naur$ instead of $\naloadr$.

		\item
		Case $\exists / \napopr$ with  $\nabla\in\set{\lnewsymb,\lyasymb}$:

		$$
		\adjustbox{max width=\textwidth}{$\begin{array}{ccc}
			\vpz1{\exists}z. \vA1, \vpz3{\nabla}\vx1.\vB1,\vpz4{\cneg\nabla}\vy1.\vC1, \vpz5{\Gamma}
			\pzlinks{A1/B1/12/\la/red/{C1,pz5}}
			\pzlinks{x1/y1/15/\lc/brown/}
			&\to&
			\vpz1{\exists}z. \vA1, \vpz3{\nabla}\vx1.\vB1,\vpz4{\cneg\nabla}\vy1.\vC1, \vpz5{\Gamma}
			\pzlinks{pz1/B1/12/\labb/pzgreen/{C1,pz5}}
			\pzlinks{x1/y1/15/\lc/brown/}
			\\
			\downarrow && \downarrow
			\\\\
			\vpz1{\exists}z. \vA1, \vpz3{\nabla}x.\vB1,\vpz4{\cneg\nabla}y.\vC1, \vpz5{\Gamma}
			\pzlinks{B1/pz4/12/\laaa/violet/{A1,pz5}}
			&\to&
			\vpz1{\exists}z. \vA1, \vpz3{\nabla}x.\vB1,\vpz4{\cneg\nabla}y.\vC1, \vpz5{\Gamma}
			\pzlinks{pz1/B1/12/\laaabb/purple/{pz4,pz5}}
		\end{array}$}
		$$

		With $\dualizerof[\labb] = \dualizerof[\la] \fsubminus{z}$,
		$\dualizerof[\laaa] = \dualizerof[\la] \fsubminus{y}$ and  $\dualizerof[\laaabb] = (\dualizerof[\la]\fsubminus{z} )\fsubminus{y}$
		\nowDer

		\
		
		$$
		\adjustbox{max width=\textwidth}{$
		\begin{array}{c}
			\vlderivation{
			\vlin{\napopr}{}{
				\sdash[\sS,\isna x] \lEx zA, B, \lnNa yC,\Gamma
			}{
				\vlin{\exists}{}{
					\sdash \lEx zA,B , C\fsubst{x}{y}, \Gamma
				}{
					\vlhy{\sdash A\fsubst{c}{z}, B ,C\fsubst{x}{y}, \Gamma}
				}
			}
		}
		\qquad
		\peq
		\qquad
		\vlderivation{
			\vlin{\exists}{}{
				\sdash[\sS,\isna x] \lEx zA, B, \lnNa yC,\Gamma
			}{
				\vlin{\napopr}{}{
					\sdash[\sS, \isna x] A \fsubst{c}{z}, B, \lnNa yC,\Gamma
				}{
					\vlhy{\sdash A \fsubst{c}{z}, B ,C \fsubst{x}{y}, \Gamma}
				}
			}
		}
		\end{array}
		$}
		$$

		\

		\item
		Case $\rrule_1/\rrule_2$ with $\rrule_1,\rrule_2\in\set{\forall, \naloadr,\naur}$ 
		with $\nabla\in\set{\lnewsymb,\lyasymb}$:

		$$
		\adjustbox{max width=\textwidth}{$\begin{array}{ccc}
			\vpz1{\lqusymb_1}x. \vA1, \vpz3{\lqusymb_2}y.\vB1, \vpz4{\Gamma}
			\pzlinks{A1/B1/12/\la/red/{pz4}}
			&\to&
			\vpz1{\lqusymb_1}x. \vA1, \vpz3{\lqusymb_2}y.\vB1, \vpz4{\Gamma}
			\pzlinks{pz1/B1/12/\link[black]{a_1}/pzgreen/{pz4}}
			\\
			\downarrow && \downarrow
			\\\\
			\vpz1{\lqusymb_1}x. \vA1,\vpz3{\lqusymb_2}y.\vB1, \vpz4{\Gamma}
			\pzlinks{pz4/pz3/12/\link[black]{a_2}/violet/{A1}}
			&\to&
			\vpz1{\lqusymb_1}x. \vA1, \vpz3{\lqusymb_2}y.\vB1, \vpz4{\Gamma}
			\pzlinks{pz1/pz3/12/\link[black]{a_3}/brown/{pz4}}
		\end{array}$}
		$$

		With $\dualizerof[\la] = \dualizerof[a_i]$ for $i\in\set{1,2,3}$.
		\nowDer

		$$
		\adjustbox{max width=\textwidth}{$
		\begin{array}{c}
			\begin{array}{ccl}
			\vlderivation{
			\vlin{\rrule_1}{}{
				\sdash \lQu[1] xA, \lQu[2] yB, \Gamma
			}{
				\vlin{\rrule_2}{}{
					\sdash A , \lQu[2] yB, \Gamma
				}{
					\vlhy{\sdash A, B, \Gamma}
				}
			}
		}
		&
		{\;\peq\;}
		&
		\vlderivation{
			\vlin{\rrule_2}{}{
				\sdash \lQu[1] xA, \lQu[2] yB, \Gamma
			}{
				\vlin{\rrule_1}{}{
					\sdash \lQu[1] xA, B, \Gamma
				}{
					\vlhy{\sdash A, B, \Gamma}
				}
			}
		}
			\end{array}
		\\\\
		\mbox{if }
		\rrule_1,\rrule_2\in\set{\forall,\nuur,\yaur}
		\\\\
		\begin{array}{ccl}
		\vlderivation{
			\vlin{\rrule_1}{}{
				\sdash \lNa xA, \lQu yB, \Gamma
			}{
				\vlin{\rrule_2}{}{
					\sdash[\sS,\isna x] A , \lQu yB, \Gamma
				}{
					\vlhy{\sdash[\sS,\isna x] A,B, \Gamma}
				}
			}
		}
		&{\;\peq\;}&
		\vlderivation{
			\vlin{\rrule_2}{}{
				\sdash \lNa xA, \lQu yB, \Gamma
			}{
				\vlin{\rrule_1}{}{
					\sdash[\sS,\isna x] \lNa xA, B, \Gamma
				}{
					\vlhy{\sdash[\sS,\isna x] A, B, \Gamma}
				}
			}
		}
		\end{array}
		\\\\
		\mbox{if }
		\rrule_1\in\set{\nuloadr,\yaloadr}
		\mbox{ and }
		\rrule_2\in\set{\forall,\nuur,\yaur}
		\\\\
		\begin{array}{ccl}
		\vlderivation{
			\vlin{\rrule_1}{}{
				\sdash \lNai1 xA, \lNai2 yB, \Gamma
			}{
				\vlin{\rrule_2}{}{
					\sdash[\sS, {\isna[1]x}] A , \lNai2 yB, \Gamma
				}{
					\vlhy{\sdash[\sS,{\isna[1]x},{\isna[2]y}] A, B, \Gamma}
				}
			}
		}
		&\;\peq\;&
		\vlderivation{
			\vlin{\rrule_2}{}{
				\sdash \lNai1 xA, \lNai2 yB, \Gamma
			}{
				\vlin{\rrule_1}{}{
					\sdash[\sS,{\isna[2]y}] \lNai1 xA, B, \Gamma
				}{
					\vlhy{\sdash[\sS,{\isna[1]x},{\isna[2]y}] A, B, \Gamma}
				}
			}
		}
		\end{array}
		\\\\
		\mbox{if }
		\rrule_1,\rrule_2\in\set{\nuloadr,\yaloadr}
		\end{array}
		$}
		$$

		\

		\item
		Case $\napopr/\rrule$ with $\rrule\in\set{\forall,\naloadr,\naur}$ and  $\nabla\in\set{\lnewsymb,\lyasymb}$:

		\

		$$
		\adjustbox{max width=\textwidth}{$\begin{array}{ccc}
			\vpz1{\lqusymb}z. \vA1, \vpz3{\nabla}\vx1.\vB1,\vpz4{\cneg\nabla}\vy1.\vC1, \vpz5{\Gamma}
			\pzlinks{A1/B1/16/\la/red/{C1,pz5}}
			\pzlinks{x1/y1/12/\lc/brown/}
			&\to&
			\vpz1{\lqusymb}z. \vA1, \vpz3{\nabla}\vx1.\vB1,\vpz4{\cneg\nabla}\vy1.\vC1, \vpz5{\Gamma}
			\pzlinks{pz1/B1/16/\labb/pzgreen/{C1,pz5}}
			\pzlinks{x1/y1/12/\lc/brown/}
			\\
			\downarrow && \downarrow
			\\\\
			\vpz1{\lqusymb}z. \vA1,\quad \vpz3{\nabla}x.\vB1,\vpz4{\cneg\nabla}y.\vC1, \vpz5{\Gamma}
			\pzlinks{A1/B1/16/\laa/violet/{pz4,pz5}}
			&\to&
			\vpz1{\lqusymb}z. \vA1,\quad \vpz3{\nabla}x.\vB1,\vpz4{\cneg\nabla}y.\vC1, \vpz5{\Gamma}
			\pzlinks{pz1/B1/16/\laabb/purple/{pz4,pz5}}
		\end{array}$}
		$$

		With $\dualizerof[\labb] = \dualizerof[\la]$,
		$\dualizerof[\laaa] = \dualizerof[\la] \fsubminus{y}$ and  $\dualizerof[\laaabb] = \dualizerof[\la]\fsubminus{y}$
		\nowDer

		$$
		\adjustbox{max width=\textwidth}{$
		\begin{array}{c}
			\begin{array}{ccl}
			\vlderivation{\vlin{\napopr}{}{
				\sdash[\sS,\isna x] \lQu zA, B,\lnNa yC, \Gamma
			}{
				\vlin{\rrule}{}{
					\sdash \lQu zA, B, C \fsubst{x}{y}, \Gamma
				}{
					\vlhy{\sdash A,B, C\fsubst{x}{y},\Gamma}
				}
			}}
		&
		{\;\peq\;}
		&
		\vlderivation{
			\vlin{\rrule}{}{
				\sdash[\sS,\isna x] \lQu zA, B,\lnNa yC, \Gamma
			}{
				\vlin{\napopr}{}{
					\sdash[\sS,\isna x] A, B,\lnNa yC, \Gamma
				}{
					\vlhy{\sdash A ,B , C\fsubst{x}{y},\Gamma}
				}
			}
		}
			\end{array}
		\\\\
		\mbox{if }
		\rrule\in\set{\forall,\nuur,\yaur} \mbox{:}
		\\\\
		\begin{array}{ccl}
			\vlderivation{
			\vlin{\nnapopr}{}{
				\sdash[\sS,\isnna x] \lQu zA, B,\lNa yC, \Gamma
			}{
				\vlin{\rrule}{}{
					\sdash \lQu zA, B, C \fsubst xy, \Gamma
				}{
					\vlhy{\sdash[\sS,\isqu z] A,B, C\fsubst xy,\Gamma}
				}
			}
		}
		&
		{\;\peq\;}
		&
		\vlderivation{
			\vlin{\rrule}{}{
				\sdash[\sS, \isnna x] \lQu zA, B,\lNa yC, \Gamma
			}{
				\vlin{\nnapopr}{}{
					\sdash[\sS,\isqu z, \isnna x] A, B,\lNa yC, \Gamma
				}{
					\vlhy{\sdash[\sS,\isqu z ] A ,B , C\fsubst xy,\Gamma}
				}
			}
		}
		\end{array}
		\\\\
		\mbox{if }
		\rrule\in\set{\nuloadr,\yaloadr} \mbox{:}
		\end{array}
		$}
		$$

		\

		\item Case $\lwith/\exists$:
		$$
		\adjustbox{max width=\textwidth}{$\begin{array}{ccccc}
			&&
			\vA1 \vlwith1 \vB1,\quad \vpz1{\exists}x.\vC1 , \vpz2{\Gamma}
			\pzlinks{A1/pz1/12/\laaa/magenta/{pz2}}
			\pzlinks{B1/C1/-12/\lb/blue/{pz2}}
			&\to&
			\vA1 \vlwith1 \vB1, \quad \vpz1{\exists}x.\vC1 , \quad \vpz2{\Gamma}
			\pzlinks{A1/pz1/12/\laaa/magenta/{pz2}}
			\pzlinks{pz1/pz2/-12/\lbbb/violet/{B1}}
			\\
			&\nearrow &&&
			\\
			\vA1 \vlwith1 \vB1, \vpz1{\exists}x.\vC1 , \vpz2{\Gamma}
			\pzlinks{A1/C1/12/\la/red/{pz2}}
			\pzlinks{B1/C1/-12/\lb/blue/{pz2}}
			&&&&\downarrow
			\\
			&\searrow &&&
			\\
			&&
			\vA1 \vlwith1 \vB1, \vpz1{\exists}x.\vC1 , \vpz2{\Gamma}
			\pzlinks{lwith1/C1/12/\lcd/pzgreen/{pz2}}
			&\to&
			\vA1 \vlwith1 \vB1,\quad \; \vpz1{\exists}x.\vC1 , \vpz2{\Gamma}
			\pzlinks{lwith1/pz1/12/\lcccd/brown/{pz2}}
		\end{array}$}
		$$

		With $\dualizerof[\laa] = \dualizerof[\la] \fsubminus{y}$,
		$\dualizerof[\lbb] = \dualizerof[\lb] \fsubminus{y}$,
		 $\dualizerof[\lc] = \dualizerof[\la] \join \dualizerof[\lb]$ and
		 $\dualizerof[\lccd] = \dualizerof[\lc]\fsubminus{y}$.
		\nowDer
		$$
		\hskip-1em
		\begin{array}{c}
			\vlderivation{
				\vlin{\exists}{}{
				\sdash A\lwith B, \lEx xC, \Gamma
			}{
				\vliin{\lwith}{}{
					\sdash A\lwith B, C\fsubst{c}{x}, \Gamma
				}{
					\vlhy{\sdash A,C\fsubst{c}{x},\Gamma}
				}{
					\vlhy{\sdash B,C\fsubst{c}{x},\Gamma}
				}
			}
		}
		\;
		\peq
		\;
		\vlderivation{
			\vliin{\lwith}{}{
				\sdash A\lwith B, \lEx xC, \Gamma
			}{
				\vlin{\exists}{}{
					\sdash A, \lEx xC, \Gamma
				}{
					\vlhy{\sdash A,C\fsubst{c}{x},\Gamma}
				}
			}{
				\vlin{\exists}{}{
					\sdash  B, \lEx xC, \Gamma
				}{
					\vlhy{\sdash B,C\fsubst{c}{x},\Gamma}
				}
			}
		}
		\end{array}
		$$

		\

		\item
		Case $\lwith/\rrule$ with $\rrule\in\set{\forall,\naloadr,\naur}$ with $\nabla\in\set{\lnewsymb,\lyasymb}$:

		$$
		\adjustbox{max width=\textwidth}{$\begin{array}{ccccc}
			&&
			\vA1 \vlwith1 \vB1, \vpz1{\lqusymb}x.\vC1 , \vpz2{\Gamma}
			\pzlinks{A1/pz1/12/\laa/magenta/{pz2}}
			\pzlinks{B1/C1/-12/\lb/blue/{pz2}}
			&\to&
			\vA1 \vlwith1 \vB1, \quad \vpz1{\lqusymb}x.\vC1 , \vpz2{\Gamma}
			\pzlinks{A1/pz1/12/\laa/magenta/{pz2}}
			\pzlinks{B1/pz1/-12/\lbb/violet/{pz2}}
			\\
			&\nearrow &&&
			\\
			\vA1 \vlwith1 \vB1, \vpz1{\lqusymb}x.\vC1 , \vpz2{\Gamma}
			\pzlinks{A1/C1/12/\la/red/{pz2}}
			\pzlinks{B1/C1/-12/\lb/blue/{pz2}}
			&&&&\downarrow
			\\
			&\searrow &&&
			\\
			&&
			\vA1 \vlwith1 \vB1, \vpz1{\lqusymb}x.\vC1 , \vpz2{\Gamma}
			\pzlinks{lwith1/C1/12/\lcd/pzgreen/{pz2}}
			&\to&
			\vA1 \vlwith1 \vB1,\quad \vpz1{\lqusymb}x.\vC1 , \vpz2{\Gamma}
			\pzlinks{lwith1/pz1/12/\lccd/brown/{pz2}}
		\end{array}$}
		$$

		With $\dualizerof[\laa] = \dualizerof[\la] $,
		$\dualizerof[\lbb] = \dualizerof[\lb] $ and
		 $\dualizerof[\lc] = \dualizerof[\la] \join \dualizerof[\lb] =\dualizerof[\lccd]$ .
		\nowDer

		$$\adjustbox{max width=\textwidth}{$
		\begin{array}{c}
			\begin{array}{ccl}
			\vlderivation{
			\vlin{\rrule}{}{
				\sdash A\lwith B, \lQu xC, \Gamma
			}{
				\vliin{\lwith}{}{
					\sdash A\lwith B, C, \Gamma
				}{
					\vlhy{\sdash A, C,\Gamma}
				}{
					\vlhy{\sdash B, C,\Gamma}
				}
			}
		}
		& 
		{\;\peq\;}
		&
		\vlderivation{
			\vliin{\lwith}{}{
				\sdash A\lwith B, \lQu xC, \Gamma
			}{
				\vlin{\rrule}{}{
					\sdash A, \lQu xC, \Gamma
				}{
					\vlhy{\sdash A,C,\Gamma}
				}
			}{
				\vlin{\rrule}{}{
					\sdash  B, \lQu xC, \Gamma
				}{
					\vlhy{\sdash B,C ,\Gamma}
				}
			}
		}
			\end{array}
		\\\\
		\mbox{if }
		\lqusymb \in\set{\forall,\nuur,\yaur}
		\\\\
		\begin{array}{ccl}
		\vlderivation{
			\vlin{\rrule}{}{
				\sdash A\lwith B, \lQu xC, \Gamma
			}{
				\vliin{\lwith}{}{
					\sdash[\sS,\isqu x] A\lwith B, C, \Gamma
				}{
					\vlhy{\sdash[\sS,\isqu x] A, C,\Gamma}
				}{
					\vlhy{\sdash[\sS,\isqu x] B, C,\Gamma}
				}
			}
		}
		&
		{\;\peq\;}
		&
		\vlderivation{
			\vliin{\lwith}{}{
				\sdash A\lwith B, \lQu xC, \Gamma
			}{
				\vlin{\rrule}{}{
					\sdash A, \lQu xC, \Gamma
				}{
					\vlhy{\sdash[\sS,\isqu x] A,C ,\Gamma}
				}
			}{
				\vlin{\rrule}{}{
					\sdash  B, \lQu xC, \Gamma
				}{
					\vlhy{\sdash[\sS,\isqu x] B, C,\Gamma}
				}
			}
		}
		\end{array}
		\\\\
		\mbox{if }
		\rrule \in\set{\nuloadr,\yaloadr}
		\end{array}	
		$}$$

		\

		\item
		Case $\rrule / \conf$ with $\rrule \in \set{\exists,\forall, \nuloadr,\yaloadr, \nuur,\yaur}$:

		$$
		\adjustbox{max width=\textwidth}{$\begin{array}{ccc}
			\conf (a_1 , \dots, a_l , b_1, \dots, b_k , c) & \to & \conf (\conf(a_1 , \dots, a_l ),\conf( b_1, \dots, b_k , c))
			\\
			\downarrow && \downarrow
			\\
			\conf (a_1 , \dots, a_l , b_1, \dots, b_k , c') & \to & \conf (\conf(a_1 , \dots, a_l ),\conf( b_1, \dots, b_k , c'))
		\end{array}$}
		$$

		Where either
		\begin{itemize}
			\item $\rrule\in\set{\forall,\nuur,\yaur,\nuloadr,\yaloadr}$, thus
			$c = \Gamma , A$, $c' = \Gamma, \lQu xA$, and $\dualizerof[c'] = \dualizerof[c]$; or
			\item $\rrule\in\set{\exists,\nupopr,\yapopr}$, thus
			$c = \Gamma , A$ ,  $c' = \Gamma, \lEx x A$ and $\dualizerof[c'] = \dualizerof[c] \fsubminus{x}$.
		\end{itemize}
		The derivation labelling the link in the bottom-right corner of the diagram is the same, independently of the sequence of coalescence steps.

	\end{itemize}

	Finally, we have the following special additional cases, in which the dualizer is not the same, but the proof nets are.

	\begin{itemize}
		\item
		Case
		$\nupopr/\yapopr$:

		\

		$$
		\adjustbox{max width=\textwidth}{$\begin{array}{ccc}
			\vpz1{\lnewsymb}\vx1. \vA1, \vpz3{\lyasymb}\vy2.\vB1, \vpz4{\Gamma}
			\pzlinks{A1/B1/12/\la/red/{pz4}}
			\pzlinks{x1/y2/16/\lb/blue/}
			&\to&
			\vpz1{\lnewsymb}\vx1. \vA1, \vpz3{\lyasymb}\vy2.\vB1, \vpz4{\Gamma}
			\pzlinks{pz1/B1/12/c/pzgreen/{pz4}}
			\\
			\downarrow && \downarrow
			\\\\
			\vpz1{\lnewsymb}\vx1. \vA1, \vpz3{\lyasymb}\vy2.\vB1, \vpz4{\Gamma}
			\pzlinks{A1/pz3/12/d/violet/{pz4}}
			&\to&
			\vpz1{\lnewsymb}\vx1. \vA1, \vpz3{\lyasymb}\vy2.\vB1, \vpz4{\Gamma}
			\pzlinks{pz1/pz3/12/e/brown/{pz4}}
		\end{array}$}
		$$
		where
		$\dualizerof[c]=\dualizerof[d]=\dualizerof[e]=\dualizerof[a]\setminus\set{y}$.
		\nowDer

		$$
		\begin{array}{lcr}
			\dD_1
		\quad=\quad
			\vlderivation{
					\vlin{\nuloadr}{}{
						\sdash \Gamma, \lNu xA, \lYa yB
					}{
						\vlin{\nupopr}{}{
							\sdash[\sS,\isnu x] \Gamma, A, \lYa yB
						}{\vlhy{\sdash \Gamma, A, B\fsubst xy}}
					}
				}
			&
		\quad\pweq\quad
			&
				\vlderivation{
					\vlin{\yaloadr}{}{
						\sdash \Gamma, \lNu xA, \lYa yB
					}{
						\vlin{\yapopr}{}{
							\sdash[\sS,\isya y] \Gamma, \lNu xA, B
						}{\vlhy{\sdash \Gamma, A\fsubst yx, B}}
					}
				}
		\quad=\quad
			\dD_2
		\end{array}
		$$

		Where $\dD_1\pweq \dD_2$ because

		$$
		\begin{array}{lcr}
			\dD_1
		\quad=\quad
			\vlderivation{
					\vlin{\nuloadr}{}{
						\sdash \Gamma, \lNu xA, \lYa yB
					}{
						\vlin{\nupopr}{}{
							\sdash[\sS,\isnu z] \Gamma, A\fsubst zx, \lYa yB
						}{\vlhy{\sdash \Gamma, A\fsubst zx, B\fsubst zy}}
					}
				}
			&
			\quad\pweq\quad
			&
			\vlderivation{
				\vlin{\yaloadr}{}{
					\sdash \Gamma, \lNu xA, \lYa xB
				}{
					\vlin{\yapopr}{}{
						\sdash[\sS,\isya z] \Gamma, \lNu xA, B\fsubst zx
					}{\vlhy{\sdash \Gamma, A\fsubst zx , B\fsubst zx}}
				}
			}
		\quad=\quad
			\dD_2
		\end{array}
		$$

	\end{itemize}
\end{proof}
%%%%%%%%%%%%%%%%%%%%%%%%%%%%%%%%%%%%%%%%%%%%%%%%%%%%%%%%%%%%%%%%

%%%%%%%%%%%%%%%%%%%%%%%%%%%%%%%%%%%%%%%%%%%%%%%%%%%%%%%%%%%%%%%%
\begin{remark}\label{remark:notcritical}
	As already observed in \cite{hei:hug:conflict}, coalescence is not confluent on non-coalescent \cotree, as shown in the following examples.
	$$\adjustbox{max width=\textwidth}{$\begin{array}{c}
		\begin{array}{ccc}
			&&
			\viA1 \vpz1{\lprec} \viB1,\viA2 \vpz2{\lprec} \viB2, \viA3 \vpz3{\lprec} \viC1, \viA4 \vpz4{\lprec} \viC2, \vpz5{\Gamma},\vpz6{\Delta}
			\pzlinks{pz1/pz2/12/\labb/pzgreen/{A3,A4,pz5,pz6}}
			\pzlinks{C1/C2/-12/\lc/violet/{pz6}}
			\\
			&\nearrow &
			\\
			\viA1 \vpz1{\lprec} \viB1,\viA2 \vpz2{\lprec} \viB2, \viA3 \vpz3{\lprec} \viC1, \viA4 \vpz4{\lprec} \viC2, \vpz5{\Gamma},\vpz6{\Delta}
			\pzlinks{A1/A2/12/\la/red/{A3,pz5}}
			\pzlinks{B1/pz6/-12/\lb/blue/{B2}}
			\pzlinks{C1/C2/18/\lc/violet/{pz6}}
			&&
			\\
			&\searrow &
			\\
			&&
			\viA1 \vpz1{\lprec} \viB1,\viA2 \vpz2{\lprec} \viB2, \viA3 \vpz3{\lprec} \viC1, \viA4 \vpz4{\lprec} \viC2, \vpz5{\Gamma},\vpz6{\Delta}
			\pzlinks{A1/A2/12/\lac/violet/{pz3,pz4,pz5,pz6}}
			\pzlinks{B1/B2/-12/\lb/blue/{pz6}}
		\end{array}
	\\\\\\
		\begin{array}{ccc}
			&&
			\viA1 \vpz1{\lprec} \viB1,\viA2 \vpz2{\lprec} \viB2, \viA3 \vpz3{\ltens} \vC1, \vpz4{\Gamma}, \vpz5{\Delta}
			\pzlinks{pz1/pz2/12/\labb/pzgreen/{A3,pz4,pz5}}
			\pzlinks{C1/pz5/-12/\lc/violet/}
			\\
			&\nearrow &
			\\
			\viA1 \vpz1{\lprec} \viB1,\viA2 \vpz2{\lprec} \viB2, \viA3 \vpz3{\ltens} \vC1, \vpz4{\Gamma}, \vpz5{\Delta}
			\pzlinks{A1/A2/12/\la/red/{A3,pz4}}
			\pzlinks{B1/B2/-12/\lb/blue/{pz5}}
			\pzlinks{C1/pz5/18/\lc/violet/}
			&&
			\\
			&\searrow &
			\\
			&&
			\viA1 \vpz1{\lprec} \viB1,\viA2 \vpz2{\lprec} \viB2, \viA3 \vpz3{\ltens} \vC1, \vpz4{\Gamma}, \vpz5{\Delta}
			\pzlinks{A1/A2/12/\lac/violet/{pz3,pz4,pz5}}
			\pzlinks{B1/B2/-12/\lb/blue/{pz5}}
		\end{array}
	\end{array}$}$$
\end{remark}
%%%%%%%%%%%%%%%%%%%%%%%%%%%%%%%%%%%%%%%%%%%%%%%%%%%%%%%%%%%%%%%%

%%%%%%%%%%%%%%%%%%%%%%%%%%%%%%%%%%%%%%%%%%%%%%%%%%%%%%%%%%%%%%%%%%%
%%%%%%%%%%%%%%%%%%%%%%%%%%%%%%%%%%%%%%%%%%%%%%%%%%%%%%%%%%%%%%%%%%%
\section{To Witness or Not To Witness?}\label{sec:witnesses}
%%%%%%%%%%%%%%%%%%%%%%%%%%%%%%%%%%%%%%%%%%%%%%%%%%%%%%%%%%%%%%%%%%%
%%%%%%%%%%%%%%%%%%%%%%%%%%%%%%%%%%%%%%%%%%%%%%%%%%%%%%%%%%%%%%%%%%%

The choice of witnesses of the nominal quantifier rules $\napopr$ and $\naloadr$ should, in principle, be irrelevant for the proof, as the role of the nominal quantifiers is only to enforce freshness conditions.
So one may expect to be willing to identify derivations differing for the choice of the witness of nominal quantifier rules, as in the case of the $\alpha$-equivalence of bound variables.
That is, we may want to identify the two following derivations, which differ only for the choice of the witness of the nominal quantifiers.
\begin{equation}\label{eq:nomeq}
	\vlderivation{
		\vliq{\nuloadr + \nupopr}{}{
			\vdash \lNu x P(x,z), \lYa y \cneg{P(y,z)}
		}{
			\vlin{\axrule}{}{\vdash P(x,z), \cneg{P(x,z)}}{\vlhy{}}
		}
	}
	\quad{\sim}\quad
	\vlderivation{
		\vliq{\yaloadr + \yapopr}{}{
			\vdash \lNu x P(x,z), \lYa y \cneg{P(y,z)}
		}{
			\vlin{\axrule}{}{\vdash P(y,z), \cneg{P(y,z)}}{\vlhy{}}
		}
	}
\end{equation}

Due to the similarity of the derivations above with the ones in \Cref{eq:intro2}, and in view of the fact that the syntax of $\PIL$ does not include function symbols, therefore witnesses are always variable, we can argue that also the choice of the witness of existential quantifier is negligible%
\footnote{
	More precisely, given an axiomatic linking for a sequent, it is possible to compute (if it exists) a witness map such in polynomial time using the unification algorithm for first-order terms.
}%
, 
as it does not change the topology of the proof net associated to the derivation. 
Formally, we define the two following additional equivalence relations over derivations.

\begin{definition}
	Let $\dD_1$ and $\dD_2$ be two derivations in $\PIL$.
	We say that $\dD_1$ and $\dD_2$ are \defn{equivalent modulo fresh name} if it is possible to transform $\dD_1$ into $\dD_2$ by changing the active term of the nominal quantifier rules and propagating the changes upwards in the derivation.
	We say that $\dD_1$ and $\dD_2$ are \defn{equivalent modulo witness assignment} if it is possible to transform $\dD_1$ into $\dD_2$ by changing the active term of the existential and nominal quantifier rules and propagating the changes upwards in the derivation.
\end{definition}

We can prove that we can extend our local and strong canonicity results with the equivalence relations defined above by modifying the definition of our proof nets, i.e., both \conets and slice nets, by considering new definitions of \emph{nominal} and \emph{unification} proof nets, the latter in the spirit of Hughes' unification nets \cite{hug:unification} for multiplicative linear logic, the unification nets for purely additive linear logic from \cite{hei:hug:str:ALL1}, and combinatorial proofs for first-order classical logic \cite{hug:str:wu:CP1}.

\newcommand{\mgdual}[1][]{\mu_{#1}^\linking}
\begin{definition}
	The \defn{initial witness map} $\mgdual$ of a linking $\linking$ is defined by letting $\mgdual[\la]$ be defined as follows:
	\begin{itemize}
			\item if $\la= \set{\lsend xy,\lrecv zt}$, then $\mgdual[\la]=\fsubsts{x/z,y/t}$;
			\item if $\la=\set{x,y}$ is a nominal link with $x$ bound by $\lnewsymb$, then $\mgdual[\la]=\fsubsts xy$;
			\item if $\la=\set{\lunit}$, then $\mgdual[\la]=\dualizerof[\emptyset]$.
	\end{itemize}

	We write $\dualizerof^\linking\geq \mgdual$ if for every link $\la$ in $\linking$, we have $\dualizerof[\la]^\linking\geq\mgdual[\la]$, and we may refer to proof nets as either conflict or slice nets.

	A \defn{nominal proof net} is a pair $\tuple{\linktree,\dualizerof^\linking}$ 
	with $\mgdual[\la]\geq \dualizerof[\la]^\linking$ for every link $\la$ in $\linking$ and such that the domain of each dualizer $\dualizerof[\la]^\linking$ is a variable bound by an existential quantifier in the sequent $\Gamma$,
	such that
	$\tuple{\linktree,\mgdual}$ is a proof net.
	A \defn{unification proof net} is a \cotree $\linktree$ such that 
	$\tuple{\linktree,\mgdual}$ is a proof net.

	Note however that for unification nets we weaken the coherence condition for coalescence, asking that the dualizers only are unifiable, and for nominal nets we restrict the condition of coherence for variables bound by the existential quantifiers only.
\end{definition}

We conjecture that for equivalence modulo fresh name assignment, we should consider the following additional equivalence:
$$
	\vlderivation{
		\vlin{\nuloadr}{}{
			\sdash \Gamma, \rclr{\lNu xA}, \lYa xB
		}{
			\vlin{\nupopr}{}{
				\sdash[\sS,\isnu x] \Gamma, \rclr A, \bclr{\lYa xB}
			}{\vlhy{\sdash \Gamma, A, \bclr B}}
		}
	}
\qquad\peq\qquad
	\vlderivation{
		\vlin{\yaloadr}{}{
			\sdash \Gamma, \lNu xA, \bclr{\lYa xB}
		}{
			\vlin{\yapopr}{}{
				\sdash[\sS,\isya x] \Gamma, \rclr{\lNu xA}, \bclr B
			}{\vlhy{\sdash \Gamma, \rclr A, B}}
		}
	}
$$
as well as the transformation on a derivation obtained by changing the witness of a nominal quantifier rule and propagating the change upwards in the derivation, as in the following example.
Similarly, for equivalence modulo witness assignment, we should consider the transformation on a derivation obtained by changing the witness of an existential quantifier rule and propagating the change upwards in the derivation.

%%%%%%%%%%%%%%%%%%%%%%%%%%%%%%%%%%%%%%%%%%%%%%%%%%%%%%%%%%%%%%%%%%%
%%%%%%%%%%%%%%%%%%%%%%%%%%%%%%%%%%%%%%%%%%%%%%%%%%%%%%%%%%%%%%%%%%%
%%%%%%%%%%%%%%%%%%%%%%%%%%%%%%%%%%%%%%%%%%%%%%%%%%%%%%%%%%%%%%%%%%%
%%%%%%%%%%%%%%%%%%%%%%%%%%%%%%%%%%%%%%%%%%%%%%%%%%%%%%%%%%%%%%%%%%%
\end{document}